\newenvironment{jfnote}{ \bgroup \color{blue} }{\egroup}
\newcommand{\red}{\color[rgb]{1.0,0.2,0.2}} % red
\newcommand{\blue}{\color[rgb]{0.2,0.2,1.0}} % blue
\newcommand{\green}{\color[rgb]{0.0,0.5,0.0}} % green
\newcommand{\oldStuff}[1]{}
\newcommand{\Coord}{{\rm Coord}}
\newcommand{\og}{{\scriptscriptstyle \le}}
\newcommand{\Bg}{{\scalebox{1.0}{$\!\scriptscriptstyle /\!B$}}}
\newcommand{\cert}{\xi}
\newcommand{\myindd}[1]{\index[defs]{#1}}	% for defintions
\DeclareMathOperator{\SHom}{\mathscr{H}\text{\kern -3pt {\calligra\large om}}\,}
\DeclareMathOperator{\ViSu}{VisSub}
\newcommand{\naturals}{{\mathbb N}}
\newcommand{\Eor}{E^{\mathrm{or}}}
\newcommand{\mec}[1]{{\bf #1}}	% for vector 𝐤 and 𝐦  with 𝐤 ⋅𝐦  = k 
\newcommand{\bec}[1]{{\boldsymbol #1}}	% for greek letters
\DeclareMathOperator{\trace}{Trace}
\DeclareMathOperator{\Trace}{Trace}
\newcommand{\rhonew}{\rho^{\mathrm{new}}}
\newcommand{\specnew}{\Spec^{\mathrm{new}}}
\newcommand{\Edir}{E^{\mathrm{dir}}}
\theoremstyle{plain}
\newtheorem{theorem}{Theorem}[section]
\newtheorem{lemma}[theorem]{Lemma}
\newtheorem{proposition}[theorem]{Proposition}
\newtheorem{corollary}[theorem]{Corollary}
\theoremstyle{definition}
\newtheorem{definition}[theorem]{Definition}
\newtheorem{xca}{Exercise}[section]
\newtheorem{example}[theorem]{Example}
\newcommand{\isom}{\simeq} % later this should, perhaps, be changed to =
\newcommand{\ignore}[1]{}
\newcommand{\reals}{{\mathbb R}}
\newcommand{\integers}{{\mathbb Z}}
\newcommand{\complex}{{\mathbb C}}
\newcommand\EE{\mathbb{E}}
\newcommand\II{\mathbb{I}}
\DeclareMathAlphabet{\mathcal}{OMS}{cmsy}{m}{n}
\newcommand\cA{\mathcal{A}}
\newcommand\cC{\mathcal{C}}
\newcommand\cE{\mathcal{E}}
\newcommand\cH{\mathcal{H}}
\newcommand\cI{\mathcal{I}}
\newcommand\cJ{\mathcal{J}}
\newcommand\cM{\mathcal{M}}
\newcommand\cP{\mathcal{P}}
\newcommand\cR{\mathcal{R}}
\newcommand\cS{\mathcal{S}}
\newcommand\cT{\mathcal{T}}
\DeclareMathOperator{\Prob}{Prob}
\DeclareMathOperator{\VLG}{VLG}
\DeclareMathOperator{\Line}{Line}
\DeclareMathOperator{\SNBC}{SNBC}
\DeclareMathOperator{\snbc}{snbc}
\def\from{\colon}
\def\isom{\simeq}
\def\eqdef{\overset{\text{def}}{=}}
\DeclareMathOperator{\id}{id}
\DeclareMathOperator{\ord}{ord}
\DeclareMathOperator{\Spec}{Spec}
\def\implies{\Rightarrow}
\DeclareRobustCommand
\p@\hbox{.}\mkern2mu\raise7\p@\hbox{.}\mkern1mu}}
\newcommand\xhookrightarrow[2][]{\ext@arrow 0062{\hookrightarrowfill@}{#1}{#2}}
\def\hookrightarrowfill@{\arrowfill@\lhook\relbar\rightarrow}
\newcommand{\myDeleteNote}[1]{{}}
\begin{document}

\title[Relativized Alon Conjecture I] % running head
{On the Relativized Alon Second Eigenvalue
Conjecture I: Main Theorems, Examples, and Outline of Proof}

\author{Joel Friedman}
\address{Department of Computer Science, 
        University of British Columbia, Vancouver, BC\ \ V6T 1Z4, CANADA}
\curraddr{}
\email{{\tt jf@cs.ubc.ca}}
\thanks{Research supported in part by an NSERC grant.}

\author{David Kohler}
\address{Department of Mathematics, 
        University of British Columbia, Vancouver, BC\ \ V6T 1Z2, CANADA}
\curraddr{422 Richards St, Suite 170, Vancouver BC\ \  V6B 2Z4, CANADA}
\email{{David.kohler@a3.epfl.ch}}
% \email{{\tt david.emmanuel.kohler@gmail.com} or {David.kohler@a3.epfl.ch}}
\thanks{Research supported in part by an NSERC grant.}

% Day and time:
% \date{\today, at \currenttime  (get rid of time in final version)}
%
% Just day:
\date{\today}

\subjclass[2010]{Primary 68R10}

\keywords{}

\begin{abstract}

This is the first in a series of six articles devoted to showing that a typical
covering map of large degree to a fixed, regular graph has its new adjacency
eigenvalues within the bound conjectured by Alon for random regular graphs.
Many of the techniques we develop hold whether or not the base graph is
regular.

Our first main theorem in this series of articles is that if the base graph
is $d$-regular, then for any $\epsilon>0$, as the degree, $n$, of the
covering map tends to infinity, some new adjacency eigenvalue of the map is
larger in absolute value that $2(d-1)^{1/2}+\epsilon$ with probability at
most order $1/n$.  Our second main theorem is that if, in addition, the
base graph is Ramanujan, then this probability is bounded above and below
by $1/n$ to the power of a positive integer that we call the {\em tangle
power} of the model, i.e., of the probability spaces of random covering
maps of degree $n$.

The tangle power is fairly easy to bound from below, and at times to
compute exactly; it measures the probability that certain {\em tangles}
appear in the random covering graph, where a {\em tangle} is a local event
that forces the covering graph to have a new eigenvalue strictly larger
than $2(d-1)^{1/2}$.

If the base graph has no {\em half-loops}, then our simplest model of a
random covering map is the model where one uniformly and independently
chooses a permutation for each edge of the base graph; a half-loop is,
roughly speaking, an unorientable self-loop.  More generally, our theorems
are valid for any model of random covering maps that is {\em algebraic},
which is a set of conditions that our trace methods require.  Our main
theorems are relativizations of Alon's conjecture on the second eigenvalue
of random regular graphs of large degree.

In this first article of the series, we introduce all the terminology
needed in this series, motivate this terminology, precisely state all the
results in the remaining articles, and make some remarks about their
proofs.  As such, this article provides an overview of the entire series of
articles; furthermore, the rest of the articles in this series may be read
independently of one another.

\end{abstract}

\maketitle
\setcounter{tocdepth}{3}
\tableofcontents

%%%%%%%%%%%%%%%%%%%%%%%%%%%%%%%%%%%%%%%%%%%%%%%%%%%%%%%%%% 
% temp:
\newcommand{\sePrelimProofs}{17}

\section{Introduction}\label{se_intro}

The main goal of this series of six articles is to prove
a relativization of
Alon's Second Eigenvalue Conjecture,
formulated in \cite{friedman_relative},
for any {\em base graph}, $B$, that is regular;
a proof of this theorem appears in our 
preprint \cite{friedman_kohler}.
This series of six articles represents a
``factorization'' of the proof in \cite{friedman_kohler}
into many independent parts.
This includes some original work beyond \cite{friedman_kohler}, 
and serves 
to clarify underlying principles of the proof.
It also makes it 
easier to generalize the results here---for possible future
use to related questions.

This series of articles also represents some improvements over
\cite{friedman_alon}, which resolved Alon's original conjecture, in that
(1) one technical tool of \cite{friedman_alon}---the 
{\em selective traces}---is replaced with a much simpler tool
of {\em certified traces},
and (2) for certain values of $d$,
our results get improved
bounds on the probability estimates in the Alon's original conjecture
for $d$-regular graphs.
We also correct a minor error in \cite{friedman_alon} regarding
the model $\cH_{n,d}$ there, which we generalize here as the
{\em cyclic model}.

This particular article has two main goals:
first, to give an overall view of this series of articles,
and, second, to motivate and precisely state the 
terminology used and the main results
in the subsequent articles.
In this way, each of the subsequent can be read independently
of one another, assuming the terminology we define in this article;
each subsequent article summarizes the terminology it needs, and the reader
of subsequent articles may prefer to begin with its summary
and consult this article for motivation as needed.

In additional to our main goal, this article has a number of 
additional resources, such as:
(1) the optional Section~\ref{se_bsf}
that summarizes some of the definitions and methods of
\cite{broder,friedman_random_graphs}, to help the reader
better understand the definitions of 
Sections~\ref{se_ordered_B_strong_alg}--\ref{se_new_algebraic};
(2) aside from precisely stating the main theorems of subsequent articles,
we make additional remarks on them and/or their proofs;
(3) in one appendix we explain some of the ideas of
\cite{friedman_alon} in terms of {\em certified traces}, which is
an idea new to \cite{friedman_kohler} and this series of articles,
which replaces and significantly simplifies
the {\em selective traces} of \cite{friedman_alon};
(4) in another appendix we list all the terminology we use and make
comments on it (the reader can search this appendix for terms to see
where they are formally defined).

\subsection{Historical Context}

Recall that
Alon's Second Eigenvalue Conjecture
says that for
fixed integer $d\ge 3$, and a real $\epsilon>0$,
a random $d$-regular graph on $n$ vertices
has second adjacency eigenvalue at most $2(d-1)^{1/2}+\epsilon$ with
{\em high probability},
i.e., probability than tends to one as $n$ tends to infinity.
The interest in this conjecture is that the conclusion implies that
most graphs have, in a sense, almost optimal spectral properties,
which in turn implies a number of ``expansion'' or ``well connectedness''
properties of the graph.
The conjecture was established with weaker bounds---i.e., with
$2(d-1)^{1/2}$ replaced by a larger function of $d$---in
\cite{broder,friedman_kahn_szemeredi,friedman_random_graphs}, and finally
settled affirmatively in \cite{friedman_alon}.
All these papers bound not only the second eigenvalue with high
probability, but also give the same bound on the absolute value of the
all eigenvalues except the first, i.e., on the most negative eigenvalue.
The paper \cite{friedman_alon} obtained bounds on the probability
of having an eigenvalue, excepting the first, larger in absolute value than
$2(d-1)^{1/2}+\epsilon$,
% this ``high probability,''
where the upper and lower bounds match to within a factor of $n$, 
and for many values of
$d$ they match to within a constant
factor; in this series of articles
we determine bounds matching to within a constant
factor for all $d$, and, more generally, the analog for covering
maps to a fixed base graph that is regular and Ramanujan.
% At times it is useful to know that the ``high probability'' in the
% Alon conjecture is 
% a function $1-O(n^{-\tau})$ for some integer $\tau$
% (rather than merely a function that tends to
% one as $n$ tends to infinity);
% for example, 
% \cite{abu-fraiha_masters} required a such a bound with $\tau$ of order
% $d^{1/2}$ as a function of $d$.
% The methods of \cite{broder} can be used to prove such ``high
% probability'' bound (for the bound given there)
% with $\tau=1$, and the bounds of 
% \cite{friedman_kahn_szemeredi,friedman_random_graphs} give their bounds
% with $\tau$ of order $d^{1/2}$.
% In \cite{friedman_alon}, the optimal $\tau$ was determined---either exactly
% or between two consecutive integers---for a number
% of models of $d$-regular graphs.
% Here we determine the exact value of $\tau$ for all these models, as
% well as its value for analogous models of random covering maps to
% a regular graph, $B$, but only under the assumption that $B$ is
% {\em Ramanujan}.

One generalization of the above spectral bounds for random $d$-regular
graphs involves the
notion of a relative expander, discussed 
in \cite{friedman_relative_boolean}\footnote{
This article
was circulated in a limited fashion, but was rejected for
publication on the basis of having no
interesting applications (at the time).
};
roughly speaking, for any covering map $G\to B$, we consider its
{\em new adjacency eigenvalues}, i.e., the eigenvalues of the
adjacency matrix of $G$ not arising from eigenfunctions
pulled back from the {\em base graph} $B$.
In the special case 
where $B$ has only one vertex, say of degree $d$, then
$G$ is a random $d$-regular graph,
and the {\em new eigenvalues} are 
all eigenvalues except the first,
namely $d$.
The article \cite{friedman_relative_boolean} was searching for distinguished
covering maps to (i.e., whose target is)
the Boolean cube; this article identified 
a unique degree two
covering map to the cube that was an ``optimal relative expander.''
The motivation for this search was the connection between
covering maps in graph theory and extension fields discussed
in \cite{friedman_geometric_aspects}, with the idea
that some covering maps of the Boolean cube may shed some light
on the complexity theory of Boolean functions.
Since then the study of covering maps has yielded new ways to
build expanders, including the remarkable works
\cite{bilu,mssI} (\cite{bilu} building on \cite{frieze_molloy})
regarding degree two covering maps, which
proves the existence of families of Ramanujan graphs of any given degree.

The relativized Alon conjecture, regarding random covering maps,
was formulated in \cite{friedman_relative}, inspired both by
\cite{nagnibeda} and by the success of Grothendieck's notion
of {\em relativization}.
Weaker forms of this conjecture were proven in
\cite{friedman_relative,linial_puder,lubetzky,a-b,puder}.
The conjecture for regular base graphs was established in
\cite{friedman_kohler}.
As in \cite{friedman_alon}, the high probability bound in
\cite{friedman_kohler} is of form
$1-O(n^{-\tau})$ with $\tau\ge 1$;
furthermore the largest possible value of $\tau$ can be determined for 
our {\em basic models} of random covering maps of a base graph that
is $d$-regular and Ramanujan.  In such cases
$$
\tau \ge 
\Bigl\lfloor \bigl( (d-1)^{1/2} - 1 \bigr)/2  \Bigr\rfloor +1 
$$
(where $\lfloor x\rfloor$ is the ``floor'' function,
denoting the largest integer whose value is
at most $x$)
and this bound is achieved for certain $d$-regular $B$.
We remark that the results in this series of articles therefore 
improve upon \cite{friedman_alon}
where the optimal value of $\tau$ was not determined for certain
models of random $d$-regular graphs for certain values of $d$.

There are a number of notable results related to ours.
Puder's \cite{puder} results prove a relative Alon conjecture
with new eigenvalue bounds
within a multiplicative factor of $3$ for any base graph $B$,
regular or not; his results also give a bound for $d$-regular
graphs that is close to $2(d-1)^{1/2}$, and his proof
is conceptually simple, although
requires the results of
\cite{puder_p}.
Recently 
Bordenav\'e \cite{bordenave} has given a proof
of the original Alon conjecture \cite{friedman_alon} as well
as the relativized Alon conjecture for regular base graphs,
which uses trace methods that avoids
tangles of order greater than zero, but that require more involved
probabilistic estimates.
More recently, Bordenav\'e and Collins \cite{bordenave_collins2019}
have proved a very general result about random permutation
matrices, that in particular
proves the full relativized Alon conjecture, i.e.,
for arbitrary base graphs.

Bilu and Linial \cite{bilu} point out that if the
base graph is ``approximately'' a disjoint unions of many
small graphs, then most degree two covers will be very poor
relative expanders.  Our main theorem, by contrast, shows that for any
fixed regular base graph, covers of large degree are, with high probability,
nearly relatively Ramanujan.

Our approach to the relativized Alon conjecture
follows the Broder-Shamir trace method of \cite{broder},
with its refinements of \cite{friedman_random_graphs,friedman_alon},
which we adapt to the more general situation
of random, degree $n$ covering maps of a
fixed graph, $B$.
However, the proofs here (and in \cite{friedman_kohler})
 significantly simplify some of the arguments
of \cite{friedman_alon}; perhaps the greatest simplifications are
(1) we replace the {\em selective trace} of \cite{friedman_alon}
by the much simpler
{\em certified trace} of this series of articles, and
(2) we give an improved ``Sidestepping Lemma'' that
is much easier to apply.
As mentioned before, this series of articles factors the
proof in \cite{friedman_kohler},
and some of the parts are written in greater generality,
for possible future use.

This series of articles has two main results: the first is that
the relativized Alon conjecture holds for regular base graphs.
The second is that one can determine, to within a constant factor,
the probability that a graph
does not satisfy the Alon bound, provided that the base graph 
is {\em Ramanujan}.
Curiously, this is analogous to the work
\cite{lubetzky}, where the new eigenvalue bounds
depend on the spectrum of the $d$-regular base graph, $B$, and degrade
when $B$ has eigenvalues close to---but less than---$d$.

\subsection{Organization of This Series of Articles}

This is the first in a series of six articles devoted to 
factoring the main theorems in \cite{friedman_kohler} and their proofs
into independent
parts;
\cite{friedman_kohler} contain some additional results not covered
in these six articles, especially
regarding ``mod-$S$ functions'' (Section~3.5 there).

For brevity, we refer to the articles in this series as
Article~I through Article~VI.

The individual articles have the following content.
\begin{description}
\item[Article I (this article)]
statement of the main theorems in this series of articles, proven in 
Articles~V and~VI,
and of the results from Articles~II-IV needed in Article~V;
definitions all terminology needed for these statements;
some supplemental material, including additional remarks on
Articles~II--VI; an optional Section~\ref{se_bsf} that explain
aspects of \cite{broder,friedman_random_graphs} and its connection
to some of our terminology;
an optional appendix that reviews some of the 
methods in Article~III based on \cite{friedman_alon} but introduces
{\em certified traces};
a second appendix that gathers the terminology in all the definitions
of this article.
\end{description}
Each subsequent article is independent of the others, although they
all require some of the definitions and notation of Article~I
(each subsequent article reviews those definitions and notation
it needs).
\begin{description}
\item[Article II] 
main theorems regarding asymptotic expansions in random coverings;
this is an adaptation of \cite{friedman_random_graphs} to our more
general situation, but we factor this material into a few independent
parts and we simplify some of the proofs in
\cite{friedman_random_graphs}.
\item[Article III]
main theorems on {\em certified traces}, using the results of Article~II.  
These ideas rely heavily
on \cite{friedman_alon}, although
the certified traces are new
to \cite{friedman_kohler}, which replace the significantly
more cumbersome idea of {\em selective traces} of 
\cite{friedman_alon}.
\item[Article IV] the {\em Sidestepping Theorem} we use in this series
of articles; this is a lemma in probability theory needed to
infer eigenvalue location based in the results in Article~III;
it is an strengthening of the Sidestepping Lemma in
\cite{friedman_alon}.
\item[Article V] 
our {\em basic models} are {\em algebraic};
conclusion of the proof of the relativized Alon conjecture for
regular base graphs (our first main result);
definition of {\em algebraic power} and {\em tangle power} of an
algebraic model;
for a fixed {\em tangle}, any covering graph of sufficiently high
degree containing the tangle has a non-Alon new eigenvalue;
more precise form of the first main result in terms of
{\em algebraic power} and {\em tangle power};
\item[Article VI]
our standard models are {\em pseudo-magnifying};
proof that for regular, Ramanujan graphs, the 
{\em algebraic power}\, equals $+\infty$; 
corollary that our bounds on the Alon conjecture probabilities
are tight to within a constant factor
when the base graph is Ramanujan (our second main result); 
estimates on the {\em tangle power} of a model.
\end{description}

\subsection{Organization of This Article}

The rest of this article is organized as follows.
The results of this series of articles
that will likely be of most interest in applications are stated
in Section~\ref{se_some_results} after some preliminary definitions
in Section~\ref{se_term1_basic}.
The main theorems in this series of articles
are stated in Section~\ref{se_main_theorems} after some
preliminary terminology in Section~\ref{se_term2_walks_traces}.

Section~\ref{se_bsf} is an optional section where we review
some ideas of \cite{broder,friedman_random_graphs} and give some examples
to motivate and more easily understand some
technical aspects of definitions that we give in 
Sections~\ref{se_ordered_B_strong_alg}--\ref{se_new_algebraic}.

In Section~\ref{se_ordered_B_strong_alg} we introduce the
notions of an {\em ordered graph}, a {\em $B$-graph}, and a
{\em strongly algebraic} model.
The permutation model is an example of a model that is strongly
algebraic.
In Section~\ref{se_new_homot} we discuss homotopy types and VLG's
(variable-length graphs).

At this point the reader has a choice.
Sections~\ref{se_ordered_B_strong_alg} and~\ref{se_new_homot}
are all that is needed to read
our summary of the results in Articles~II-VI, which are respectively
covered in Sections~\ref{se_art_expansion}--\ref{se_art_sharp},
and the reader may skip to there.
Section~\ref{se_new_algebraic} is
more technical, but (1) defines algebraic models (some of our basic models are
algebraic but not strongly algebraic),
and (2) gives the reader
more insight into Article~II (beyond Section~\ref{se_bsf}).

In Appendix~\ref{se_append_cert_ind} we describe a bit more regarding
the techniques used in Article~III (which are based
on those of \cite{friedman_alon}, but has some simplifications
such as {\em certified traces}).

Appendix~\ref{se_def_summary} collects all the definitions in
this article and makes additional remarks on the definitions that are
not standard.

\section{Basic Terminology}
\label{se_term1_basic}

In this section we introduce some preliminary terminology
needed to state the main results in this article.

\subsection{Basic Notation and Conventions}
\label{su_very_basic}

We use $\reals,\complex,\integers,\naturals$
to denote, respectively, the
the real numbers, the complex numbers, the integers, and positive
integers or
natural numbers;
we use $\integers_{\ge 0}$ to denote the set of non-negative
integers.
We denote $\{1,\ldots,n\}$ by $[n]$.

If $A$ is a set, we use $\naturals^A$ to denote the set of
maps $A \to \naturals$; we will refers to its elements as 
{\em vectors}, denoted in bold face letters, e.g., $\mec k\in A^\naturals$
or $\mec k\from A\to\naturals$; we denote its {\em component} 
in the regular face equivalents, i.e., $k(a)\in\naturals$ for
the $a$-component of $\mec k$.
As usual, $\naturals^n$ denotes $\naturals^{[n]}=\naturals^{\{1,\ldots,n\}}$.
We use similar conventions for $\naturals$ replaced by $\reals$,
$\complex$, etc.

If $A$ is a set, then $\# A$ denotes the cardinality of $A$.
We often denote a set with all capital letters, and its cardinality
in lower case letters; for example,
when we define
$\SNBC(G,k)$, we will write
$\snbc(G,k)=\#\SNBC(G,k)$.

If $A'\subset A$ are sets, then $\II_{A'}\from A\to\{0,1\}$ (with $A$
understood) denotes
the characteristic function of $A'$, i.e., $\II_{A'}(a)$ is $1$ if
$a\in A'$ and otherwise is $0$;
at times we write $\II_{A'}$ (with $A$ understood) when
$A'$ is not a subset of $A$, by which we mean
$\II_{A'\subset A}$.

All probability spaces $\cP=(\Omega,P)$ are finite, i.e., 
$\#\Omega< \infty$,
and all elements of $\Omega$ (which we also call {\em atoms})
have nonzero probability, i.e.,
$\omega\in\Omega$ implies that $P(\omega)>0$; an event is any subset of
$\Omega$.  We use $\cP$ and $\Omega$ interchangeably when $P$ is
understood and confusion is unlikely.
A {\em complex-valued random variable} on $\cP$ or $\Omega$ 
is a function $f\from\Omega\to\complex$, and similarly for real-, 
integer-, and natural-valued random variable; we use denote its 
$\cP$-expected value by
$$
\EE_{\omega\in\Omega}[f(\omega)]=\sum_{\omega\in\Omega}f(\omega)P(\omega).
$$
If $\Omega'\subset\Omega$ we denote the probability of $\Omega'$ by
$$
\Prob_{\omega\in\Omega}[\Omega']=\sum_{\omega\in\Omega'}P(\omega')
=
\EE_{\omega\in\Omega}[\II_{\Omega'}(\omega)].
$$
At times we write $\Prob_{\omega\in\Omega}[\Omega']$ where $\Omega'$ is
not a subset of $\Omega$, by which we mean
$\Prob_{\omega\in\Omega}[\Omega'\cap\Omega]$.

\subsection{Conventions Regarding Digraphs and Graphs}
\label{sb:digraphs_graphs}

\begin{definition}\label{de_digraph}
A {\em directed graph} is a tuple 
$B=(V_B,\Edir_B,t_B,h_B)$---or
more simply $B = (V,\Edir,t,h)$ when $B$ is clear---where
$V$ and $E$ are sets---the {\em vertex} and {\em (directed) edge} sets---and
$t$ and $h$ are maps $E\to V$---the 
{\em tail}
and {\em head} map;
we say that $e\in E$ is a {\em self-loop}
if $h(e)=t(e)$.
We also say that $e\in E$ {\em runs ({\rm or} is) from} $t(e)$ to $h(e)$.
\end{definition}

In particular, our directed graphs can have {\em multiple edges}---more
than one edge with the same tail and head---and self-loops.

\begin{definition}\label{de_graph}
A {\em undirected graph}, or simply a {\em graph},
is a tuple $G=(V,\Edir,t,h,\iota)$ where $(V,\Edir,t,h)$ is a
directed graph---the {\em underlying directed graph}---and
$\iota\from \Edir\to \Edir$
is an involution
\myindd{involution} (i.e., $\iota^2={\rm id}_{\Edir}$)---the
graph's {\em edge involution}\myindd{edge involution}---that
is orientation reversing (i.e., $h\iota=t$, and hence
$t\iota = h$);
we also refer to $\iota e$
% We also will write $\iota e$ as
% $e^{-1}$
% and refer to it as the
{\em opposite edge\myindd{opposite edge} of $e$}.
We say that $e\in \Edir$ is a {\em half-loop}
if $h(e)=t(e)$ and
$\iota e=e$, and a {\em whole-loop}
if $h(e)=t(e)$ and $\iota e\ne e$.
We denote by $E_G$---the set of {\em edges} of $G$---the set of
orbits of $\iota$, i.e.,
all singleton sets $\{e\}$ where $e\in\Edir_G$ is a half-loop, and otherwise
all two element sets of the form $\{e,\iota e\}$ with $e\in\Edir_G$
not a half-loop.
By an {\em orientation} of an edge $\{e,\iota e\}$ in the graph, $G$,
we mean (the choice of) either $e$ or $\iota e$;
by an {\em orientation} of $G$ we mean a subset of $\iota_G$
orbit representatives
$\Eor_G\subset \Edir_G$, i.e., $\Eor_G$ contains all the half-loops and
one orientation for each two-element edge, $\{e,\iota e\}$.
\end{definition}

Whole-loops are the more standard type of self-loop one sees in graph
theory.
Half-loops are useful in a number of ways,
such as to give models of regular random graphs of odd degree
\cite{friedman_alon}.

\subsection{Coordinatized Coverings and Our Basic Models}

The following generalizes the models of random regular graphs used in
\cite{friedman_relative,friedman_alon}, based on 
\cite{broder,friedman_random_graphs}.

\begin{definition}\label{de_coordinatized_digraph}
Let $B$ be a digraph, and $n\ge 1$ an integer.  A digraph, $G$, is called
a {\em coordinatized cover (over $B$ and of degree $n$)} if $G$ satisfies
\begin{equation}\label{eq_coordinatized_digraph}
V_G = V_B \times [n],\quad
\Edir_G = \Edir_B \times [n],\quad
t_G(e,i)=(t_B e,i),\quad % \pi={\rm proj}_1
h_G(e,i) = \bigl(h_B e, \sigma(e) i\bigr)
\end{equation}
for some $\sigma\from \Edir_B\to\cS_n$.
Given $G$, the map $\sigma$ is uniquely determined and conversely;
we refer to $\sigma$ as {\em the permutation
map $\Edir_B\to\cS_n$ associated to $G$},
and to $G$ as the (coordinatized) cover associated to $B$.
\end{definition}

It is extremely useful---although rather pedantic---to remark that if $G$
is a coordinatized cover, then $B$ and $n$ are uniquely determined by $G$,
for the following set theoretic reasons: $V_G$ consists of pairs $(v,i)$,
and the set of all $v$ that appear as such form $V_B$; similarly the
set of all $i$ determine $n$; similarly $\Edir_G,t_G,h_G$ determine
$\Edir_B,t_B,h_B$.
Hence if $G$ is a ``coordinatized cover,'' then $B,n$ are uniquely determined
from $G$.

\begin{definition}\label{de_coordinatized_graph}
If $B$ is a graph and $n$ is an integer, then $G$ is a 
{\em coordinatized cover (over $B$ of degree $n$)}
if the underlying digraph of $G$ is a coordinatized cover over the
underlying digraph of $B$, and that
$$
\iota_G(e,i) = \bigl( \iota_B(e), \sigma(e) i \bigr).
$$
We use ${\rm Coord}_n(B)$ to denote the set of all coordinatized
covers of $B$ of degree $n$.
\end{definition}

If $\sigma\from\Edir_B\to\cS_n$ is the permutation map associated to a
coordinatized graph $G$ over $B$, then the above definitions imply that
\begin{equation}\label{eq_coordinatized_graph}
\sigma(\iota e_B)=\bigl(\sigma(e)\bigr)^{-1}, \quad
\iota_G(e,i) = \bigl( \iota_B(e), \sigma(e) i \bigr) \ ;
\end{equation}
conversely, any $\sigma\from\Edir_B\to\cS_n$ satisfying
\eqref{eq_coordinatized_graph} is a permutation map associated to a
(unique) coordinatized graph $G$ over $B$ of degree $n$.

We now turn to defining the models of random covering maps of interest to us.
It is useful to introduce some common properties of our models.

\begin{definition}\label{de_model_conventions}
Let $B$ be a graph.
A {\em model over $B$}
is a family of probability spaces $\{\cC_n(B)\}_{n\in N}$ indexed by a 
parameter
$n$ that ranges over some infinite subset, $N$, of $\naturals$, such that
the atoms of each $\cC_n(B)$ lie in ${\rm Coord}_n(B)$;
we say that the model is {\em edge-independent} if for any orientation
$\Eor_B$ the random variables $\{\sigma(e)\}_{e\in\Eor_B}$
of the associated
maps $\Edir_B\to\cS_n$ are independent.
Also we often write simply $\cC_n(B)$ or $\{\cC_n(B)\}$ for
$\{\cC_n(B)\}_{n\in N}$ if confusion is unlikely to occur.
\end{definition}
Each edge-independent models is therefore
described by specifying the distribution
of $\sigma(e)\in\cS_n$ for every edge $e\in\Edir_B$ and $n\in N$.

We now describe what we call {\em our basic models}; these models are the
ones that are most convenient for our methods.

\begin{definition}\label{de_models}
Let $B$ be a graph.
By {\em our basic models} we mean one of the models 
edge-independent models $\{\cC_n(B)\}_{n\in N}$ over $B$ of degrees in $N$:
\begin{enumerate}
\item The {\em permutation model} assumes $B$ is any graph without half-loops
and $N=\naturals$: for each $n$ and $e\in\Edir_B$, 
$\sigma(e)\in\cS_n$ is a uniformly chosen permutation.
\item The {\em permutation-involution of even degrees} is defined for any $B$ 
and for $N$ being the even naturals: this is the same as the permutation,
except that if $e$ is a half-loop, then $\sigma(e)$ is a 
uniformly chosen {\em perfect matching} on $[n]$, i.e., a map $\sigma\in\cS_n$
that has no fixed points and satisfies $\sigma^2=\id$.
\item The {\em permutation-involution of odd degrees} is defined the same,
except that if $e$ is a half-loop, then $\sigma(e)$ is a
uniformly chosen {\em near perfect matching} on $[n]$, by which we mean
a map $\sigma\in\cS_n$ 
with exactly one fixed point and with $\sigma^2=\id$.
\item
The {\em full cycle} model (or simply {\em cyclic} model)
is defined liked the permutation model
(so $B$ is assumed to have no half-loops),
except
that when $e$ is a whole-loop then $\sigma(e)$ is a uniform
permutation whose cyclic structure consists of a single cycle of length $n$.
\item 
The {\em full cycle-involution of even degree} and {\em odd degree} models
(or simply {\em cyclic-involution} of either degree) are
defined for arbitrary $B$ and either $n$ even or $n$ odd, is the
full cycle model with the distributions of $\sigma(e)$ for half-loops, $e$,
as in the permutation-involution.
\end{enumerate}
\end{definition}

\begin{definition}\label{de_in_out_degree}
Let $B=(V_B,\Edir_B,t_B,h_B)$ be a digraph.
The {\em adjacency matrix of $B$}, denoted $A_B$, is the square
matrix indexed on $V_B$ such that for $v,v'\in V_B$,
$(A_B)_{v,v'}$ is the number of edges from $v$ to $v'$.
The {\em indegree}
(respectively, {\em outdegree})
of a vertex, $v\in V_B$, is the number of edges
whose head (tail) is $v$.
We say that $B$ is {\em strongly $d$-regular}
if $V_B\ne \emptyset$ and the indegree and outdegree of each vertex
equals $d$.
If $B$ is a graph, then the {\em degree}
% \myindd{degree!vertex in a graph}
of a vertex, $v\in V_B$, denoted $\deg_B(v)$,
is its indegree in the underlying directed graph
(which equals its outdegree there);
we say that $B$ is $d$-regular if its underlying directed graph
is strongly $d$-regular, 
i.e., if $V_B\ne\emptyset$ and the degree of each vertex is $d$.
\end{definition}

\begin{definition}\label{de_bouquet}
If $a,b\ge 0$ are integers, we say that $B$ is a {\em bouquet of $a$
whole-loops and $b$ half-loops} if $\#V_B=1$ and $E_B$ consists of
$a$ whole-loops and $b$ half-loops.
\end{definition}

\begin{example}
If $B$ is a bouquet of whole-loops or half-loops, then the models
$\cC_n(B)$ in Definition~\ref{de_models} are models of a random $d$-regular
graph.
The papers \cite{broder,friedman_random_graphs} deal exclusively with the
permutation model in the
case where $B$ is a bouquet of $d/2$ whole-loops for some even
integer $d\ge 4$.
The paper \cite{friedman_alon} works with a number of other models,
including the permutation-involution model for the bouquet of $d$ half-loops.
\end{example}

The permutation-involution model is a natural generalization of the models
$\cI_{n,d}$ ($n$ even) and $\cJ_{n,d}$ ($n$ odd) in
\cite{friedman_alon}, and the full cycle-involution model of
$\cH_{n,d}$ there.  
In \cite{friedman_alon} we see:
(1) $\cH_{n,d}$ is interesting since it has a higher
probability of satisfying the Alon bound 
(compare the value of
$\tau_{\rm fund}$ in Theorems~1.1 and 1.2 of \cite{friedman_alon}), and
(2) $\cI_{n,d}$ ($n$ even) and
$\cJ_{n,d}$ ($n$ odd) need to be treated separately since they give
rise to different asymptotic expansions in the trace method.

\subsection{Morphisms, Covering and Etale Morphisms}

It is important to note that
coordinatized covers are really covering morphisms of graphs.
This latter (``coordinate free'') view is crucial to methods.

\begin{definition}\label{de_digraph_morphisms}
A {\em morphism
$\pi\from G\to B$ of directed graphs} is
a pair $\pi=(\pi_V,\pi_E)$ of maps, $\pi_V\from V_G\to V_B$ and
$\pi_E\from E_G\to E_B$ which respect the heads and tail maps
(i.e., $h_B \pi_E = \pi_V h_G$ and $t_B \pi_E = \pi_V t_G$);
we refer to the values of $\pi_V^{-1}$ as the {\em vertex fibres}
% \myindd{vertex fibres}
of $\pi$, and similarly with $\pi_E^{-1}$ as
{\em edge fibres};
furthermore we say that $\pi$ is a {\em covering map}
(respectively,
{\em \'etale map})
if for each $v\in V_G$, $\pi_E$
gives an isomorphism (respectively, injection)
of those edges in $G$ with head $v$ to those in
$B$ with head $\pi_V(v)$, and the same with ``head'' replaced with
``tail.''
A covering map $\pi\from G\to B$ is {\em of degree $n$}
if each
vertex fibre and each edge fibre is of size $n$.
\end{definition}
If $B$ is connected and $\pi\from G\to B$ is a covering map, then one easily
shows that $\pi$ is of degree $n$ for some $n$.

\begin{definition}\label{de_graph_morphisms}
A {\em morphism}, $\pi\from G\to B$, of graphs is a
morphism $\pi=(\pi_V,\pi_E)$ of the underlying directed graphs
which respects the edge involutions (i.e., $\iota_B\pi_E=\pi_E\iota_G$);
furthermore, $\pi$ is a 
{\em covering morphism} (respectively,
% {\em coordinatized covering morphism}, and 
{\em \'etale morphism})
if this is true of
$\pi$ as a map of underlying directed graphs.
\end{definition}

Etale morphisms are important in studying
walks in graphs for the following reason: we will organize walks by
the subgraph they traverse; it will useful to notice that if a graph
$G'$ is a subgraph of a $G$ that admits a covering map $\pi\from G\to B$,
then $G'\to B$ is \'etale.
This statement has certain converses: for example, if $B$ has
no half-loops and $G'\to B$ is 
\'etale, then for sufficiently large $n$, $G'$ is a subgraph
of at least one element of $\Coord_n(B)$\footnote{
  This theorem is sometimes called {\em Hall's theorem}; see 
  \cite{stallings83}.
}.  More important to us is---for similar reasons---the 
prominence of \'etale graphs in
the definition of {\em strongly algebraic}
(see Definition~\ref{de_strongly_algebraic}).

\subsection{Adjacency Matrices and New/Old Spectrum}

The definitions in this subsection are first given for digraphs; the
corresponding notion for graphs reduces to the
digraph notion of the underlying digraph(s).

% \begin{definition}\label{de_digraph_adjacency}
% Let $B=(V,\Edir,t,h)$ be a digraph.
% The {\em adjacency matrix of $B$}, denoted $A_B$, is the square
% matrix indexed on $V$ such that for $v,v'\in V_B$,
% $(A_B)_{v,v'}$ is the number of edges from $v$ to $v'$.
% \end{definition}

\begin{definition}\label{de_graph_adjacency}
The {\em adjacency matrix}, $A_B$, of a graph $B$ is that of $B$'s underlying
digraph (Definition~\ref{de_in_out_degree}).  
As such $A_B$ is real symmetric, and if $n=\#B$, we use
$$
\lambda_1(B)\ge \cdots \ge \lambda_n(B)
$$
to denote the $n$ eigenvalues $A_B$ (listed with multiplicities).
\end{definition}

\begin{definition}\label{de_new_spec}
Let $\pi\from G\to B$ be a covering map of digraphs.  
A {\em vertex-fibre} is any subset of $V_G$ of the form
$\pi_V^{-1}(v)$ with $v\in V_B$.
We say that a function $V_G\to\reals$ is an {\em old function (of $V_G$)}
if it is constant on each vertex-fibre, and a {\em new function (of $V_G$)}
if its sum on any vertex-fibre is zero.
[We easily see that the space of functions $V_G\to\reals$ decomposes
as a direct sum of old and new functions, and that $A_G$ leaves
each of these two spaces invariant.]
We define the {\em new spectrum of $A_G$}, denoted
$\specnew_B(A_G)$, 
to be spectrum (i.e., the multiset of eigenvalues counted with multiplicites)
of $A_G$ restricted to
the new functions of $V_G$; we define the {\em new spectral radius
of $A_G$}, denoted $\rhonew_B(A_G)$ 
to be the maximum absolute value of the elements
of $\specnew_B(A_G)$.
We define the {\em old spectrum} to be the multiset of eigenvalues of $A_G$ 
restricted
to the old functions, or, equivalently, the multiset of eigenvalues of
$A_B$.
\end{definition}
Since
$$
\sum_{\lambda\in\specnew_B(A_G)}\lambda^k
=
\Trace(A_G^k) - \Trace(A_B^k),
$$
we see that $\specnew_B(A_G)$ is
determined from $G$ and $B$ alone, and not the particular covering
map $\pi\from G\to B$.

\begin{definition}\label{de_new_spec_graphs}
If $G\to B$ is a covering map of graphs, then
all terms in
Definition~\ref{de_new_spec} are those of the covering map
of the underlying directed graphs.
\end{definition}

\section{Some Results in this Series of Articles}
\label{se_some_results}

The main results in this series of articles, stated in
Section~\ref{se_main_theorems},
require some 
definitions in Section~\ref{se_term2_walks_traces}.
However, at this point we can state some consequences of these
results that are simpler to state and likely
of most interest in applications.

\begin{definition}\label{de_nonAlon}
Let $\pi\from G\to B$ be a covering map of $d$-regular graphs.
For an $\epsilon>0$ we define the
{\em $\epsilon$-non-Alon multiplicity of $G$
relative to $B$} to be the 
integer
$$
{\rm NonAlon}_B(G;\epsilon) \eqdef
\# \bigl\{\lambda\in\specnew_B(A_G)\ \bigm|\
|\lambda|>
% \|A_{\widehat B}\|_2 +\epsilon\bigr\} ,
2\sqrt{d-1} +\epsilon\bigr\} ,
$$
where
the above $\lambda$ are counted with their multiplicity in
$\specnew_B(A_G)$.
\end{definition}

If $B$ is not $d$-regular, then one can similarly define this notion
by replacing $2\sqrt{d-1}$ above with
$\|A_{\widehat B}\|_2$, i.e.,
the $L^2$ norm of the adjacency operator
on the universal covering, $\widehat B$, of $B$;
this is due to the well-known fact that
$$
\|A_{\widehat B}\|_2 = 2 \sqrt{d-1}
$$
if $B$ is $d$-regular, since then $\widehat B$ is the (there is only
one, up to isomorphism) infinite
$d$-regular tree.

Alon was interested in $\lambda_2(G)$ for a random $d$-regular
graph, $G$, on $n$ vertices, with $n$ large, and not in $\lambda_n(G)$,
i.e., Alon was interested in the case where $B$ has one vertex, and
was interested 
in only the {\em positive} non-Alon eigenvalues defined
above.  However our trace methods, like
most trace methods, 
simultaneously bound the negative non-Alon eigenvalues as well.

\begin{theorem}\label{th_rel_Alon_regular_simple}
Let $\cC_n(B)$ be any of our basic models over a $d$-regular graph $B$.
Then
% 
% see se_first_main.tex for older version
%
for any $\epsilon>0$ there is a constant $C=C(\epsilon)$ for which
\begin{equation}\label{eq_implies_relative_alon}
\Prob_{G\in\cC_n(B)}[ {\rm NonAlon}_B(G;\epsilon)>0 ]
\le  C(\epsilon)/n \ .
\end{equation} 
\end{theorem}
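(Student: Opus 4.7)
The plan is to prove Theorem~\ref{th_rel_Alon_regular_simple} by a trace method adapted from \cite{broder,friedman_random_graphs,friedman_alon}, but working with \emph{certified non-backtracking traces} on the covering graph $G$ rather than raw adjacency traces. Since $B$ is $d$-regular, new adjacency eigenvalues of $G$ exceeding $2\sqrt{d-1}+\epsilon$ correspond (via the Ihara--Bass formula, which relates adjacency spectrum to the non-backtracking operator's spectrum) to new non-backtracking eigenvalues exceeding $(d-1)^{1/2}+\epsilon'$ for some $\epsilon'=\epsilon'(\epsilon)$. Consequently, it suffices to control $\EE_{G\in\cC_n(B)}[\snbc(G,k)-\snbc(B,k)]$ for $k = k(n)$ chosen to grow like $c\log n$.

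First, I would invoke the asymptotic expansion machinery from Article~II to write
\begin{equation*}
\EE_{G\in\cC_n(B)}[\snbc(G,k)] = \sum_{i=0}^{r-1} c_i(k)\, n^{-i} + \error_r(k,n),
\end{equation*}
where each $c_i(k)$ is a sum over homotopy types (VLG's) of small ``excess,'' and the remainder $\error_r(k,n)$ is controlled uniformly. The leading term $c_0(k)$ is the trace of the non-backtracking walks on the universal cover, giving the base $(d-1)^{k/2}$ growth; the correction terms $c_i(k)$ for $i\ge 1$ introduce potentially larger contributions coming from walks that traverse subgraphs with positive excess---these are the contributions organized by tangles. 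Next, using the \emph{certified trace} formalism from Article~III, I would replace $\snbc(G,k)$ by a certified variant that restricts to walks whose visited subgraph contains no tangle. By the tangle-free assumption, certified traces admit sharper asymptotic expansions whose every term grows no faster than $(d-1)^{k/2}\cdot \mathrm{poly}(k)$. The discrepancy between $\snbc$ and its certified version is then bounded by the event that the random cover $G$ contains a tangle, whose probability, for any fixed tangle, is $O(1/n)$ by the \'etale/algebraic structure of our basic models (Article~V).

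Having bounded expected certified traces, I would then invoke the Sidestepping Theorem from Article~IV: if certified traces of length $k\sim c\log n$ are bounded (in expectation) by $(d-1)^{k/2}\cdot n^{o(1)}$, and the noncertified and certified traces differ only on an event of probability $O(1/n)$, then the new spectral radius $\rhonew_B(A_G)$ satisfies $\rhonew_B(A_G)\le 2\sqrt{d-1}+\epsilon$ except on an event whose probability is $O(1/n)$. This would yield \eqref{eq_implies_relative_alon} with $C(\epsilon)$ depending on $\epsilon$ through the choice of truncation order $r$ and the growth rate $c$ in $k=c\log n$.

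The main obstacle, and the reason this proof requires the extensive machinery of Articles~II--V, is controlling the remainder term $\error_r(k,n)$ uniformly in $k$ when $k$ grows logarithmically in $n$: naive trace bounds lose factors of $k!$ or similar, which defeat the $n^{-r}$ savings. The certified-trace device is precisely designed to avoid these losses by excising tangle contributions at the combinatorial level, so that one only needs crude bounds on the probability that any tangle appears, rather than on its multiplicity. A secondary technical point is that the five basic models in Definition~\ref{de_models} have different permutation distributions on half-loops and whole-loops; verifying that all five are \emph{algebraic} (in the sense of Article~V) is what makes the asymptotic-expansion framework of Article~II applicable uniformly to each, and this verification is carried out model-by-model in Article~V.
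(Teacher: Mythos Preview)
Your broad architecture matches the paper's: split into the events ${\rm HasTangles}(\ge\nu,<r)$ and ${\rm TangleFree}(\ge\nu,<r)$, bound the former by $O(n^{-\tau_{\rm tang}})\le O(1/n)$, handle the latter via the certified-trace expansion of Article~III plus the Sidestepping Theorem of Article~IV, and use Article~V to verify algebraicity and to pass between $A_G$ and $H_G$ via the Ihara formula. That much is right.

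However, your description of \emph{what the certified-trace expansion actually delivers} is wrong in a way that would derail the argument if you tried to fill it in. You assert that ``certified traces admit sharper asymptotic expansions whose every term grows no faster than $(d-1)^{k/2}\cdot\mathrm{poly}(k)$,'' and you then invoke the Sidestepping Theorem under the premise that the expected certified trace is bounded by $(d-1)^{k/2}\cdot n^{o(1)}$. Neither statement holds. By Theorem~\ref{th_main_tech_result}, the coefficients $c_i(k)$ are only $(B,\nu)$-\emph{bounded} functions: their polyexponential parts may have bases as large as $\mu_1(B)=d-1$ (you even say as much earlier, noting that the $c_i$ for $i\ge 1$ ``introduce potentially larger contributions''). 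If every $c_i$ really were of growth $(d-1)^{k/2}$, a crude Markov bound with $k\sim c\log n$ would already finish the proof and the Sidestepping Theorem would be superfluous.

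The actual mechanism is the reverse of what you describe. The Sidestepping Theorem (Theorem~\ref{th_sidestep}) does not take ``small expected trace'' as its hypothesis; it takes an expansion whose coefficients have \emph{known polyexponential structure with possibly large bases}, and concludes that the eigenvalues cluster near those bases or inside $B_{\Lambda_0+\epsilon}(0)$. The key point you are missing is this: after subtracting the old spectrum (i.e., forming the matrix model for the new Hashimoto eigenvalues on the tangle-free event, using both parts of Theorem~\ref{th_main_tech_result}), the coefficient $c_0(k)-\Trace(H_B^k)$ has \emph{vanishing} large polyexponential part, so the first index $j$ with $p_j\ne 0$ in Theorem~\ref{th_sidestep} satisfies $j\ge 1$. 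It is precisely this ``$p_0=0$'' fact that yields $\tau_{\rm alg}(\nu,r)\ge 1$ and hence the $O(1/n)$ bound. Your proposal never isolates this step, and the premise you state for the Sidestepping Theorem would not establish it.
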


Of course, \eqref{eq_implies_relative_alon} implies
\begin{equation}\label{eq_rel_alon_prob_mild}
\Prob_{G\in\cC_n(B)}[{\rm NonAlon}_d(G;\epsilon)>0]  \to 0
\quad\mbox{as $n\to\infty$}
\end{equation}
conjectured (in parenthesis) in Section~1 of \cite{friedman_relative};
we call this
the {\em relativized Alon conjecture} in \cite{friedman_kohler}.

Our second main result gives a much sharper result for $B$ that are
$d$-regular and {\em Ramanujan}
\begin{definition}\label{de_Ramanujan}
We say that a $d$-regular graph $B$ is {\em Ramanujan} if all eigenvalues
of $A_B$ lie in
$$
\{d,-d\} \cup \Bigl[ -2\sqrt{d-1}, 2\sqrt{d-1} \Bigr] .
$$
\end{definition}

\begin{theorem}\label{th_consequences_of_second}
Let $\cC_n(B)$ be any of our basic models over a graph $B$ that is
$d$-regular Ramanujan.
Then there is an integer $\tau_{\rm tang}$ and a constant $C'>0$ such that
for sufficiently small $\epsilon>0$ we have
$$
C' /n^{\tau_{\rm tang}}
\le \Prob_{G\in\cC_n(B)}[ {\rm NonAlon}_B(G;\epsilon)>0 ],
$$
and for any $\epsilon>0$ there is a constant $C=C(\epsilon)$ for which
$$
\Prob_{G\in\cC_n(B)}[ {\rm NonAlon}_B(G;\epsilon)>0 ]
\le  C(\epsilon)/n^{\tau_{\rm tang}} \ .
$$
Furthermore
$$
\tau_{\rm tang} \ge 
\bigl( (d-1)^{1/2} - 1 \bigr)/2  +1 .
$$
\end{theorem}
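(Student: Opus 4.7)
The plan is to establish matching upper and lower bounds on the non-Alon probability, with $\tau_{\rm tang}$ interpreted as the \emph{tangle power} of the model, i.e., the minimal ``order'' (measured by some excess/Betti-type parameter) of a local subgraph configuration (a \emph{tangle}) whose presence in $G$ forces $\rhonew_B(A_G)$ to exceed $2\sqrt{d-1}$ by a positive amount. Once this quantity is identified, the two bounds come from quite different arguments, tied together by the Ramanujan hypothesis.

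For the lower bound, I would fix a minimal tangle $T$ realizing the tangle power and argue combinatorially that, in any of the basic edge-independent models, the probability that $T$ embeds into $G\in\cC_n(B)$ is at least $C'/n^{\tau_{\rm tang}}$: each additional ``independent'' edge of $T$ beyond a spanning tree costs a factor of $n$, while the number of ways to embed the vertex set of $T$ contributes $\Theta(n^{\#V_T})$, and these cancel down to $n^{-\tau_{\rm tang}}$. Once $T$ embeds, a Rayleigh-quotient / variational argument using a test function supported near $T$ produces a new eigenvalue strictly larger in absolute value than $2\sqrt{d-1}$; for any $\epsilon$ smaller than the resulting excess this guarantees $\mathrm{NonAlon}_B(G;\epsilon) \ge 1$.

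For the upper bound, I would use the certified trace method (Article~III) together with the asymptotic expansions of Article~II to obtain, for suitable $k_n$ growing logarithmically in $n$, an expansion of the expected certified SNBC count of the form
\[
\EE_{G\in\cC_n(B)}[\,\mathrm{certified\text{-}snbc}(G,k_n)\,]
= \sum_{j=0}^{\tau_{\rm tang}-1} c_j(k_n)/n^{j} + O(1/n^{\tau_{\rm tang}})\cdot(\text{mild growth in }k_n),
\]
and then feed this into the Sidestepping Theorem (Article~IV) to convert trace asymptotics into a bound on the probability that some eigenvalue exceeds $2\sqrt{d-1}+\epsilon$. The decisive input here is the Ramanujan hypothesis, which via Article~VI's theorem forces the algebraic power of the model to equal $+\infty$: this is precisely the statement that each coefficient $c_j(k)$ for $j<\tau_{\rm tang}$ is a sum of terms of the form (polynomial in $k$)$\cdot \mu^k$ with $|\mu|\le 2\sqrt{d-1}$, so none of these intermediate-order terms can witness an eigenvalue above the Alon bound. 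Only the $1/n^{\tau_{\rm tang}}$ remainder, coming from the smallest genuine tangle, can contribute, yielding the matching upper bound.

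The main obstacle is the role played by the Ramanujan hypothesis in pinning down the algebraic power; without it, eigenvalues of $A_B$ strictly between $2\sqrt{d-1}$ and $d$ would inject spurious $\mu^k$ factors into $c_j(k)$ for $j<\tau_{\rm tang}$ and produce intermediate non-Alon probabilities larger than $1/n^{\tau_{\rm tang}}$, exactly the phenomenon seen in Lubetzky's work. Finally, the explicit inequality $\tau_{\rm tang} \ge ((d-1)^{1/2}-1)/2+1$ is a separate, purely combinatorial estimate: any connected subgraph of small enough excess is close to a subtree of the $d$-regular tree, so its spectral radius is at most $2\sqrt{d-1}$ and it cannot serve as a tangle; I would establish this by bounding the spectral radius of a graph in terms of its excess and maximum degree, and then inverting.
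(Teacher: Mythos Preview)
Your high-level architecture matches the paper's, but two of your load-bearing steps are stated in a way that would not actually close.

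First, in the upper bound you propose to feed the certified trace directly into the Sidestepping Theorem. This cannot work as stated: ${\rm cert}_{<\nu,<r}(G,k)$ is not of the form $\Trace(M^k)$ for any matrix $M$ attached to $G$, so the Sidestepping Theorem does not apply to its expansion. The paper's route is to split $\Prob[{\rm NonAlon}>0]$ into the ${\rm HasTangles}(\ge\nu,<r)$ part, bounded directly by $O(n^{-\tau_{\rm tang}})$ via a finiteness-of-minimal-tangles argument, and the ${\rm TangleFree}$ part, where one needs an expansion of $\EE[\II_{\rm TangleFree}(G)\Trace(H_G^k)]$. The certified trace enters only as a device in the proof of the latter expansion: one proves expansions for $\EE[{\rm cert}]$ and for $\EE[\II_{\rm HasTangles}\cdot{\rm cert}]$ separately (the second via a M\"obius-type approximation of the indicator by linear combinations of subgraph counts $\#[\psi_\Bg^\og]\cap G$), subtracts, and uses that ${\rm cert}=\snbc_{<r}$ on tangle-free $G$. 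Without this indicator-approximation step your argument has no bridge from walk counts to eigenvalue locations.

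Second, your explanation of why Ramanujan forces $\tau_{\rm alg}=+\infty$ is incomplete. The larger bases of the coefficients $c_j(k)$ in the Hashimoto-trace expansion are a priori eigenvalues of $H_B$; for Ramanujan $B$ these are either of modulus $\le (d-1)^{1/2}$ \emph{or} equal to $\pm(d-1)$. The Ramanujan hypothesis rules out bases strictly between $(d-1)^{1/2}$ and $d-1$, but the $\pm(d-1)$ bases survive this step and must be eliminated by a separate magnification (pseudo-magnification) argument---this is the actual content of Article~VI, not merely a restatement of the Ramanujan condition. Finally, in your lower bound, a plain Rayleigh-quotient test function on the tangle gives \emph{some} large eigenvalue of $A_G$, but you need a large \emph{new} eigenvalue relative to $B$; the paper handles this via the Friedman--Tillich construction together with the ``Curious Theorem,'' which is more than a variational one-liner.
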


We remark that the above theorem improves the results of
\cite{friedman_alon} for certain values of $d$: for example,
if $d$ is even and $B$ consists of $d/2$ whole-loops, then
upper and lower bounds in \cite{friedman_alon} for 
$$
\Prob_{G\in\cC_n(B)}[ {\rm NonAlon}_B(G;\epsilon)>0 ]
$$
differ by a factor of $n$ if $d=m^2+1$ for an odd integer $m$
(but for all other $d$ are tight within a constant factor).
Hence our results improve those in \cite{friedman_alon} in these
cases.
We also claim that this series of articles is easier to read---or at
least more ``factored''---than \cite{friedman_alon}.

We conjecture that
Theorem~\ref{th_consequences_of_second} holds for any $d$-regular graph, $B$,
but this conjecture requires us to estimate another integer,
$\tau_{\rm alg}$, of the model, which seems difficult to compute
directly.

We will define $\tau_{\rm tang}$ 
in terms of what we will call {\em tangles}; in the terminology
of Definition~4.3 of
\cite{friedman_alon}, $\tau_{\rm tang}$ is the smallest
{\em order} of a hypercritical tangle.  This integer was computed
in \cite{friedman_alon} in a number of cases where $B$ has one
vertex (in which case $\cC_n(B)$ is a model of a random $d$-regular
graph on $n$ vertices), and these results are:
\begin{enumerate}
\item
if $d$ is even and $B$ is a bouquet of $d/2$ self-loops, then
for the permutation model we have
\begin{equation}\label{eq_tau_tang_permutation}
\tau_{\rm tang} = 
% \lfloor \sqrt{d-1}/2 \rfloor + 1
\Bigl\lfloor \bigl( (d-1)^{1/2} - 1 \bigr)/2  \Bigr\rfloor +1
\end{equation} 
(where $\lfloor x\rfloor$ denotes the floor function, i.e., the
largest integer no greater than $x$),
and for the full-cycle model we have
\begin{equation}\label{eq_tau_tang_full_cycle}
\tau_{\rm tang} = 
% \lfloor \sqrt{d-1} \rfloor + 1
\Bigl\lfloor (d-1)^{1/2} - 1  \Bigr\rfloor + 1
\end{equation} 
\item
if $d$ is a bouquet of $d$ half-loops, then 
for the even degree permutation involution or 
full-cycle involution model we have
$$
\tau_{\rm tang} =
\Bigl\lfloor (d-1)^{1/2} - 1  \Bigr\rfloor + 1
$$
and for the odd degree permutation involution or
full-cycle involution model we have
$$
\tau_{\rm tang} = 
\Bigl\lfloor (d-1)^{1/2} - 1  \Bigr\rfloor + 1
$$
\end{enumerate}
In Article~VI we shall prove the same bounds for slightly more
general $B$.
As noted in \cite{friedman_alon}, 
\eqref{eq_tau_tang_permutation} and 
\eqref{eq_tau_tang_full_cycle} show that $\tau_{\rm tang}$ for
the full cycle model is roughly twice as large as for the
permutation model; hence the model, for the same base graph
$B$, can make a significant difference in the probability
of having non-Alon eigenvalues.

\section{Definitions Regarding Walks and Expected Traces}
\label{se_term2_walks_traces}

% Note: Major revision begins Feb 3

In this section we give some background needed to for our trace methods,
and introduce some terminology necessary for our discussion of
asymptotic expansions of the $\cC_n(B)$-expected values of the 
the traces we use.

\subsection{Walks and Adjacency Matrices}

\begin{definition}\label{de_walks} 
Let $G$ be a digraph and $k\in\integers_{\ge 0}$.
A \emph{walk of
length \(k\) in \(G\) from $v_0$ to $v_k$}
(alternatively {\em originating in $v_0$} and {\em terminating in $v_k$})
is an alternating sequence of vertices and directed edges, 
\begin{equation}\label{eq_walk_def}
w=(v_0, e_1,
v_1, e_2, v_2, \ldots, e_k, v_k )
\end{equation}
for which \( t_G(e_i) = v_{i-1} \) and
\(h_G(e_i) = v_i \) for all $i\in [k]$.
We say that a walk is
{\em closed} if \( v_0 = v_k \).
A {\em walk} (respectively, {\em of length $k$}, and/or {\em closed})
in a graph 
is a walk (of length $k$, and/or closed)
its underlying directed graph.
\end{definition}
In the above definition, a walk of length $k=0$ is simply a vertex, and
a walk of length
$k\ge 1$ can be inferred from its sequence $e_1,\ldots,e_k$ of directed edges
(and the knowledge of $h_G$ and $t_G$).

It is a standard fact that the number of walks of length $k$ from $e$ to $e'$
equals the $e,e'$ entry of $A_G^k$, and hence
\begin{equation}\label{eq_adjacency_traces}
\Trace(A_G^k) = \# \{ \mbox{closed walks of length $k$ in $G$} \} \ .
\end{equation}

\subsection{SNBC Walks and the Hashimoto or Non-backtracking Matrix}

Graphs, unlike directed graphs, have a notion of {\em non-backtracking}
walk and numerous related notions crucial to our trace methods.

\begin{definition}\label{de_nonback_oriented}
Let $G$ be a graph, and $w$ a walk in $G$ given by
\eqref{eq_walk_def}. 
We say that $w$ is {\em non-backtracking}
if $\iota_G e_{i+1}\ne e_i$ for $i=1,\ldots,k-1$, and moreover
{\em strictly non-backtracking closed},
abbreviated {\em SNBC},
if in addition $\iota_G e_k\ne e_1$.  
The {\em oriented line graph of $G$},
denoted $\Line(G)$,
is the directed graph given by $V_{\Line(G)}=\Edir_G$, and
$\Edir_G\subset V_{\Line(G)}\times V_{\Line(G)}$ is the
subset of pairs $(e,e')$
such that
$(e,e')$ forms a non-backtracking
walk of length $2$ in $G$, and the tail and head of $(e,e')$ are,
respectively, $e$ and $e'$.
The {\em Hashimoto matrix}
of $G$ is the adjacency
matrix, $H_G$,
of its oriented line graph.
We use $\mu_1(G)$ to denote the the Perron-Frobenius
eigenvalue of $H_G$, and $\mu_2(G),\ldots,\mu_m(G)$ with
$m=\#\Edir_G$ (the cardinality of $\Edir_G$) to denote the other
eigenvalues of $H_G$, ordered arbitrarily.
\end{definition}

The term {\em Hashimoto matrix} is used in 
\cite{st1,st2,st3,terras_zeta} based on Hashimoto's work
\cite{hashimoto1}, which
connects to Ihara's \cite{ihara},
although Serre (\cite{serre_arbres}, page~5) first connected
Ihara's work to graph theory and 
Sunada \cite{sunada1} describes the Hashimoto matrix and the
connected to Ihara's Zeta function for regular graphs.
The Hashimoto matrix is also called the {\em non-backtracking matrix} 
elsewhere in the literature.

\begin{definition}\label{de_SNBC}
For any graph, $G$, and $k,r\in\naturals$, we use $\SNBC(G,k)$ to denote
the set of strictly non-backtracking walks of length $k$ in $G$; 
we use $\snbc(G,k)$ 
to denote the cardinality of
$\SNBC(G,k)$.
\end{definition}

We easily see that for $k\ge 1$, the map
$$
(v_0,e_1,\ldots,e_k,v_k) \mapsto (e_1,e_2,\ldots,e_k,e_1)
$$
gives a bijection from $\SNBC(G,k)$ to closed walks in ${\rm Line}(G)$
of length $k$ (where we omit the directed edges in ${\rm Line}(G)$
since there is at most one edge from one vertex in ${\rm Line}(G)$
to another).
Hence,
in view of \eqref{eq_adjacency_traces}, we have
\begin{equation}\label{eq_hashimoto_traces}
\Trace(H_G^k) = \snbc(G,k) .
\end{equation} 

The trace methods we use most directly work with the $G\in\cC_n(B)$
expected value of \eqref{eq_hashimoto_traces} rather than
\eqref{eq_adjacency_traces}.
For $d$-regular graphs, $G$, it is known that $\mu_1(G)=d-1$,
and that for every $\epsilon>0$ there
is a $\delta> 0$ such that
\begin{equation}\label{eq_adj_iff_hash}
\max_{i>1} \bigl| \lambda_i \bigr| \le 2\sqrt{d-1} + \epsilon
\quad\iff\quad
\max_{j>1} \bigl| \mu_j \bigr| \le \sqrt{d-1} + \delta .
\end{equation} 
Both of these facts 
follow from the following version of the {\em Ihara determinantal formula}
that we will prove in Article~V, which states that 
\begin{equation}\label{eq_Ihara_det}
\det(u I -H_G) = \det\bigl(u^2 I - u A_G + (D_G-I) \bigr)
(u+1)^{o_1(G)}
(u^2-1)^{o_2(G)-n} , 
\end{equation}
where $o_1(G)$ is the number of half-loops in $G$ and $o_2(G)$ is the
number of edges that are not half-loops.
Our proof is a straightforward generalization of proofs of this 
well-known result for graphs without half-loops (see \cite{terras_zeta}).
It follows that \eqref{eq_adj_iff_hash} holds if we replace
the $\lambda_i$ with $i>1$ with the new adjacency eigenvalues of $G$,
and the $\mu_j$ with $j>1$ with the new adjacency eigenvalues of the
directed graph $\Line(G)$.

In view of the above remarks and the fact that
\begin{equation}\label{eq_new_Hashimoto}
\sum_{\mu\in \specnew_B(H_G)} \mu^k = \Trace(H_G^k) - \Trace(H_B^k)
={\rm snbc}(G,k)-{\rm snbc}(B,k)  \ ,
\end{equation}
we focus on estimating
the $G\in\cC_n(B)$ expected value of $\snbc(G,k)$, whose
dominant term we expect to be
$\Trace(H_B^k)$ for $k$ small relative to $n$
% (say $1\le k\le g(n)$ for some $g$ with $g(n)=o(n^{1/2})$).
(namely for $k=o(n^{1/2})$).

\subsection{The Visited Subgraph of a Walk}
\label{su_visited_subgraph}

Our trace methods will ultimately organize all walks by their
{\em homotopy type}, and more finely by their
{\em visited subgraph} which we now define.
% which consists of some data regarding the walk that can
% be understood in terms of the {\em visited subgraph} of the walk.

\begin{definition}\label{de_visited_subgraph}
Let $w=(v_0,e_1,v_1,\ldots,e_k,v_k)$ be a walk in a digraph
(respectively, graph) $G$.
The {\em visited subgraph} of $w$, denoted $\ViSu_G(w)$,
is the smallest subdigraph (respectively, subgraph)
of $G$ containing all elements of $w$.
\end{definition}
% In \cite{broder} $\ViSu_G(w)$
% is called the {\em trajectory of $w$}.
We warn the reader that when $G$ is a graph,
then $S=\ViSu_G(w)$ depends on $G$, i.e.,
on the map $\iota_G$, rather than on $v_0,e_1,\ldots,e_k,v_k$ alone
(since $\Edir_S$ includes $\iota_G e_i$
for $i\in[k]$, so we need to know $\iota_G$ to determine $\Edir_S$);
by contrast $\ViSu_G(w)$ does not depend on $G$, i.e., can be 
inferred from the sequence $(v_0,\ldots,e_k,v_k)$ alone,
if $G$ is a digraph (this is very easy to see) or if $G$ is a 
coordinatized graph (this is not hard to see).

\subsection{The Order of a Graph and Its Fundamental Properties}

\begin{definition}\label{de_order}
Let $S$ be a graph.  We define its
{\em order} to be
$$
\ord(S) \eqdef (\#E_S)-(\#V_S)\ ;
$$
we define the order of a walk, $w$, to be $\ord(S)$ where
$S=\ViSu(w)$.
We let $\snbc_{r}(G,k)$ 
(respectively $\snbc_{<r}(G,k)$ and $\snbc_{\ge r}(G,k)$)
denote the number of SNBC walks in $G$ of
length $k$ and order $r$ (respectively $<r$ and $\ge r$).
\end{definition}
Note that each half-loop contributes $1$ to the order.
The order of a graph is fundamental to all of our trace
methods since
for our basic models
$\cC_n(B)$ we will prove that 
$$
\EE_{G\in\cC_n(B)}
\bigl[\#\{\mbox{subgraphs of $G$ isomorphic to $S$} \} \bigr]
= c n^{-\ord(S)}(1+o(1/n))
$$
for some $c>0$, unless this expected value is $0$ for all $n$
sufficiently large; see also
Definitions~\ref{de_strongly_algebraic} and
\ref{de_algebraic}.

\subsection{$(B,\nu)$-Bounded Functions and Expansions}

For $d\in\naturals$, \cite{friedman_alon} formally defined
{\em $d$-Ramanujan functions}, which are fundamental to our
{\em asymptotic expansions} of expected traces; these are also
implicit in \cite{broder,friedman_random_graphs}.  Here we
define a generalization of this notion, needed when $\#V_B>1$.

\begin{definition}\label{de_polyexponential_growth}
By a {\em (univariate) polyexponential function} we mean a function
$\naturals\to\complex$ (or sometimes $\integers_{\ge 0}\to\complex$
of the form
$$
f(k) = \sum_{i=1}^m \nu_i^k p_i(k) \ ,
$$
where $\nu_i\in\complex$ and the $p_i$ are polynomials with coefficients
in $\complex$, where we understand that if $\nu_i=0$ then the
expression $\nu_i^k p_i(k)$ refers to some
function that vanishes for sufficiently
large $k$; 
we refer to the $\nu_i$ as the {\em bases} of $f$, and the
$p_i$ as the {\em polynomials} of $f$.
We say that a function $f\from\naturals\to\complex$ (or
$\integers_{\ge 0}\to\complex$) is 
{\em of growth $\rho$} for a real $\rho\ge 0$ if for every 
$\epsilon>0$ 
$$
|f(k)|\le (\rho+\epsilon)^k.
$$
for sufficiently large $k$.
\end{definition}

Jordan canonical form shows that
if $M$ is an $n\times n$ matrix, then for any $i,j\in[n]$,
$f(k)\eqdef (M^k)_{i,j}$ is linear combination of functions of the form
$$
k(k-1)\ldots(k-\ell+1)\nu^{k-\ell}
$$
where $\nu$ is an eigenvalue of $M$ and $\ell$ is any integer less than
the maximum Jordan block of $M$ associated to the eigenvalue of $\nu$.
Hence $f(k)$ is 
a polyexponential function whose bases
are the eigenvalues of $M$; this also explains our convention regarding
$\nu_i=0$ in Definition~\ref{de_polyexponential_growth}.

\begin{definition}\label{de_Ramanujan_function}
Let $B$ be a graph and $\nu\ge 1$ a real number.
By a {\em $(B,\nu)$-bounded function} we mean a function
$f\from\naturals\to\complex$
that is the sum of a function of growth $\nu$ plus a
polyexponential function whose bases are bounded in absolute
value by $\mu_1(B)$;
furthermore we say that $f$ is a {\em $(B,\nu)$-Ramanujan function} 
if all these bases are eigenvalues of $H_B$.
\end{definition}

\begin{definition}\label{de_Ramanujan_expansion}
Let $N\subset \naturals$ be an infinite subset.
Let $f=f(k,n)$ be a function $\naturals\times N\to\complex$,
let $B$ be a graph,
$\nu>0$ a real number, and 
$r\in\naturals$.
We say that $f$ has a
{\em $(B,\nu)$-Ramanujan (respectively, $(B,\nu)$-bounded)
asymptotic expansion to order $r$} if
there is a function $c_r(k)$ of growth bounded by $\mu_1(B)$,
and $(B,\nu)$-Ramanujan (respectively, $(B,\nu)$-bounded)
functions $c_0(k),\ldots,c_{r-1}(k)$, and a constant
$C$ such that
$$
f(k,n)=c_0(k)+c_1(k)/n+\cdots+c_{r-1}(k)/n^{r-1}+
% {\rm error}_r(k,n),
O(1)c_r(k)/n^r,
$$
for all $1\le k\le n^{1/2}/C$ and $n\in N$
(where the constant in $O(1)$ is
universal, i.e., independent of $k,n$).
\end{definition}
In our trace methods,
$c_r(k)$ is actually bounded by $O(k^{2r}\mu_1^k(B))$, so the above
definition is looser but simpler.
Also, for our trace methods
it suffices to have the expansion to hold in the range $1\le k\le g(n)$ 
for any $g(n)\gg \log n$; however, typically the expansions are
valid in the above larger range of $k$.

\begin{example}
Theorem~2.18 of
\cite{friedman_random_graphs} implies that if $d$ is even and $B$ has
$d/2$ whole loops, then
\begin{equation}\label{eq_expected_snbc}
f(k,n) \eqdef 
\EE_{G\in\cC_n(B)}[\snbc(G,k)]
=\EE_{G\in\cC_n(B)}[\trace(H_G^k)]
\end{equation} 
has a $(B,\nu)$-Ramanujan expansion of order $r$, where $r$ is proportional
to $d^{1/2}$; Theorem~2.12 of
\cite{friedman_alon} implies that $f$ above cannot have a
$(B,\nu)$-Ramanujan expansion to order roughly $d^{1/2}\log d$, due 
to {\em tangles} that we describe below.
\end{example}

[In the above example the word ``implies'' is used since the theorems quoted
above work with the expected values of $A_G^k$, not $H_G^k$;
\eqref{eq_Ihara_det} allows us to translate between asymptotic expansions
regarding such expansions when $B$ is regular.]

\subsection{Tangles}

\begin{definition}\label{de_tangle}
Let $\nu>1$ be a real number, and $r\in\naturals$.
We say that a connected graph, $\psi$, is a
{\em $\ge\nu$-tangle} (respectively, {\em $(\ge\nu,<r)$-tangle})
if
$\mu_1(\psi)\ge\nu$ (respectively, and $\ord(\psi)<r$) and
all vertices of $\psi$ have degree at least two.
We use
$$
{\rm TangleFree}(\ge\nu,<r) \quad\mbox{and}\quad
{\rm HasTangles}(\ge\nu,<r)
$$
to denote, respectively,
the class of graphs, $G$, such that no (respectively, some)
subgraph of $G$ is
a $(\ge\nu,<r)$-tangle;
we refer to the elements of this
class as {\em $(\ge\nu,<r)$-tangle-free}
(respectively, {\em having $(\ge\nu,<r)$-tangles}).
\end{definition}
In Article~III we will prove that for any $\nu,r$ there is a finite
set of graphs, $\psi_1,\ldots,\psi_s$ such that any graph in
${\rm HasTangles}(\ge\nu,<r)$ has a subgraph isomorphic to some
$\psi_i$ (see also Lemma~9.2 of
\cite{friedman_alon}); this is not true for
${\rm HasTangles}(>\nu,<r)$ defined in the evident sense, i.e.,
replacing the condition 
$\mu_1(\psi)\ge\nu$ with $\mu_1(\psi)>\nu$.
For this reason it will turn out to be crucial---at least in our
methods---that we work with the condition
$\mu_1(\psi)\ge\nu$ and not $\mu_1(\psi)>\nu$ when we speak of tangles;
by contrast, the condition $<r$ is equivalent to $\le r-1$, and it is
not crucial (but mildly more notationally convenient) to work with
the strict inequality $<r$.
We write $(\ge\nu,<r)$
instead of $(\nu,r)$ to emphasize the weak inequality $\ge\nu$
and the strict inequality $<r$.

\section{Main Theorems in This Series of Articles}
\label{se_main_theorems}

In this section we will state the main theorems in this series of 
articles and comment on their proofs.
More comments on these proofs will be made in
Sections~\ref{se_art_expansion}--\ref{se_art_sharp} where we discuss
the individual contents of each of Articles~II--VI.
Complete proofs of the theorems appear in Articles~II--VI.

Our theorems are valid for all of our basic models
(Definition~\ref{de_models}); however, in this section we often
use the term {\em algebraic models}
(Definition~\ref{de_algebraic}), since this is a wider class of
models and a simpler setting for our proofs.
Hence the reader is free to substitute ``our basic models''
for any theorem in this section
in which the term ``algebraic models'' appears.

\subsection{The Tangle Power}

\begin{definition}\label{de_occurs}
Let $\{\cC_n(B)\}_{n\in N}$ be a model over a graph, $B$.
We say that a graph, $S$, {\em occurs in $\{\cC_n(B)\}_{n\in N}$} if
for all sufficiently large $n\in N$ there is an element $G\in\cC_n(B)$ such
that some subgraph of $G$ is isomorphic to $S$.
\end{definition}
In our basic models, and the {\em algebraic models} that we define
later, it turns out that if $S$ occurs in $\{\cC_n(B)\}_{n\in N}$
then there are $C_1,C_2>0$ for which
\begin{equation}\label{eq_contains}
C_1 n^{-\ord(S)} \le 
\Prob_{G\in\cC_n(B)}[\mbox{$G$ contains a subgraph isomorphic to $S$}]
\le C_2 n^{-\ord(S)} 
\end{equation}
for all sufficiently large $n$
(i.e., aside for some small values of $n$ where
the above probability can be zero);
the lower bound is proven in Article~III 
(see Theorem~\ref{th_extra_needed} below), and the upper bound
follows easily from the definition of {\em algebraic} model,
Definition~\ref{de_algebraic} below.

To motivate the above definition, we note that if $n(\#V_B)<\#V_S$,
then no $G\in\cC_n(B)$ has a subgraph isomorphic to
$S$, since $\#V_G<\#V_S$.  Hence the term ``sufficiently large''
is necessary.  Furthermore, our models of interest will have a
dichotomy: either $S$ occurs in the model $\{\cC_n(B)\}_{n\in N}$,
or for sufficiently large $n$ no $G\in\cC_n(B)$ will contain
a subgraph isomorphic to $S$. 

\begin{definition}\label{de_tau_tang}
Let $\{\cC_n(B)\}_{n\in N}$ be a model over a graph, $B$.
By the {\em tangle power of $\{\cC_n(B)\}$}, denoted $\tau_{\rm tang}$,
we mean the smallest order, $\ord(S)$, of any graph, $S$, that
occurs in $\{\cC_n(B)\}$ and satisfies
$\mu_1(S)>\mu_1^{1/2}(B)$.
\end{definition}

The tangle power is relatively easy to bound from below.  In fact,
in Article~VI we use
the results of Section~6.3 of \cite{friedman_alon}
to prove that for any algebraic model over a $d$-regular graph, $B$,
$$
\tau_{\rm tang} \ge  m=m(d)
$$
where
$$
m(d) =
\Bigl\lfloor \bigl( (d-1)^{1/2} - 1 \bigr)/2  \Bigr\rfloor +1 ,
$$
and for each even $d\ge 4$ there are $d$-regular $B$ where equality holds
(namely, if $B$ has $m(d)+1$ whole-loops about some vertex,
which is the case if $B$ is a bouquet of $d/2$ whole loops).

\subsection{The HasTangles Probability Upper Bound}

Our interest
in the tangle power, $\tau_{\rm tang}$, of an
algebraic model is explained by the theorems we state in this subsection
and the next.
These theorems deal with non-Alon eigenvalues
for the event ${\rm HasTangles}(\ge\nu,<r)$, and they are relatively
easy to prove.

\begin{theorem}\label{th_hastangles_upper_bound}
Let $\{\cC_n(B)\}_{n\in N}$ be an algebraic model of tangle
power $\tau_{\rm tang}$ over a graph, $B$.
For every $\nu>(d-1)^{1/2}$ and 
any $r\in\naturals$
there is a constant $C=C(\nu,r)$ such that
\begin{equation}\label{eq_hastangles_upper_bound}
\Prob_{G\in\cC_n(B)}[ 
{\rm HasTangles}(\ge\nu,<r)
 ] \le
C(\nu,r) n^{-\tau_{\rm tang}} .
\end{equation} 
\end{theorem}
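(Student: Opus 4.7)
The plan is straightforward given the structural results cited in the excerpt: I would reduce the event ${\rm HasTangles}(\ge\nu,<r)$ to the containment of one of finitely many explicit subgraphs, and then combine the subgraph-occurrence upper bound from \eqref{eq_contains} with the order lower bound supplied by the definition of tangle power.

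\emph{Finiteness of a covering family of tangles.} First I would invoke the structural fact, promised for Article~III (and descended from Lemma~9.2 of \cite{friedman_alon}), that there are finitely many $(\ge\nu,<r)$-tangles $\psi_1,\dots,\psi_s$ with the property that every $G\in{\rm HasTangles}(\ge\nu,<r)$ contains some $\psi_i$ as a subgraph. The remark following Definition~\ref{de_tangle} explicitly flags that the weak inequality $\mu_1(\psi)\ge\nu$ is essential for this finiteness. A union bound then gives
$$
\Prob_{G\in\cC_n(B)}\bigl[{\rm HasTangles}(\ge\nu,<r)\bigr]
\;\le\;
\sum_{i=1}^{s}\Prob_{G\in\cC_n(B)}[\,\psi_i\text{ embeds in }G\,].
$$

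\emph{Bounding each summand.} For each $i$ I would split into two cases. If $\psi_i$ does not occur in $\{\cC_n(B)\}$ in the sense of Definition~\ref{de_occurs}, the corresponding probability vanishes for all sufficiently large $n$. Otherwise $\psi_i$ occurs in the model, and the upper half of \eqref{eq_contains}, which holds for algebraic models essentially by definition, produces a constant $C_{2,i}$ with
$$
\Prob_{G\in\cC_n(B)}[\,\psi_i\text{ embeds in }G\,] \;\le\; C_{2,i}\,n^{-\ord(\psi_i)}.
$$
Since $\psi_i$ is a $(\ge\nu,<r)$-tangle with $\nu>(d-1)^{1/2}=\mu_1^{1/2}(B)$ (using that $B$ is $d$-regular in this setting), we have $\mu_1(\psi_i)>\mu_1^{1/2}(B)$. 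Combined with the fact that $\psi_i$ occurs in the model, Definition~\ref{de_tau_tang} forces $\ord(\psi_i)\ge\tau_{\rm tang}$, so the $i$-th summand is at most $C_{2,i}\,n^{-\tau_{\rm tang}}$.

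Summing the finitely many contributions, absorbing the constants $C_{2,i}$ and a possible enlargement needed to dominate the non-occurring $\psi_i$ at the finitely many small values of $n$, gives a single constant $C(\nu,r)$ realizing \eqref{eq_hastangles_upper_bound}. The main obstacle is not in this short derivation but in the two inputs being quoted: the finite covering family of tangles from Article~III, and the $\le C_2 n^{-\ord(S)}$ half of \eqref{eq_contains} that is advertised as an immediate consequence of the (yet-to-be-stated) definition of an algebraic model. With those in hand, the theorem is a union bound.
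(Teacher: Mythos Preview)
Your proposal is correct and follows essentially the same route as the paper's sketch: invoke the finiteness of minimal $(\ge\nu,<r)$-tangles (the ``compactness'' result from Lemma~9.2 of \cite{friedman_alon}, proved in Article~III), bound each containment probability by $O(n^{-\ord(\psi_i)})$ via \eqref{eq_contains}, use $\mu_1(\psi_i)\ge\nu>(d-1)^{1/2}$ together with Definition~\ref{de_tau_tang} to get $\ord(\psi_i)\ge\tau_{\rm tang}$, and finish with a union bound. Your case split on whether $\psi_i$ occurs in the model is a nice touch that the paper's sketch leaves implicit.
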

Hence the condition that a $G\in\cC_n(B)$ has a $(\ge\nu,<r)$-tangle
for any $r$ and any $\nu$ of interest to us---namely those
$\nu>(d-1)^{1/2}$---is an event of probability 
of order at most $n^{-\tau_{\rm tang}}$.

This theorem is not difficult to prove and is based on
the proof of Lemma~9.2
in \cite{friedman_alon}; let us sketch the proof: 
a ``compactness'' property
of variable-length graphs shows that there are only finitely
many graphs that are minimal with respect to inclusion
among all graphs, $S$, satisfying
$\mu_1(S)\ge\nu$ and $\ord(S)<r$.
Since any such $S$ occurs as a subgraph of some $G\in\cC_n(B)$
with probability $O(n^{-\ord(S)})$, which is
$O(n^{-\tau_{\rm tang}})$ for $\mu_1(S)>(d-1)^{1/2}$,
the union bound completes the proof.

We remark that the constant $C=C(\nu,r)$ in
Theorem~\ref{th_hastangles_upper_bound} seems very difficult to
bound: it depends---at least in
principle---on the number of minimal $S$ with
$\mu_1(S)\ge\nu$ and $\ord(S)<r$;
the ``compactness'' approach above
gives no effective bound on this number of $S$.

\subsection{The HasTangles Probability Lower Bound}

To complement Theorem~\ref{th_hastangles_upper_bound} we give the 
following lower bound.

\begin{theorem}\label{th_hastangles_lower_bound}
Let $\{\cC_n(B)\}_{n\in N}$ be an algebraic model of tangle
power $\tau_{\rm tang}$ over a graph, $B$.
Let $S$ be a connected graph that occurs in $\cC_n(B)$ with
$\ord(S)=\tau_{\rm tang}$ and $\mu_1(S)>(d-1)^{1/2}$.
Then for any $r>\ord(S)$ and $\nu\le\mu_1(S)$,
there is a constant $C'>0$ and $\epsilon_0>0$ such that
% $$
% (d-1)^{1/2}< \nu < \mu_1(S)-\epsilon
% $$
% there is a constant $C'>0$ such that
for all $n$ sufficiently large we have
$$
\Prob_{G\in\cC_n(B)}\Bigl[ 
\bigl( G \in {\rm HasTangles}(\ge\nu,<r) \bigr)
\ \mbox{\rm and}\  %
\bigl({\rm NonAlon}_d(G;\epsilon_0)>0  \bigr)
\Bigr] \ge
C' n^{-\tau_{\rm tang}}  .
$$
\end{theorem}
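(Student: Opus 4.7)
The plan is to exhibit, inside a typical $G\in\cC_n(B)$, a single fixed witness subgraph that forces both events simultaneously. Concretely, we will use the given $S$ (after a minor normalization) and show that its appearance inside $G$---an event of probability of order $n^{-\tau_{\rm tang}}$---entails both that $G$ has a $(\ge\nu,<r)$-tangle and that $G$ has a non-Alon new eigenvalue.

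First I would replace $S$ by its $2$-core, obtained by iteratively deleting vertices of degree $1$ together with their incident edges. Each such deletion removes one vertex and one edge, so $\ord(S)$ is unchanged; and since a strictly non-backtracking walk entering a pendant vertex cannot leave without backtracking, pruning does not alter the non-zero spectrum of the Hashimoto matrix, so $\mu_1(S)$ is also preserved. The $2$-core is a subgraph of $S$ and therefore still occurs in $\{\cC_n(B)\}$. Thus we may assume $S$ itself has minimum degree at least $2$, in which case $S$ is a $(\ge\nu,<r)$-tangle, since $\mu_1(S)\ge\nu$ and $\ord(S)=\tau_{\rm tang}<r$. The lower bound direction of \eqref{eq_contains}, stated as Theorem~\ref{th_extra_needed}, then supplies a constant $C_1>0$ such that for all $n$ sufficiently large
$$
\Prob_{G\in\cC_n(B)}\bigl[\mbox{$G$ contains a subgraph isomorphic to $S$}\bigr]\ge C_1\, n^{-\ord(S)}=C_1\, n^{-\tau_{\rm tang}},
$$
and containment of such a copy immediately places $G$ in ${\rm HasTangles}(\ge\nu,<r)$.

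Next I would invoke the ``tangle forces a non-Alon new eigenvalue'' theorem from Article~V (previewed in the outline as the assertion that, for a fixed tangle, any covering graph of sufficiently high degree containing it has a non-Alon new eigenvalue). That result supplies $\epsilon_0>0$ and $n_0$, depending only on $S$ and $B$, such that whenever $n\ge n_0$ and $G\in\cC_n(B)$ contains $S$, we have ${\rm NonAlon}_B(G;\epsilon_0)>0$. Morally, the Perron eigenvector of $H_S$ extended by zero provides a Rayleigh-quotient witness to an eigenvalue of $H_G$ of size at least $\mu_1(S)>\sqrt{d-1}$, which the Ihara formula \eqref{eq_Ihara_det} converts into an adjacency eigenvalue exceeding $2\sqrt{d-1}$ by a definite amount, namely
$$
\epsilon_0\;=\;\mu_1(S)+\frac{d-1}{\mu_1(S)}-2\sqrt{d-1}\;>\;0.
$$

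Combining the two ingredients, for all $n\ge n_0$ the event that $G$ contains a copy of $S$ is contained in the joint event appearing in the statement, so its probability is bounded below by $C_1 n^{-\tau_{\rm tang}}$, yielding the theorem with $C'=C_1$. The only delicate step is the third one: one must verify that the eigenvalue created by the subgraph $S$ is genuinely \emph{new}, i.e., not inherited from $B$, and that it persists as $n\to\infty$. This is exactly the content of the Article~V spectral lemma, and is the main obstacle hiding behind the otherwise purely combinatorial skeleton above.
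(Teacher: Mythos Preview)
Your proposal is correct and follows the paper's approach exactly: invoke Theorem~\ref{th_extra_needed} for the $n^{-\tau_{\rm tang}}$ lower bound on the probability that $G$ contains $S$, and Theorem~\ref{th_relative_friedman_tillich} (the Article~V spectral lemma) to force a non-Alon new eigenvalue whenever $S$ is present; your explicit pruning step is a welcome detail the paper leaves implicit. One minor fix: the displayed $\epsilon_0$ is slightly too large, since Theorem~\ref{th_relative_friedman_tillich} only guarantees a new eigenvalue exceeding $\mu_1(S)+(d-1)/\mu_1(S)-\epsilon(n)$, so take, e.g., half your value and absorb $\epsilon(n)<\epsilon_0$ into the choice of $n_0$.
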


This theorem is based on two facts:
first, 
% that the event that an 
$S$, as above,
is a subgraph of $G\in\cC_n(B)$ with probability at
least $C' n^{-\tau_{\rm tang}}$ (this follows from
Theorem~\ref{th_extra_needed} below, proven in Article~III).
Second,
in Article~V,
using the
methods of Friedman-Tillich \cite{friedman_tillich_generalized} and the
``Curious Theorem'' of \cite{friedman_alon}, we prove the following
theorem.

\begin{theorem}\label{th_relative_friedman_tillich}
% Let $\{\cC_n(B)\}_{n\in N}$ be an algebraic model over a $d$-regular
% graph, $B$, and let
Let $B$ be a $d$-regular graph, and
$S$ be any graph with $\mu_1(S)\ge (d-1)^{1/2}$.
If $G\to B$ is a covering map of degree $n$, and $G$ has
% \in\Coord_n(B)$ has 
a subgraph isomorphic to $S$, then $G\to B$
has a new adjacency eigenvalue greater than
$$
\mu_1(S)+\frac{d-1}{\mu_1(S)} - \epsilon(n)
$$
where $\epsilon(n)\to 0$ as $n\to\infty$.
\end{theorem}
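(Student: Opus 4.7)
The plan is to construct a nearly-new test function $\phi_R$ on $V_G$ whose Rayleigh quotient with the symmetric operator $A_G$ is close to $\mu+(d-1)/\mu$, where $\mu = \mu_1(S)$, and then to invoke the min-max principle for $A_G$ restricted to its $A_G$-invariant new subspace. Two ingredients combine: a generalized Friedman-Tillich construction \cite{friedman_tillich_generalized} that uses the Perron eigenvector of $H_S$ to anchor a test function extending ``outward'' into $G$ along non-backtracking walks (mimicking the spherical functions on the $d$-regular tree at spectral parameter $\mu+(d-1)/\mu$), and a projection step that removes the small old component arising from the covering $\pi\from G\to B$.

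For the first ingredient, let $\mu = \mu_1(S)$ and let $f\from\Edir_S\to\mathbb{R}_{>0}$ be the Perron-Frobenius eigenvector of $H_S$, so $H_S f = \mu f$. The spherical equation $(d-1)c^2-\lambda c+1=0$ on the $d$-regular tree with $\lambda = \mu+(d-1)/\mu$ has the two roots $c_+ = \mu/(d-1)$ and $c_- = 1/\mu$. For each integer $R\ge 1$, I extend $f$ from $\Edir_S$ outward along non-backtracking walks in $G$ of length at most $R$, assigning to an edge at depth $\ell$ a linear combination of $c_+^\ell$ and $c_-^\ell$ with coefficients chosen so that the radial profile vanishes at depth $R+1$; the Perron property of $f$ makes the extension match seamlessly along the boundary of $S$. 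Pass from the edge function to a vertex function $\phi_R\from V_G\to\mathbb{R}$ via the Ihara correspondence associated with the determinantal identity of the paper. The resulting $\phi_R$ has support contained in the radius-$R$ vertex neighborhood of $V_S$, of size at most $|V_S|\,d^R$, and satisfies
$$
\langle A_G\phi_R,\phi_R\rangle \ge \bigl(\mu+\tfrac{d-1}{\mu}-\delta(R)\bigr)\,\|\phi_R\|^2
$$
with $\delta(R)\to 0$ as $R\to\infty$; the vanishing boundary condition at depth $R+1$ is what drives $\delta(R)$ to zero, rather than staying bounded below as it would for a naive single-exponential extension.

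For the second ingredient, let $\Pi_{\text{old}}$ denote orthogonal projection onto pullbacks of functions from $V_B$. Since $A_G$ commutes with $\Pi_{\text{old}}$, the new part $\phi_R^{\text{new}} := \phi_R - \Pi_{\text{old}}\phi_R$ satisfies
$$
\langle A_G\phi_R^{\text{new}},\phi_R^{\text{new}}\rangle = \langle A_G\phi_R,\phi_R\rangle - \langle A_G\Pi_{\text{old}}\phi_R,\Pi_{\text{old}}\phi_R\rangle,
$$
and a fiberwise Cauchy-Schwarz bound gives $\|\Pi_{\text{old}}\phi_R\|^2 \le (|V_S|\,d^R/n)\,\|\phi_R\|^2$, using that each fiber of $\pi_V$ meets the support of $\phi_R$ in at most $|V_S|\,d^R$ points. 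Choosing $R = R(n)$ with $R(n)\to\infty$ and $d^{R(n)}/n\to 0$ (for instance $R(n)=\lfloor \tfrac12 \log_d n\rfloor$), both $\delta(R(n))$ and the projection error vanish as $n\to\infty$; min-max for the symmetric operator $A_G$ on the $A_G$-invariant new subspace then produces a new adjacency eigenvalue greater than $\mu+(d-1)/\mu - \epsilon(n)$ with $\epsilon(n)\to 0$.

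The main obstacle is in the first ingredient when the radius-$R$ neighborhood of $V_S$ in $G$ is not tree-like: non-backtracking walks issuing from different base edges of $S$ may collide, so the ``spherical'' extension is not pointwise well-defined, and one cannot simply quote the computation on the $d$-regular tree. Following the strategy of \cite{friedman_tillich_generalized} and the ``Curious Theorem'' of \cite{friedman_alon} cited in the statement, one defines $\phi_R$ as an honest sum of contributions over non-backtracking walks, so collisions in $G$ simply add; the positivity of $f$ together with Cauchy-Schwarz preserves the Rayleigh-quotient estimate even in the presence of such collisions, and this is what makes the theorem hold for arbitrary (not just tree-like) covers $G\to B$.
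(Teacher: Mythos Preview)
Your plan matches the paper's approach: Article~I does not prove this theorem but defers it to Article~V, stating only that the proof combines the Friedman--Tillich test-function construction \cite{friedman_tillich_generalized} with the ``Curious Theorem'' of \cite{friedman_alon}, which is precisely the pair of ingredients you invoke. Your additional projection step onto the new subspace (with the Cauchy--Schwarz bound $\|\Pi_{\text{old}}\phi_R\|^2 \le (|V_S|\,d^R/n)\|\phi_R\|^2$ and the choice $R(n)\asymp \tfrac12\log_d n$) is the natural way to pass from a large Rayleigh quotient on all of $V_G$ to a large \emph{new} eigenvalue, and is correct. One small point to watch: at the boundary case $\mu_1(S)=(d-1)^{1/2}$ your two roots $c_\pm$ coincide, so the radial profile must be taken as $(\alpha+\beta\ell)c^\ell$ rather than a combination of two geometric sequences; this is the standard degeneracy in the spherical-function computation and does not affect the conclusion.
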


\subsection{The Expected Number of Non-Alon Eigenvalues in
Tangle-Free Graphs}

\begin{theorem}\label{th_rel_Alon_regular2}
Let $\cC_n(B)$ be an
algebraic model
over a $d$-regular graph $B$.
For any $\nu$ with
$(d-1)^{1/2}<\nu<d-1$, 
let $\epsilon'$ be given by
\begin{equation}\label{eq_nu_epsilon_prime}
2(d-1)^{1/2}+\epsilon' = \nu + \frac{d-1}{\nu}
\end{equation} 
(since $\nu+(d-1)/\nu$ is a monotone increasing function in
$\mu>(d-1)^{1/2}$, we have that $\epsilon'>0$).
Then either
\begin{enumerate}
\item
there is an
integer $\tau=\tau_{\rm alg}(\nu,r)\ge 1$ such that
for any sufficiently small $\epsilon>0$ there are constants
$C=C(\epsilon),C'>0$ such that for sufficiently large $n$ we have
\begin{equation}\label{eq_rel_alon_expect_lower_and_upper}
n^{-\tau} C'
\le
\EE_{G\in\cC_n(B)}[ \II_{{\rm TangleFree}(\ge\nu,<r)}(G)
{\rm NonAlon}_d(G;\epsilon'+\epsilon) ] 
\le 
n^{-\tau}   C(\epsilon) ,
\end{equation} 
or
\item
for all $j\in\naturals$ and $\epsilon>0$ we have
$$
\EE_{G\in\cC_n(B)}[ \II_{{\rm TangleFree}(\ge\nu,<r)}(G)
{\rm NonAlon}_d(G;\epsilon'+\epsilon) ] \le O(n^{-j}) 
$$
in which case we use the notation $\tau_{\rm alg}(\nu,r)=+\infty$.
\end{enumerate}
\end{theorem}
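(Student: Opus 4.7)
The plan is to combine four ingredients of the series: the asymptotic-expansion machinery of Article~II, the certified-trace set-up of Article~III, the sidestepping theorem of Article~IV, and the Ihara identity \eqref{eq_Ihara_det}. The central object is
$$
f(k,n) \eqdef \EE_{G\in\cC_n(B)}\bigl[\II_{{\rm TangleFree}(\ge\nu,<r)}(G)\,\snbc(G,k)\bigr].
$$
First, using that $\cC_n(B)$ is algebraic, the certified-trace construction of Article~III combined with the expansion theorems of Article~II produces a $(B,\nu)$-bounded asymptotic expansion
$$
f(k,n)=c_0(k)+c_1(k)/n+\cdots+c_{r-1}(k)/n^{r-1}+O(1)\,c_r(k)/n^r
$$
valid in the range $1\le k\le n^{1/2}/C$, with $c_0(k)=\snbc(B,k)=\Trace(H_B^k)$ and with $c_r(k)=O\bigl(k^{2r}\mu_1(B)^k\bigr)$. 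Truncation at order $r$ is what the tangle-free indicator buys us: it excludes the tangle subgraphs that would otherwise inject non-Ramanujan behavior beyond $\tau_{\rm tang}$, in the spirit of Theorem~2.12 of \cite{friedman_alon}.

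Next, I would define $\tau_{\rm alg}=\tau_{\rm alg}(\nu,r)$ as the smallest index $i\ge 1$ for which $c_i$ fails to be $(B,\nu)$-Ramanujan, declaring $\tau_{\rm alg}=+\infty$ if each of $c_1,\ldots,c_{r-1}$ is $(B,\nu)$-Ramanujan. A non-Ramanujan contribution at order $\tau_{\rm alg}$ is, by definition, a polyexponential in $k$ whose bases lie in $(\nu,\mu_1(B)]$ and are not eigenvalues of $H_B$; this is precisely the spectral signature of new Hashimoto eigenvalues of $G$ outside $[-\nu,\nu]$ appearing on an event of expected measure of order $n^{-\tau_{\rm alg}}$.

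Third, the sidestepping theorem of Article~IV converts this expansion into eigenvalue information. Taking $k$ slightly larger than $\log n$ and applying sidestepping to a polynomial test function in $H_G$ that vanishes on $\Spec(H_B)$ and is positive outside $[-\nu,\nu]$, the term $c_{\tau_{\rm alg}}(k)/n^{\tau_{\rm alg}}$ produces matching upper and lower bounds of order $n^{-\tau_{\rm alg}}$ for
$$
\EE_{G\in\cC_n(B)}\bigl[\II_{\rm TF}(G)\cdot\#\{\mu\in\specnew_B(H_G):|\mu|>\nu\}\bigr],
$$
while $\tau_{\rm alg}=+\infty$ yields the $O(n^{-j})$ decay of case~(2). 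The Ihara formula \eqref{eq_Ihara_det} then transports the statement from $H_G$ to $A_G$: each new Hashimoto eigenvalue $\mu$ of $G$ pairs with a new adjacency eigenvalue $\lambda=\mu+(d-1)/\mu$, and \eqref{eq_nu_epsilon_prime} is calibrated exactly to send $|\mu|>\nu$ to $|\lambda|>2(d-1)^{1/2}+\epsilon'$. Absorbing the small Ihara error into the extra slack $\epsilon$ gives \eqref{eq_rel_alon_expect_lower_and_upper}.

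The main obstacle is the matching lower bound in case~(1). The upper bound is relatively direct from the $c_r$ bound plus sidestepping, but for the lower bound I must rule out that the non-Ramanujan signature carried by $c_{\tau_{\rm alg}}$ is cancelled by the unknown remainder $O(c_r/n^r)$, by interference between polyexponential contributions with distinct bases, or by the Ihara translation. I would address this by choosing the sidestepping test polynomial so that $\II_{\rm TF}(G)\cdot{\rm NonAlon}_d(G;\epsilon'+\epsilon)$ dominates it pointwise, and by exploiting the positivity of ${\rm NonAlon}_d$---it is a count, hence nonnegative---so that a detectable non-Ramanujan base at order $\tau_{\rm alg}$ cannot be washed out by lower-order-in-$1/n$ contributions; verifying this non-cancellation, and confirming that the dichotomy (finite versus infinite $\tau_{\rm alg}$) is clean within the window $k\asymp\log n$, will be the real work.
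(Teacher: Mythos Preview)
Your broad architecture---certified-trace expansions from Articles~II--III, then Sidestepping (Article~IV), then Ihara---is indeed the paper's route. But your definition of $\tau_{\rm alg}(\nu,r)$ is wrong in a way that collapses the dichotomy. You take $\tau_{\rm alg}$ to be the first index at which $c_i$ ``fails to be $(B,\nu)$-Ramanujan,'' the offending contribution having bases that are \emph{not} eigenvalues of $H_B$. But Theorem~\ref{th_main_tech_result} says the larger bases of every $c_i$ lie among the eigenvalues of the model, which for the basic models are the $\mu_j(B)$ (and perhaps $1<\nu$); so every $c_i$ \emph{is} $(B,\nu)$-Ramanujan in the sense of Definition~\ref{de_Ramanujan_function}, and your $\tau_{\rm alg}$ would always equal $+\infty$. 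The correct quantity---the $j$ of Theorem~\ref{th_sidestep}---is the first index where the polyexponential part $p_j$ (relative to $\Lambda_0=\nu$) of the adjusted coefficient is \emph{nonzero at all}; its bases will typically \emph{be} eigenvalues of $H_B$, such as $\pm(d-1)$. Article~VI proves $\tau_{\rm alg}=+\infty$ for Ramanujan $B$ precisely by showing these $(d-1)^k$ contributions vanish via a magnification argument, not because they were never candidates. Relatedly, the old spectrum is not removed by a test polynomial vanishing on $\Spec(H_B)$: one subtracts $\widetilde f(n)\,\Trace(H_B^k)$ from $f(k,n)$ using the second expansion in Theorem~\ref{th_main_tech_result}, after which the new leading coefficient is $\sum_{k'\mid k,\,k'<k}\Trace(H_B^{k'})$, of growth $\mu_1^{1/2}(B)$ (in particular $c_0(k)\ne\Trace(H_B^k)$).

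Your lower-bound concern is also misplaced. Theorem~\ref{th_sidestep} delivers both directions: when $p_j\ne 0$, \eqref{eq_thm_p_j_pure_exp} forces $p_j(k)=\sum_\ell C_\ell\,\ell^k$ with each $C_\ell>0$, and \eqref{eq_thm_C_ell_as_limit} yields $\EE{\rm in}_{\cM_n}[B_{n^{-\theta}}(\ell)]=C_\ell n^{-j}+o(n^{-j})$. This is already a two-sided estimate; there is no separate non-cancellation argument to supply.
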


[We believe that $\tau_{\rm alg}(\nu,r)$ is never $+\infty$ in our
basic models, although this doesn't seem important to us.]
% To prove this,
% the reason is that for all $n$ sufficiently large, 
% $\cC_n(B)$ contains graphs, $G$, that are
% connected, which are therefore of order $n\ord(B)$ and have
% $\mu_1=d-1$; hence such $G$ occur in $\cC_n(B)$, are for sufficiently
% large $n$ are $(\ge\nu,<r)$-tangles for any fixed $r$ and $\nu<d-1$.
% Then Theorem~\ref{th_extra_needed} shows that 
% \eqref{eq_rel_alon_expect_lower_and_upper}
% jjjjjjj 
% However, this fact is not particularly
% important to us.]

The proof of Theorem~\ref{th_rel_Alon_regular2} represents the majority
of Articles~II--IV.
Article~II is based on the asymptotic expansions proven in
\cite{friedman_random_graphs};
Articles~III and~IV are based on the way \cite{friedman_alon} utilizes
such expansions to prove the original Alon conjecture.

The reason we use the term {\em algebraic power} is because
$\tau_{\rm alg}(\nu,r)$ is determined from the polyexponential
parts of the coefficients of the
asymptotic expansion of
$$
f(k,n) =
\EE_{G\in\cC_n(B)}[ \II_{{\rm TangleFree}(\ge\nu,<r)}(G)
\Trace(H_G^k) ] ,
$$
and the existence of these expansions 
is due to the fact that the model is assumed to be algebraic
(in the sense of
Definition~\ref{de_algebraic}).

If a graph $G\in\cC_n(B)$ has a nonzero number of non-Alon new adjacency
eigenvalues, then this number is between $1$ and $(n-1)(\#V_B)$.
Hence we get the following corollary.

\begin{theorem}\label{th_rel_Alon_regular_prob_tanglefree}
Let $\cC_n(B)$ be an
algebraic model
over a $d$-regular graph $B$, and
let $r\in\naturals$ and $(d-1)^{1/2}<\nu<d-1$; let
$\epsilon'$ and
$\tau=\tau_{\rm alg}(\nu,r)$ be as in Theorem~\ref{th_rel_Alon_regular2}.
Then for any $\epsilon$
there are constants $C=C(\epsilon),C'$ such that for sufficiently large
$n$ he have
\begin{equation}\label{eq_rel_alon_prob_tanglefree}
C' n^{-\tau-1}
\le
\Prob_{G\in\cC_n(B)}\Bigl[
\bigl( G \in {\rm TangleFree}(\ge\nu,<r) \bigr)
\ \mbox{\rm and}\  %
\bigl({\rm NonAlon}_d(G;\epsilon'+\epsilon)>0  \bigr)
\Bigr] 
\le
C n^{-\tau}
\end{equation} 
(unless $\tau=+\infty$, in which case the upper bound above holds for
all $\tau\in\naturals$).
\end{theorem}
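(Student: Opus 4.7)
The plan is to deduce Theorem~\ref{th_rel_Alon_regular_prob_tanglefree} directly from Theorem~\ref{th_rel_Alon_regular2} by combining Markov's inequality with the observation flagged in the paragraph preceding the theorem: the random variable ${\rm NonAlon}_d(G;\epsilon'+\epsilon)$ is a non-negative integer bounded above by $(n-1)(\#V_B)$, since the new spectrum $\specnew_B(A_G)$ consists of exactly $(n-1)(\#V_B)$ eigenvalues counted with multiplicity (the new functions form a subspace of $\reals^{V_G}$ of codimension $\#V_B$). Set
$$
X(G) \;=\; \II_{{\rm TangleFree}(\ge\nu,<r)}(G)\cdot{\rm NonAlon}_d(G;\epsilon'+\epsilon),
$$
and let $E=\{G\in\cC_n(B):X(G)\ge 1\}$ denote the event whose probability we wish to bound. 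Theorem~\ref{th_rel_Alon_regular2} already gives us two-sided control on $\EE_{G\in\cC_n(B)}[X(G)]$, and the task is simply to convert this into control on $\Prob[E]$.

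For the upper bound, since $X$ is non-negative integer valued, Markov's inequality gives $\Prob[E]=\Prob[X\ge 1]\le\EE[X]$, and the upper bound of \eqref{eq_rel_alon_expect_lower_and_upper} then yields $\Prob[E]\le C(\epsilon)\,n^{-\tau}$ at once. In the degenerate case $\tau_{\rm alg}(\nu,r)=+\infty$, the same Markov step applied to clause~(2) of Theorem~\ref{th_rel_Alon_regular2} shows $\Prob[E]=O(n^{-j})$ for every $j\in\naturals$, which is exactly the statement ``the upper bound holds for all $\tau\in\naturals$.''

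For the lower bound, I use the integer-valuedness of $X$ in the opposite direction: since $X(G)\le(n-1)(\#V_B)$ pointwise, we have
$$
\EE[X] \;=\; \EE\bigl[X\,\II_E\bigr] \;\le\; (n-1)(\#V_B)\,\Prob[E],
$$
so the lower bound of \eqref{eq_rel_alon_expect_lower_and_upper} gives
$$
\Prob[E] \;\ge\; \frac{\EE[X]}{(n-1)(\#V_B)} \;\ge\; \frac{C'\,n^{-\tau}}{(n-1)(\#V_B)} \;\ge\; C''\,n^{-\tau-1}
$$
for $n$ sufficiently large, with $C''$ any positive constant smaller than $C'/\#V_B$.

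Every ingredient---the two-sided expected-value asymptotics, the trivial bound on ${\rm NonAlon}_d$, and Markov's inequality---is either hypothesised or elementary, so there is no substantive obstacle at this stage; all of the technical work has already been absorbed into Theorem~\ref{th_rel_Alon_regular2}, whose proof occupies the bulk of Articles~II--IV. The only conceptual point worth flagging is the loss of one power of $n$ between the upper and lower bounds on $\Prob[E]$: this gap is intrinsic to converting an $\EE[X]$ estimate into a $\Prob[X\ge 1]$ estimate by the trivial pointwise bound $X\le(n-1)(\#V_B)$, and closing it would require finer information about the typical size of ${\rm NonAlon}_d$ on the event $E$, which lies outside the scope of this corollary.
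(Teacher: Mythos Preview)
Your proof is correct and matches the paper's approach exactly: the paper derives this theorem as an immediate corollary of Theorem~\ref{th_rel_Alon_regular2} via the one-sentence observation that the non-Alon multiplicity, when nonzero, lies between $1$ and $(n-1)(\#V_B)$, which is precisely the Markov/pointwise-bound sandwich you spell out.
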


\subsection{The Algebraic Power of a Model}

For fixed $\nu>(d-1)^{1/2}$ and
$r$, all graphs are divided into two families, namely
$$
{\rm HasTangles}(\ge\nu,<r) , \ \  %
{\rm TangleFree}(\ge\nu,<r).
$$
Theorem~\ref{th_rel_Alon_regular_prob_tanglefree} implies that
$$
\Prob_{G\in\cC_n(B)}
\bigl[{\rm NonAlon}_d(G;\epsilon)>0  \bigr]
$$
is bounded above by
\begin{equation}\label{eq_new_alon_upper}
\Prob_{G\in\cC_n(B)}
\bigl[ {\rm HasTangles}(\ge\nu,<r) \bigr] +
O\bigl( n^{-\tau(\nu,r)} \bigr).
\end{equation} 
In this section we define the {\em algebraic power} of a model, which
arises when we choose $\nu,r$ to minimize 
\eqref{eq_new_alon_upper} to within a multiplicative constant.
This involves two observations.

The first observation is that, 
for trivial reasons, if 
$\nu_1\le\nu_2$ and $r_1\ge r_2$, then the event of being
$(\ge\nu_1,<r_1)$ tangle-free is more restrictive than being
$(\ge\nu_2,<r_2)$ tangle-free.  Hence the characteristic function
of ${\rm TangleFree}(\ge\nu_1,<r_1)$ is smaller than that of
${\rm TangleFree}(\ge\nu_2,<r_2)$, and hence
\begin{equation}\label{eq_tau_monotonicity}
\tau(\nu_1,r_1) \ge
\tau(\nu_2,r_2).
\end{equation} 
Said otherwise, by decreasing $\nu$ and increasing $r$,
the set ${\rm HasTangles}(\ge\nu,<r)$ either says the same or gets larger
and ${\rm TangleFree}(\ge\nu,<r)$ the same or smaller.

The second observation is that
Theorem~\ref{th_hastangles_lower_bound} implies that for
$\epsilon>0$ sufficiently small, 
$$
\Prob_{G\in\cC_n(B)}
\bigl[{\rm NonAlon}_d(G;\epsilon)>0  \bigr]
\ge C' n^{-\tau_{\rm tang}}
$$
for $n$ sufficiently large.  Hence 
\eqref{eq_new_alon_upper} is at best $O(n^{-\tau_{\rm tang}})$.
But this theorem also implies that 
$$
\Prob_{G\in\cC_n(B)}
\bigl[ {\rm HasTangles}(\ge\nu,<r) \bigr]
$$
is never larger any larger than a constant times $n^{-\tau_{\rm tang}}$.
Hence we can take $\nu$ as close to $(d-1)^{1/2}$ and $r$ as large
as we like, which in \eqref{eq_new_alon_upper} makes the
$\tau(\nu,r)$ as favourable as possible
in view of \eqref{eq_tau_monotonicity},
without giving up more than a constant in the
HasTangles term.

This discussion motivates the following definition.

\begin{definition}\label{de_algebraic_power}
Let $\{\cC_n(B)\}_{n\in N}$ be an algebraic model over a $d$-regular
graph $B$.
For each $r\in\naturals$ and $\nu$ with $(d-1)^{1/2}<\nu<d-1$, let
$\tau(\nu,r)$ be as in Theorem~\ref{th_rel_Alon_regular2}.
We define the {\em algebraic power} of the model $\cC_n(B)$ to be
$$
\tau_{\rm alg} =
\max_{\nu>(d-1)^{1/2},r} \tau(\nu,r) =
\limsup_{r\to\infty,\ \nu\to(d-1)^{1/2}}
\tau(\nu,r)
$$
where $\nu$ tends to $(d-1)^{1/2}$ from above
(and we allow $\tau_{\rm alg}=+\infty$ when this maximum is
unbounded or if $\tau(\nu,r)=\infty$ for some $r$ and $\nu>(d-1)^{1/2}$).
\end{definition}
Of course, according to Theorem~\ref{th_rel_Alon_regular2},
$\tau(\nu,r)\ge 1$ for all $r$ and all relevant $\nu$, and hence
$\tau_{\rm alg}\ge 1$.

\subsection{The First Main Theorem}

Combining the above results, and diving graphs by whether
or not they have $(\ge\nu,<r)$-tangles, with $r$ sufficiently
large and $\nu$ sufficiently close to $(d-1)^{1/2}$ to make
$\tau(\nu,r)=\tau_{\rm alg}$, we get the following theorem.

\begin{theorem}\label{th_rel_Alon_regular}
Let $B$ be a $d$-regular graph, and let
$\cC_n(B)$ be an
algebraic model of tangle power $\tau_{\rm tang}$
and algebraic power $\tau_{\rm alg}$.
Let 
$$
\tau_1 = \min(\tau_{\rm tang},\tau_{\rm alg}), \quad
\tau_2 = \min(\tau_{\rm tang},\tau_{\rm alg}+1).
$$
Then $\tau_2\ge \tau_1\ge 1$, and
for $\epsilon>0$ sufficiently small there are
$C,C'$ such that for sufficiently large $n$ we have
\begin{equation}\label{eq_first_main}
C' n^{-\tau_2}
\le
\Prob_{G\in\cC_n(B)}\bigl[
{\rm NonAlon}_d(G;\epsilon)>0  
\bigr]
\le
C n^{-\tau_1}.
\end{equation}
\end{theorem}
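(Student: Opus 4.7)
The plan is to split each probability space $\cC_n(B)$ along a tangle-containment dichotomy and combine the two-sided bounds already established in Theorems~\ref{th_hastangles_upper_bound}, \ref{th_hastangles_lower_bound}, and \ref{th_rel_Alon_regular_prob_tanglefree}. First, $\tau_1\ge 1$ follows because $\tau_{\rm alg}\ge 1$ by Theorem~\ref{th_rel_Alon_regular2} and because $\tau_{\rm tang}\ge m(d)\ge 1$ by the bound recorded after Definition~\ref{de_tau_tang}; $\tau_2\ge\tau_1$ is immediate from the definitions. Fix a connected graph $S$ occurring in the model with $\ord(S)=\tau_{\rm tang}$ and $\mu_1(S)>(d-1)^{1/2}$, which exists by Definition~\ref{de_tau_tang}. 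By Definition~\ref{de_algebraic_power} and the monotonicity \eqref{eq_tau_monotonicity}, there exist $\nu_*>(d-1)^{1/2}$ and $r_*\in\naturals$ with $\tau(\nu,r)=\tau_{\rm alg}$ for every $(d-1)^{1/2}<\nu\le\nu_*$ and $r\ge r_*$ (the case $\tau_{\rm alg}=+\infty$ proceeds analogously by invoking clause (2) of Theorem~\ref{th_rel_Alon_regular2} in place of the tangle-free upper bound). Pick $\nu$ so that $\nu\le\min(\nu_*,\mu_1(S))$ and the resulting $\epsilon'$ defined by \eqref{eq_nu_epsilon_prime} is as small as needed below, and pick $r>\max(r_*,\ord(S))$.

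For the upper bound, given $\epsilon>0$ small, shrink $\nu$ further if necessary to make $\epsilon'<\epsilon/2$; the union bound yields
\begin{align*}
\Prob_{G\in\cC_n(B)}\bigl[{\rm NonAlon}_d(G;\epsilon)>0\bigr]
\le{}& \Prob\bigl[{\rm HasTangles}(\ge\nu,<r)\bigr] \\
&+\Prob\bigl[{\rm TangleFree}(\ge\nu,<r)\wedge{\rm NonAlon}_d(G;\epsilon)>0\bigr].
\end{align*}
Theorem~\ref{th_hastangles_upper_bound} bounds the first summand by $O(n^{-\tau_{\rm tang}})$, and Theorem~\ref{th_rel_Alon_regular_prob_tanglefree} bounds the second by $O(n^{-\tau_{\rm alg}})$ (writing $\epsilon=\epsilon'+(\epsilon-\epsilon')$ with $\epsilon-\epsilon'>\epsilon/2>0$). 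Summing gives the desired $O(n^{-\tau_1})$.

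For the lower bound, ${\rm HasTangles}(\ge\nu,<r)$ and ${\rm TangleFree}(\ge\nu,<r)$ partition the sample space, so
\begin{align*}
\Prob\bigl[{\rm NonAlon}_d(G;\epsilon)>0\bigr]
\ge{}& \Prob\bigl[{\rm HasTangles}(\ge\nu,<r)\wedge{\rm NonAlon}_d(G;\epsilon)>0\bigr] \\
&+\Prob\bigl[{\rm TangleFree}(\ge\nu,<r)\wedge{\rm NonAlon}_d(G;\epsilon)>0\bigr].
\end{align*}
Taking $\epsilon$ smaller than the $\epsilon_0$ furnished by Theorem~\ref{th_hastangles_lower_bound} applied to our $S,\nu,r$, while keeping $\epsilon>\epsilon'$, Theorem~\ref{th_hastangles_lower_bound} bounds the first summand below by $C' n^{-\tau_{\rm tang}}$ and Theorem~\ref{th_rel_Alon_regular_prob_tanglefree} bounds the second below by $C' n^{-\tau_{\rm alg}-1}$. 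Their sum is at least $\tfrac12 C'\, n^{-\tau_2}$, since $n^{-a}+n^{-b}\ge n^{-\min(a,b)}$ for $n\ge 1$.

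The main obstacle is the simultaneous coordination of the parameters $\nu$, $r$, and $\epsilon$ so that all four cited theorems apply at once: $\nu$ must lie strictly above $(d-1)^{1/2}$ (to apply both upper bounds and realize $\tau(\nu,r)=\tau_{\rm alg}$) and at the same time not exceed $\mu_1(S)$ (to apply Theorem~\ref{th_hastangles_lower_bound} to $S$); $r$ must exceed both $r_*$ and $\ord(S)$; and $\epsilon$ must dominate $\epsilon'$ yet be smaller than $\epsilon_0$ and smaller than the "sufficiently small" thresholds of Theorems~\ref{th_rel_Alon_regular2} and~\ref{th_hastangles_lower_bound}. These constraints are jointly satisfiable precisely because $\mu_1(S)>(d-1)^{1/2}$ is strict, which opens a nonempty range for $\nu$, and because $\epsilon'\to 0$ as $\nu$ decreases toward $(d-1)^{1/2}$.
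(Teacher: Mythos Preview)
Your proposal is correct and follows exactly the approach the paper indicates: the paper's own argument is the single sentence ``Combining the above results, and dividing graphs by whether or not they have $(\ge\nu,<r)$-tangles, with $r$ sufficiently large and $\nu$ sufficiently close to $(d-1)^{1/2}$ to make $\tau(\nu,r)=\tau_{\rm alg}$, we get the following theorem,'' and you have carefully spelled out the union-bound/partition argument invoking Theorems~\ref{th_hastangles_upper_bound}, \ref{th_hastangles_lower_bound}, and~\ref{th_rel_Alon_regular_prob_tanglefree}, including the parameter coordination that the paper leaves implicit.
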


\subsection{The Second Main Theorem}

Generally speaking, $\tau_{\rm tang}$ tends to be relatively
easy to compute or approximate; it was computed exactly in
\cite{friedman_alon} when $B$ is a bouquet of either whole-loops
or of half-loops, for some of our basic models.
On the other hand, $\tau_{\rm alg}$ is very difficult to compute
directly, at least in the asymptotic expansions that determine it.

In \cite{friedman_random_graphs}, where $B$ was a bouquet of
$d/2$ of whole-loops (so $d$ is even), 
the analog of $\tau_{\rm alg}$ was determined in two steps:
first one proves that for certain $r$ one has
$$
\EE_{G\in\cC_n(B)}[\Trace(A_G^k)]
= c_{-1}(k)n + c_0(k) + \ldots + c_{r-1}(k)/n^{r-1} + O(1) c_r(k)/n^r
$$
where for $i<r$ the $c_i(k)$ are all of the form
$d^k p(k) + g(k)$ where $g$ is of growth $2(d-1)^{1/2}$
and $p(k)$ is a polynomial (which is not explicitly computed).
Second, 
standard counting arguments about
the {\em magnification} of expanders and taking $k=\log^2(n)$ and
$n\to\infty$ 
to deduce that these polynomials
must all vanish.
A similar two-part 
strategy was used in \cite{friedman_alon}, which involved
traces of powers of $H_G$, where the coefficients are first provably
of the form $(d-1)^k p(k)$ plus a function of growth $(d-1)^{1/2}$,
and then a ``Sidestepping Lemma'' is used.
Critical to both these computations is that the polyexponential parts
of these coefficients can be linked to a lack of magnification;
if $B$ has more than one vertex, this same strategy works provided
that $\pm (d-1)$ are the only possible larger bases of the
coefficients $c_i(k)$ in expansions of
$$
\EE_{G\in\cC_n(B)}[ \II_{{\rm TangleFree}(\ge\nu,<r)}(G)
\Trace(H_G^k) ] 
$$ 
as $\nu\to(d-1)^{1/2}$ and $r\to\infty$.
This is necessarily true if 
$B$ is a $d$-regular
Ramanujan graph (Definition~\ref{de_Ramanujan}); if $B$ is not
Ramanujan, then
our expansion theorems do not rule out polyexponential parts
whose bases may be between $d-1$ and $(d-1)^{1/2}$ in absolute value;
at present we have no proof that such bases cannot occur.

\begin{theorem}\label{th_second_main_theorem}
Let $\{\cC_n(B)\}_{n\in N}$ be one of our basic models
over $d$-regular Ramanujan
graph, $B$.
Then $\tau_{\rm alg}=+\infty$.
\end{theorem}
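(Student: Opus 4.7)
The plan is to combine three inputs: the $(B,\nu)$-Ramanujan asymptotic expansions produced by Article~II, the Ihara determinantal identity \eqref{eq_Ihara_det}, and a pseudo-magnification/Sidestepping argument drawn from Articles~VI and~IV. The outcome will be that in the expansion of $\EE[\II_{{\rm TangleFree}(\ge\nu,<r)}\Trace(H_G^k)]$, every coefficient $c_i(k)$ with $i\ge 1$ has no polyexponential base of modulus exceeding $(d-1)^{1/2}$; by Definition~\ref{de_algebraic_power} this forces $\tau(\nu,r)>i$ for each $i$, hence $\tau_{\rm alg}=+\infty$.

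First, for fixed $r\in\naturals$ and $\nu$ just above $(d-1)^{1/2}$, Article~II supplies an expansion
$$\EE_{G\in\cC_n(B)}\!\bigl[\II_{{\rm TangleFree}(\ge\nu,<r)}(G)\,\Trace(H_G^k)\bigr] = \sum_{i=0}^{r-1}\frac{c_i(k)}{n^i} + O(1)\frac{c_r(k)}{n^r},$$
in which each $c_i$ is the sum of a polyexponential whose bases lie in the spectrum of $H_B$ and a function of growth at most $\nu$. I would then use \eqref{eq_Ihara_det} to enumerate the possible bases: every $H_B$-eigenvalue is a root of $u^2-\lambda u+(d-1)=0$ for some $\lambda$ in the spectrum of $A_B$, or is $\pm 1$. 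The Ramanujan hypothesis on $A_B$ restricts $\lambda$ to $\{\pm d\}\cup[-2(d-1)^{1/2},2(d-1)^{1/2}]$, which forces every such root to satisfy either $|u|=(d-1)^{1/2}$ or $u\in\{\pm 1,\pm(d-1)\}$. Consequently each $c_i$ splits as
$$c_i(k) = (d-1)^k p_i(k) + (-(d-1))^k q_i(k) + g_i(k),$$
with $p_i,q_i\in\complex[k]$ (and $q_i\equiv 0$ unless $B$ is bipartite) and $g_i$ of growth at most $\nu$. The remaining task is to prove $p_i\equiv q_i\equiv 0$ for all $i\ge 1$.

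This is the pseudo-magnification step. Article~VI establishes that each basic model is pseudo-magnifying, which yields a pathwise bound roughly of the form $\snbc(G,k)\le \snbc(B,k)+P(k)\,\nu^k$ for every $G\in{\rm TangleFree}(\ge\nu,<r)$, with $P$ a polynomial controlled by the magnification constants of $B$. Taking $\cC_n(B)$-expectations bounds $f(k,n)-\Trace(H_B^k)$ by something of growth $\nu$ in $k$, uniformly in $n$. If some $p_i$ or $q_i$ with $i\ge 1$ were non-zero, then choosing $k$ in a regime such as $k=\Theta(\log^2 n)$, where $(d-1)^k$ dominates every fixed inverse power of $n$ yet $k\le n^{1/2}/C$, would make the term $n^{-i}k^{\deg p_i}(d-1)^k$ dominate the right-hand side and contradict the growth-$\nu$ bound. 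The Sidestepping Theorem of Article~IV is the precise device that converts this dominance into the algebraic identity $p_i=q_i=0$, separately at each $i\in[1,r-1]$. Letting $r\to\infty$ and $\nu\to(d-1)^{1/2}$ yields $\tau_{\rm alg}=+\infty$.

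The main obstacle is executing step three: the classical magnification arguments of \cite{friedman_random_graphs,friedman_alon} are tailored to bouquet base graphs, so Article~VI must both formulate the correct relative pseudo-magnification over an arbitrary $d$-regular Ramanujan $B$ and verify it across all variants in Definition~\ref{de_models} (permutation, permutation-involution in both parities, and cyclic-involution in both parities). A secondary delicate point is the interaction with the tangle-free indicator: Theorem~\ref{th_hastangles_upper_bound} must be invoked to confine the excised ${\rm HasTangles}(\ge\nu,<r)$ event to probability $O(n^{-\tau_{\rm tang}})$, so that it cannot inject a spurious $(d-1)^k$-signal into the expansion and corrupt the Sidestepping extraction.
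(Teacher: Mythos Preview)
Your outline is structurally close to the paper's, but step three contains a genuine error. You assert that pseudo-magnification yields a \emph{pathwise} bound $\snbc(G,k)\le\snbc(B,k)+P(k)\nu^k$ for every individual $G\in{\rm TangleFree}(\ge\nu,<r)$. That claim is false: tangle-freeness is a purely local condition and does not preclude large new Hashimoto eigenvalues arising from global structure. Indeed, if such a deterministic bound held, the relativized Alon conjecture would follow immediately without any probabilistic input, and $\tau_{\rm alg}$ would be trivially infinite for every model. The term ``pseudo-magnifying'' in Article~VI refers to a property of the \emph{model}, not of individual graphs.

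The actual role of magnification is probabilistic: one shows that for each fixed $s\in\naturals$, with probability $1-O(n^{-s})$ the random cover $G$ is a good magnifier (more precisely, fails to magnify only through a small connected component), which via Alon's eigenvalue--magnification inequality bounds the new $H_G$-eigenvalues away from $\pm(d-1)$ by a fixed positive gap. Now suppose some $p_j$ or $q_j$ with $j\ge 1$ were nonzero in your decomposition $c_i(k)=(d-1)^kp_i(k)+(-(d-1))^kq_i(k)+g_i(k)$. Sidestepping (Theorem~\ref{th_sidestep}, specifically \eqref{eq_thm_C_ell_as_limit}) then forces eigenvalues to cluster within $n^{-\theta}$ of $\pm(d-1)$ with expected count of exact order $n^{-j}$; this contradicts the magnification bound, which caps that expected count at $O(n^{-s})$ for $s>j$. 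That contradiction is what kills the $\pm(d-1)$ bases, not a direct growth estimate on $f(k,n)-\Trace(H_B^k)$. Two smaller points: the expansion you invoke is Theorem~\ref{th_main_tech_result} from Article~III rather than Article~II; and for the cyclic models the eigenvalue set of the model includes $1$ in addition to $\Spec(H_B)$, though this is harmless since $1\le(d-1)^{1/2}$.
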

The above theorem holds for any algebraic model that satisfies
a certain weak {\em magnification} condition;
in Article~VI we describe this condition and prove that it holds
for all of our basic models.
The proof uses standard counting arguments; for large values of
$d$ the argument is very easy; for
small values of $d$ our argument is a more delicate
calculation similar to those in Chapter~12 of
\cite{friedman_alon}.

Whenever $\tau_{\rm alg}=+\infty$, or merely $\tau_{\rm alg}\ge\tau+1$,
then in Theorem~\ref{th_rel_Alon_regular},
$\tau_1=\tau_2=\tau_{\rm tang}$.
In \cite{friedman_alon}, some values of $\tau_{\rm tang}$ were
computed for our basic models in the case where $B$ has one
vertex.  In the next subsection we give some implications of
these computations for general $B$.

\subsection{Bounds on $\tau_{\rm tang}$}
\label{su_bounds_tau_tang}

% Given a $d$-regular graph, $B$, to estimate $\tau_{\rm tang}$
% we need to bound the smallest possible order
% of a graph, $S$, that occurs in a model $\cC_n(B)$ and has
% $\mu_1(S)>(d-1)^{1/2}$.
Sections~6.3 and~6.4
of \cite{friedman_alon} develops a number of techniques to
determine what is called there $\tau_{\rm fund}$,
which is the smallest
order of a graph, $S$, that occurs in the model and has
$\mu_1(S)\ge (d-1)^{1/2}$ (note the weak inequality here).
So $\tau_{\rm tang}$ is the same except that the weak inequality
is replaced with the strong inequality 
$\mu_1(S)> (d-1)^{1/2}$.
The techniques of \cite{friedman_alon} easily prove the following results.

\begin{theorem}
Let $S$ be a connected graph of order $m$ for some $m\in\integers_{\ge 0}$.
Then
\begin{enumerate}
\item $\mu_1(S)\le 2m+1$, with equality if $S$ is a bouquet of
$m+1$ whole-loops;
\item
if $S$ has no whole-loops, then
$\mu_1(S)\le m+1$, with equality if $S$ consists of
two vertices joined by $m+2$ edges.
\end{enumerate}
\end{theorem}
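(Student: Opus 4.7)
My strategy is to reduce $S$ to its \emph{core} $S^*$ and then bound $\mu_1(H_S)$: for part~(1), via the maximum row sum of $H_S$; for part~(2), via the Ihara identity \eqref{eq_Ihara_det} and a structural comparison argument. To form $S^*$, I would iteratively remove degree-$1$ vertices together with their unique incident edge: each such removal decreases both $\#V$ and $\#E$ by one (so $\ord(S^*) = m$), and the corresponding rows/columns of $H_S$ contribute only nilpotent blocks, since a directed edge adjacent to a degree-$1$ vertex has no non-backtracking continuation. Hence $\mu_1(H_S) = \mu_1(H_{S^*})$. Either $S^*$ is empty (so $S$ is a forest, $\mu_1 = 0$, and both bounds hold trivially) or $\delta(S^*) \ge 2$; assume the latter.

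For part~(1), each row of $H_S$ indexed by a directed edge $e$ has exactly $\deg(h(e)) - 1$ nonzero entries (all equal to $1$), so by Perron-Frobenius $\mu_1(H_S) \le \Delta(S^*) - 1$. The handshake identity $\sum_v \deg(v) = 2\#E_{S^*} - (\#\text{half-loops}) \le 2(\#V_{S^*} + m)$, combined with $\delta(S^*) \ge 2$ and the inequality $\Delta(S^*) \le \sum_v \deg(v) - 2(\#V_{S^*} - 1)$, gives $\Delta(S^*) \le 2m + 2$ when $\#V_{S^*} \ge 2$, with a separate direct check in the bouquet case $\#V_{S^*} = 1$. Hence $\mu_1(H_S) \le 2m + 1$. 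For the extremal case, the bouquet of $m+1$ whole-loops has $H_S = J - I - P$ with $P$ the permutation induced by $\iota$; the all-ones vector is Perron with eigenvalue $2m + 1$.

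For part~(2), the max-degree bound alone is insufficient: a core with no whole-loops can have $\Delta > m + 2$ when parallel edges concentrate (e.g., a vertex joined to two neighbors by $3$ and $2$ parallel edges, giving $\Delta = 5$ while $m = 2$). I would instead apply \eqref{eq_Ihara_det}: for $S$ without whole-loops, the nontrivial eigenvalues of $H_S$ are the roots of $\det(\mu^2 I - \mu A_S + (D_S - I)) = 0$. Perron-Frobenius on this quadratic pencil yields a strictly positive null vector $z$ satisfying $\mu(A_S z)_v = (\mu^2 + \deg(v) - 1)\,z_v$ for all $v$; the $z$-weighted Rayleigh identity
\[
\mu^2 - \bar A\, \mu + (\bar D - 1) = 0, \qquad \bar A = \tfrac{z^T A_S z}{\|z\|^2},\quad \bar D = \tfrac{z^T D_S z}{\|z\|^2},
\]
expresses $\mu_1$ in terms of the structural quantities $\bar A, \bar D$. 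Following the techniques of \cite{friedman_alon}, Sections~6.3--6.4 (which characterize extremal graphs maximizing $\mu_1(H)$ at fixed order), a monotone comparison/amalgamation argument---merging parallel-edge bundles and contracting spanning-tree edges in operations that preserve or increase $\mu_1$ while preserving $\ord$ and the no-whole-loops hypothesis---reduces to the two-vertex $(m+2)$-edge graph $S_0$. For $S_0$, symmetry makes the Perron eigenvector constant on each $\iota$-orbit of directed edges, and every row sum of $H_{S_0}$ is $m + 1$, giving $\mu_1(H_{S_0}) = m + 1$.

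The main obstacle is the sharp inequality in part~(2) for cores with three or more vertices, where the row-sum bound $\Delta - 1$ can fail to be $\le m + 1$; handling this requires the Ihara reduction to a quadratic pencil and a structural comparison argument against the extremal graph $S_0$ in the spirit of \cite{friedman_alon}, Sections~6.3--6.4, rather than a direct degree-counting bound as in part~(1).
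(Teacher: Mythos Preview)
The paper itself gives no proof of this theorem; it simply defers to the techniques of \cite{friedman_alon}, Sections~6.3--6.4. Your argument for part~(1) via the row-sum bound is correct and more explicit than anything the present paper offers: the reduction to the core, the bound $\mu_1(H_{S^*})\le\Delta(S^*)-1$, and the degree-sum estimate $\Delta(S^*)\le 2m+2$ are all valid. One small slip: for the bouquet of $m+1$ whole-loops one has $H_S=J-P$ rather than $J-I-P$, since the diagonal entries $H_{e,e}$ equal $1$ (a whole-loop $e$ satisfies $e\ne\iota e$); the all-ones vector is still Perron with eigenvalue $2m+1$, so your conclusion stands.

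For part~(2) you correctly observe that the row-sum bound is too weak and ultimately defer to the same reference the paper does, so at that level your treatment matches the paper's. However, the specific reduction you sketch does not work as stated: contracting a spanning-tree edge $\{u,v\}$ in the presence of other parallel $u$--$v$ edges turns those parallel edges into whole-loops, destroying the hypothesis you claim to preserve, while ``merging parallel-edge bundles'' would \emph{decrease} $\mu_1$ rather than increase it. Moreover, the Rayleigh identity $\mu^2-\bar A\mu+(\bar D-1)=0$ that you derive from the Ihara pencil is correct but is never actually used in your argument. So as written, part~(2) is a pointer to \cite{friedman_alon} rather than a proof; if you want a self-contained argument you will need either a monotone operation that genuinely respects the no-whole-loops constraint or a direct inequality on the pencil that does not rely on such a reduction.
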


Here are a number of conclusions that follow.

\begin{corollary}
\label{co_tau_tang_bounds}
For every $d\in\naturals$ with $d\ge 3$ let
\begin{align*}
m(d) & \eqdef 
\Bigl\lfloor \bigl( (d-1)^{1/2} - 1 \bigr)/2  \Bigr\rfloor + 1,
\\
m'(d) & \eqdef
\Bigl\lfloor (d-1)^{1/2} - 1  \Bigr\rfloor + 1,
\end{align*}
where $\lfloor x \rfloor$ denotes the ``floor'' of $x$, i.e.,
the largest integer no greater than $x$.
Let $B$ be a $d$-regular graph with $d\ge 3$.  
Then
\begin{enumerate}
\item for any algebraic model,
\begin{equation}\label{eq_tang_lower_bound}
\tau_{\rm tang}\ge m(d)
\end{equation} 
with equality whenever
$B$ is a graph for which there is some vertex that has $m(d)+1$
whole-loops;
\item
Moreover we have
$$
\tau_{\rm tang}\ge m'(d)
$$
in any of the following cases:
\begin{enumerate}
\item the cyclic model;
\item the cyclic-involution model of either even or odd degree;
\item $B$ has no whole-loops;
\end{enumerate}
equality holds if some vertex of $B$ has at least $m'(d)+2$ half-loops
or some two vertices of $B$ are joined by at least $m'(d)+2$ edges.
\end{enumerate}
\end{corollary}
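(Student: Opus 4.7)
The plan is to derive each lower bound from the theorem on $\mu_1$ versus order immediately preceding the corollary, and to realize the equality cases by exhibiting explicit small subgraphs $S$ that occur in the model. For the universal bound $\tau_{\rm tang} \ge m(d)$, let $S$ witness the minimum in the definition of $\tau_{\rm tang}$, which we may take to be connected. Part~(1) of the preceding theorem gives $\mu_1(S) \le 2\ord(S)+1$; combining with $\mu_1(S) > (d-1)^{1/2}$ yields $\ord(S) > \bigl((d-1)^{1/2}-1\bigr)/2$, and as $\ord(S)$ is an integer this forces $\ord(S) \ge m(d)$.

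For the sharper bound $\tau_{\rm tang} \ge m'(d)$ in cases~(a)--(c), I first observe that in each of these cases no $G \in \cC_n(B)$ contains a whole-loop. Indeed, a directed edge $(e,i) \in \Edir_G$ is a whole-loop in $G$ precisely when $e$ is a whole-loop of $B$ (so that $\iota_B e \ne e$) and $\sigma(e)$ fixes $i$; in case~(c) there is no such $e$, while in cases~(a) and~(b) the model forces $\sigma(e)$ to be a full $n$-cycle on $[n]$, which has no fixed points for $n \ge 2$. Every subgraph of every $G$ in the model is therefore free of whole-loops, so part~(2) of the preceding theorem applies and gives $\mu_1(S) \le \ord(S)+1$; combined with $\mu_1(S) > (d-1)^{1/2}$ this forces $\ord(S) \ge m'(d)$.

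For the equality statements, I exhibit specific extremal subgraphs that occur in the model. In~(1), let $S$ be the bouquet of $m(d)+1$ whole-loops; then $\ord(S) = m(d)$ and $\mu_1(S) = 2m(d)+1 > (d-1)^{1/2}$. Whenever a vertex $v$ of $B$ supports $m(d)+1$ whole-loops and the model permits fixed points for the corresponding $\sigma(e)$ (as in the permutation and permutation-involution models), I choose these permutations to share a common fixed point $i_0 \in [n]$; the fibre point $(v,i_0) \in V_G$ then carries a copy of $S$. In~(2), let $S$ be the graph on two vertices joined by $m'(d)+2$ parallel edges; then $\ord(S) = m'(d)$ and $\mu_1(S) = m'(d)+1 > (d-1)^{1/2}$. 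If $B$ has two vertices joined by $\ge m'(d)+2$ edges, lift each such edge so that a fixed fibre point $i_0$ on one side is sent to a fixed $j_0$ on the other; if instead $v \in V_B$ carries $\ge m'(d)+2$ half-loops, choose the matchings $\sigma(e)$ for $m'(d)+2$ of them to simultaneously pair $i_0$ with $j_0 \ne i_0$, so that their lifts become $m'(d)+2$ parallel edges between $(v,i_0)$ and $(v,j_0)$ in $G$.

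The main obstacle is the ``occurs'' condition: one must verify that the partial specifications of $\sigma$ used in the equality constructions can be extended to a globally valid $\sigma$ in each of our basic models, for all sufficiently large $n$. For the permutation model this is immediate. For the permutation-involution and cyclic (-involution) models it reduces to the statement that pre-assigning $\sigma(e)$ on two or three elements of $[n]$ admits completion to a full perfect matching, a near-perfect matching, or a full $n$-cycle on $[n]$, which is an elementary extension argument for $n$ large. These model-by-model checks parallel the computations of Sections~6.3 and~6.4 of \cite{friedman_alon}, where the corresponding existence results are worked out in detail for the bouquet base graphs.
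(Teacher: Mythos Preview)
Your proof is correct and follows the route the paper intends: the paper states the preceding theorem on $\mu_1(S)$ versus $\ord(S)$ precisely so that this corollary follows by the arithmetic you carry out, and it defers the detailed model-by-model verifications (that the extremal $S$ actually occur) to Article~VI, citing Sections~6.3--6.4 of \cite{friedman_alon}, which is exactly what you do in your final paragraph. One small remark: the equality clause in part~(1) as stated in the paper reads ``for any algebraic model,'' which taken literally would conflict with part~(2) for cyclic models over a $B$ with many whole-loops; your proof correctly and implicitly restricts to models (such as the permutation model) in which the relevant $\sigma(e)$ can have fixed points, which is the intended reading.
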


\section{The Trace Methods of Broder-Shamir and Friedman for Random 
Graphs}
\label{se_bsf}

\myDeleteNote{
{\red Longer version in in {\tt Obsolete} subdirectory, dated 20190705}
}
In this section we review some ideas of
\cite{broder,friedman_random_graphs} and give a few examples
which help to motivate
some terminology in
Sections~\ref{se_ordered_B_strong_alg}--\ref{se_new_algebraic},
including
ordered graphs, $B$-graphs, homotopy type, algebraic models, 
(our interest in) regular languages, and $B$-types.

This section is not needed for the rest of this article, but we
find it easier to
read
Sections~\ref{se_ordered_B_strong_alg}--\ref{se_new_algebraic}
while keeping in mind the examples we give here.
All these examples are implicit in the articles
\cite{broder,friedman_random_graphs} on random graphs.

We will be brief on details here, and for a more complete
discussion we refer the reader to
Sections~\ref{se_ordered_B_strong_alg}--\ref{se_new_algebraic}
in this article,
and
to Article~II and \cite{broder,friedman_random_graphs,friedman_alon}.

\subsection{The Framework for Random Graphs}

Throughout this section $d\in\naturals$ will be even, and
$B$ will be a bouquet of $d/2$ whole-loops, with the notation
\begin{equation}\label{eq_bouquet}
V_B = \{u\}, \quad
\Edir_B=\{ f_1,\iota_B f_1,\ldots,f_{d/2},\iota_B f_{d/2} \} ;
\end{equation} 
we let $\cC_n(B)$ be the permutation model unless otherwise
stated.  Hence $\cC_n(B)$ is a
model of a random $d$-regular graph on $n$ vertices.

We will explain the methods
of Broder-Shamir \cite{broder} and Friedman \cite{friedman_random_graphs}
as they apply to estimating
$$
\EE_{G\in\cC_n(B)}[\Trace(H_G^k)] .
$$
In fact \cite{broder,friedman_random_graphs} estimate the above with
$A_G$ replacing $H_G$,
so the asymptotic expansions are slightly different; also
\cite{broder,friedman_random_graphs} work with a $2d$-regular graph,
so the $d$ in those articles is not the regularity of $B$.

\subsection{Subdividing Walks by Their Order}

The broad framework of \cite{broder,friedman_random_graphs}, stated
with $H_G$ replacing $A_G$, is as follows:
we consider
\begin{equation}\label{eq_trace_H_G_k}
f(k,n) = \EE_{G\in\cC_n(B)}[\Trace(H_G^k)]
=
\EE_{G\in\cC_n(B)}[\snbc(G,k)] ,
\end{equation} 
and for a fixed $r\in\naturals$ we want to find an asymptotic expansion
to order $r$ for $f(k,n)$, i.e.,
\begin{equation}\label{eq_asymp_bsf}
f(k,n)=
c_0(k) + c_1(k)/n + \cdots + c_{r-1}(k)/n^{r-1} + O(1) c_r(k)/n^r
\end{equation} 
for appropriate $c_i(k)$.  To do so, we write
\begin{equation}\label{eq_subdivide_by_order}
\snbc(G,k) = \snbc_0(G,k) + \snbc_1(G,k) + \cdots + \snbc_{r-1}(G,k)
+ \snbc_{\ge r}(G,k)
\end{equation}
recalling (Definition~\ref{de_order})
that $\snbc_m(G,k)$ denotes the number of SNBC walks whose visited subgraph
has order $m$, and $\snbc_{\ge r}(G,k)$ those of order at least $r$.
We now take expected values in \eqref{eq_subdivide_by_order}.

The first step is to show that for $k\le n/2$ we have
\begin{equation}\label{eq_walks_large_order}
\EE_{G\in\cC_n(B)}[\snbc_{\ge r}(G,k)] \le O(d-1)^k k^{2r}/n^r   
\end{equation} 
(\cite{friedman_random_graphs}, middle of page~352, based on
Lemma~3 of \cite{broder}).
Hence if we can show that for each $m=0,1,\ldots,r-1$ there
is an order $r$ expansion \eqref{eq_asymp_bsf} for 
\begin{equation}\label{eq_snbc_m_expected}
f_m(k,n)=\EE_{G\in\cC_n(B)}[\snbc_m(G,k)] ,
\end{equation} 
then it follows that we such an expansion for \eqref{eq_trace_H_G_k}.

\subsection{Homotopy Type}
\label{su_homotopy_type_examples}

The next broad step is to fix $m<r$ in \eqref{eq_snbc_m_expected},
and 
to subdivide all SNBC walks of order $m$ (in an arbitrary graph) as 
belonging to a finite number of
possible {\em homotopy types}.  Here is the rough idea.

If $w$ is an SNBC walk in an arbitrary graph, then each vertex of
$S=\ViSu(w)$ has degree at least two.  We easily see that
\begin{equation}\label{eq_order_local}
\ord(S)=(1/2)\sum_{v} \bigl( \deg_S'(v) - 2 \bigr)
\end{equation} 
(where $\deg_S'(v)$ is the degree of $v$ in $S$, except
that a half-loop contributes $2$ to the degree of a vertex,
since a half-loop's contribution to $\ord(S)$ is, by definition, $-1$).
This formula implies that there are a finite number of possible
{\em homotopy types} or ``shapes'' of such graphs
(see Lemma~2.4 of \cite{friedman_random_graphs}), for example,
if $S$ does not have half-loops (which is the case here since
$B$ is a bouquet of whole-loops), then
\begin{enumerate}
\item
if $m=0$, then $S$ is necessarily a {\em cycle} (i.e., $S$ is a connected
graph all of whose vertices have degree $2$)
\item
if $m=1$, (see Figure~\ref{fi_m_equals_one}, which reproduces
Figure~6 of \cite{linial_puder})
then except for the vertices of degree $2$,
either 
\begin{enumerate}
\item $S$ has one vertex of degree $4$, which is called a
``figure 8 graph,'' or 
\item $S$ has two vertices of degree $3$, which is called 
a ``barbell graph'' if each vertex has one whole-loop, or a
``theta graph'' if the two degree $3$ vertices are joined by
three edges
\end{enumerate}
\end{enumerate}
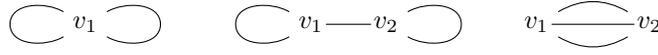
\begin{figure}
\begin{tikzpicture}
% \draw (0,0) arc [radius=1, start angle=0, end angle= 360];
\node at (0,0)(1){$v_1$};
\draw (1) to [out=150,in=90] (-1,0) to [out=270, in=210] (1) ;
\draw (1) to [out=30,in=90] (1,0) to [out=270, in=330] (1) ;
\node at (3,0)(11){$v_1$};
\node at (4,0)(12){$v_2$};
\draw (3.2,0) to (3.8,0) ;
\draw (11) to [out=150,in=90] (2,0) to [out=270, in=210] (11) ;
\draw (12) to [out=30,in=90] (5,0) to [out=270, in=330] (12) ;
\node at (6,0)(21){$v_1$};
\node at (7.5,0)(22){$v_2$};
\draw (6.2,0) to (7.3,0) ;
\draw (21) to [out=30,in=150] (22) ;
\draw (21) to [out=-30,in=-150] (22) ;
\end{tikzpicture}
\caption{``Figure 8,'' ``Barbell,'' and ``Theta''
homotopy types for SNBC walks of order $1$}
\label{fi_m_equals_one}
\end{figure}
% For any $m\in\naturals$, the number of homotopy types is 
% estimated in Lemma~2.4 of \cite{friedman_random_graphs}.
(If $B$ has half-loops, then there are more homotopy types of
order $1$, such as a cycle where one vertex has, in addition,
a half-loop, or a bouquet of two half-loops, or 
a path where the two endpoints each have an extra half-loop.)

Our notion of the homotopy type of a walk will remember
some extra information beyond the above homotopy type
defined for connected graphs each of whose vertices have degree
at least two.
We remark that the above notion of homotopy type is not the topological
one, since topologically the above figure 8, barbell, and theta
graphs are topologically homotopy equivalent.
Instead, our notion of homotopy type---common to trace methods
(e.g., Figure~6 in \cite{linial_puder})---is
the graph obtained by {\em suppressing} the {\em beads}, i.e., by
suppressing the
vertices of degree two in the graph, and retaining only the vertices
of degree three or higher.
% We also remark that 
% the above discussion requires minor modifications
% when $B$ has half-loops.

\subsection{Expansions for Order Zero Walks}

The methods of Broder-Shamir yield an estimate
\begin{equation}\label{eq_broder_shamir_expansion}
\EE_{G\in\cC_n(B)}[\snbc_0(G,k)] =
c_0(k) + O(1) k^2 (d-1)^k /n,
\end{equation} 
where $c_0=c_0(k)$ is a function of $k$ that for fixed $d$ and large 
$k$ satisfies the estimate
\begin{equation}\label{eq_broder_shamir_coefficient}
c_0(k)=(d-1)^k + O(k) (d-1)^{k/2}.
\end{equation}
Refinements of \eqref{eq_broder_shamir_expansion} are given
in \cite{friedman_random_graphs}, the simplest of which is
$$
\EE_{G\in\cC_n(B)}[\snbc_0(G,k)] =
c_0(k) + c_1(k)/n + O(k^4) (d-1)^k/n^2
$$
where $c_0(k)$ satisfies \eqref{eq_broder_shamir_coefficient} and
\begin{equation}\label{eq_next_total_coef_order_zero}
c_1(k) = p(k) (d-1)^k + O(k^2)(d-1)^{k/2}
\end{equation} 
where $p(k)$ is some polynomial that is not explicitly computed.

To show \eqref{eq_broder_shamir_coefficient} 
and \eqref{eq_next_total_coef_order_zero}, we first
make a general remark.  For any $G\in\Coord_n(B)$, we easily
see that any walk in $G$ of length $k$ can be written as 
\begin{equation}\label{eq_walk_in_G}
w_G = \bigl( (u,i_0), (e_1,i_0), (u,i_1), \ldots, (e_k,i_{k-1}), (u,i_k) \bigr)
\end{equation}
where $i_0,\ldots,i_k\in[n]$ and
\begin{equation}\label{eq_walk_in_B}
w_B = (u,e_1,u,e_2,\ldots,e_k,u)
\end{equation}
is a walk in $B$; we also easily check that
$w_G$ is SNBC iff $i_k=i_0$ and $w_B$ is SNBC in $B$.
Our discussion of homotopy type implies that $S=\ViSu(w_G)$ has
order $0$ implies that $S$ is a cycle of some length $k'$
(i.e., $S$ is connected with $k'$ vertices, each of degree two).
Hence it suffices to check, for each $k'$ between $1$ and $k$,
whether or not $w_G$ is a cycle of length $k'$; since $i_k=i_0$,
we must $k'|k$ ($k'$ divides $k$).

So fix an $i_0\in[n]$ and an SNBC
walk $w_B$ as in \eqref{eq_walk_in_B}; if $\sigma$ is the 
random permutation assignment associated to a $G\in\cC_n(B)$,
then $i_1,\ldots,i_k$ are given by
\begin{equation}\label{eq_i_given_recursively}
i_1 = \sigma(e_1)i_0, \ \ldots, i_k = \sigma(e_k)i_{k-1} ;
\end{equation} 
for $k'|k$,
let $\cE_{{\rm Cycle},k'}(w_B,i_0)$ be the event that the visited subgraph 
of $w_G$ is a cycle of length $k'$, where $w_G$ is
determined by $w_B,i_0$ and \eqref{eq_i_given_recursively}.
The Broder-Shamir method shows that 
\begin{equation}\label{eq_cycle_k_bs}
\Prob_{G\in\cC_n(B)}[\cE_{{\rm Cycle},k}(w_B,i_0)] = 1/n + O(k^2)/n^2;
\end{equation} 
their proof considers the probability that $i_1$ is distinct from
$i_0$, and then that $i_2$ is distinct from $i_0,i_1$, etc.;
this argument is quite robust and is used in
\cite{friedman_relative} for arbitrary base graph, $B$.
Checking that the eigenvalues of $H_B$ are $(d-1)$ (with multiplicity one)
and $\pm 1$, we easily
check that the number of SNBC walks, $w_B$, of length $k$ is
the trace of $H_B^k$, which is therefore $(d-1)^k + O(1)$.
Multiplying the right-hand-side of \eqref{eq_cycle_k_bs} by
$(d-1)^k + O(1)$ and then by $n$ for the $n$ possible values of $i_0$
yields the $(d-1)^k$ terms of \eqref{eq_broder_shamir_expansion}
and \eqref{eq_broder_shamir_coefficient}.

The only other contributions to $\snbc_0(G,k)$ come from $w_G$ whose
visited subgraph is a cycle of length $k'<k$ and $k'|k$,
and therefore $k'\le k/2$;  since such $w_B$ must be powers
of an SNBC walk of length $k'$, we get a terms bounded by
$(d-1)^{k/2}+O(1)$ for each $k'|k$, which implies
\eqref{eq_broder_shamir_expansion} and
\eqref{eq_broder_shamir_coefficient}.

To refine \eqref{eq_broder_shamir_expansion} we refine
\eqref{eq_cycle_k_bs} by showing that
\begin{equation}\label{eq_cycle_k_refined}
\Prob_{G\in\cC_n(B)}[\cE_{{\rm Cycle},k}(w_B,i_0)] = 
1/n + c_1(w_B)/n^2 + O(k^4)/n^2
\end{equation} 
where $c_1(w_B)$ genuinely depends on $w_B$, and is given by 
\begin{equation}\label{eq_next_coef_order_zero}
c_1(w_B) 
% = \sum_{j_1<j_2} a_{j_1} a_{j_2}
= \sum_{1\le j_1<j_2\le d/2} a_{w_B}(f_{j_1},f_{j_2}) .
\end{equation} 
where $a_{w_B}(f_j)$ is the number of times the directed edges
$f_j,\iota_B f_j$ appear in the list of edges $e_1,\ldots,e_k$.
This is based on the exact formula 
\begin{align}
\label{eq_prob_cycle_k}
& \Prob_{G\in\cC_n(B)}[\cE_{{\rm Cycle},k}(w_B,i_0)]  
\\
\label{eq_exact_formula_order_zero}
= & (n-1)(n-2)\ldots(n-k+1) 
\prod_{j=1}^{d/2} \frac{1}{n(n-1)\ldots \bigl( n-a_{w_B}(f_j)+1 \bigr)} ,
\end{align}
where the $(n-1)(n-2)\ldots(n-k+1)$ represents the number of ways
to choose $i_1,\ldots,i_{k-1}\in[n]$ such that
$i_0,\ldots,i_{k-1}$ are distinct, and the other terms are the
probability that \eqref{eq_i_given_recursively} holds
for any fixed such $i_0,\ldots,i_{k-1}$.
We then expand
\begin{equation}\label{eq_n_k_asymp}
(n-1)(n-2)\ldots(n-k+1) = n^{k-1} - n^{k-2} \binom{k}{2} +
O\bigl( n^{-k-3} \bigr) k^3
\end{equation} 
and similarly expand each factor in the product in
\eqref{eq_exact_formula_order_zero};
this, and the fact that $k=a_{w_B}(f_1)+\cdots+a_{w_B}(f_{d/2})$ yields 
\eqref{eq_cycle_k_refined} with
$$
c_1(w_B) = -\binom{k}{2} + \binom{a_{w_B}(f_1)}{2} + \cdots 
+ \binom{a_{w_B}(f_{d/2})}{2}
$$
which simplifies to \eqref{eq_next_coef_order_zero}.

Given \eqref{eq_cycle_k_refined}, we prove that
\begin{equation}\label{eq_snbc_sum}
\sum_{w_B\in \SNBC(B,k)} c_1(w_B) 
\end{equation} 
is of the form $p(k)(d-1)^k+(-1)^k q(k)+r(k)$ for
polynomials $p,q,r$ (see top of page~346, \cite{friedman_random_graphs}),
which gives the main term in \eqref{eq_next_total_coef_order_zero}.

We remark that \eqref{eq_snbc_sum} is an abstract sum over
$\SNBC(B,k)$ without reference to trace methods or models $\cC_n(B)$.
Article~II in this series and \cite{friedman_random_graphs} deal
with generalizations of such sums.

\subsection{Algebraic Models}

Sums like
\eqref{eq_snbc_sum} arise for similar reasons in our trace methods,
and the key fact is that $c_1(w_B)$ is a polynomial in
the parameters $a(f_j)=a_{w_B}(f_j)$, such as
\eqref{eq_next_coef_order_zero}.
This approach works for any model $\cC_n(B)$ where generalizations
of probabilities like \eqref{eq_prob_cycle_k}
have
similar asymptotic expansions in powers of $1/n$ to any fixed
order $r$, whose coefficients are polynomials in
the $a_j(w_B)$.
This is the main condition for a model to be {\em algebraic}.
We will elaborate on this later in this section.

\subsection{Expansions for Order One Walks: Theta Graph Example}
\label{su_theta_graph_example}

Next we discuss asymptotic expansions for
$$
\EE_{G\in\cC_n(B)}[\snbc_1(G,k)]  .
$$
In \cite{friedman_random_graphs} we write this as
$$
\EE_{G\in\cC_n(B)}[{\rm fig8}(G,k)]
+ \EE_{G\in\cC_n(B)}[{\rm barb}(G,k)]
+ \EE_{G\in\cC_n(B)}[{\rm theta}(G,k)]
$$
where ${\rm fig8}(G,k)$ denotes those walks, $w_G$, with
$S=\ViSu(w_G)$ being a figure 8 graph
(Figure~\ref{fi_m_equals_one}), and similarly for barbell graphs
and theta graphs.
Let us focus on the theta graph term and study some example of walks
whose visited subgraphs are theta graphs.

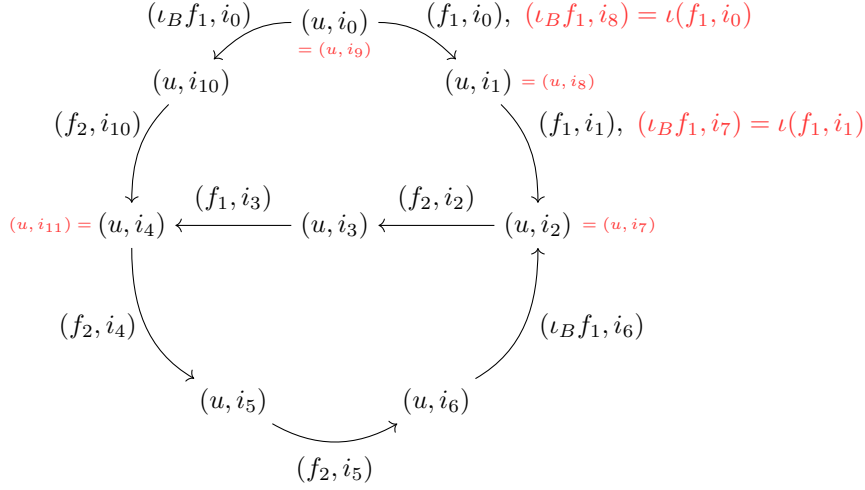
\begin{figure}
% \label{fi_theta_walk}
% \begin{tikzpicture}[scale=0.7]
% \node at (3,9)(0){$(u,i_0)$};
% \node at (4.5,6)(1){$(u,i_1)$};
% \node at (6,3)(2){$(u,i_2)$};
% \node at (3,3)(3){$(u,i_3)$};
% \node at (0,3)(4){$(u,i_4)$};
% \node at (2,1)(5){$(u,i_5)$};
% \node at (4,1)(6){$(u,i_6)$};
% \node at (2,7)(7){$(u,i_7)$};
% \node at (1,5)(8){$(u,i_8)$};
% \draw[->] (0) -- (1) ;  
% \node at (4.5,8)(11) {$(f_1,i_1)$};
% \draw[->] (1) -- (2) ;
% \node at (5.9,5)(11) {$(f_2,i_2)$};
% \draw[->] (2) -- (3) ;
% \node at (4.5,3.4)(11) {\small $(f_2,i_3)$};
% \draw[->] (3) -- (4) ;
% \node at (1.5,3.4)(11) {\small $(f_1,i_3)$};
% \draw[->] (4) -- (5) ;
% \draw[->] (5) -- (6) ;
% \draw[->] (6) -- (2) ;
% \draw[->] (8) -- (4) ;
% \draw[->] (7) -- (8) ;
% \draw[->] (0) -- (7) ;
% \end{tikzpicture}%
% \quad
\begin{tikzpicture}[scale=0.9]
\node at (0,3)(0){$(u,i_0)$};
\node at (0,2.6)(99){\tiny\red $=(u,i_9)$};
\node at (2.1213,2.1213)(1){$(u,i_1)$};
\node at (3.3,2.1213)(88){\tiny\red $=(u,i_8)$};
\node at (3,0)(2){$(u,i_2)$};
\node at (4.2,0)(77){\tiny\red $=(u,i_7)$};
\node at (0,0)(3){$(u,i_3)$};
\node at (-3,0)(4){$(u,i_4)$};
\node at (-4.2,0)(1111){\tiny\red $(u,i_{11})=$};
\node at (-1.5,-2.5981)(5){$(u,i_5)$};
\node at (1.5,-2.5981)(6){$(u,i_6)$};
\node at (-2.1213,2.1213)(7){$(u,i_{10})$};
\draw[->] (0) to [out=0,in=135] (1) ;
\draw[->] (1) to [out=-45,in=90] (2) ;
\draw[->] (2) -- (3) ;
\draw[->] (3) -- (4) ;
\draw[->] (4) to [out=-90,in=150] (5) ;
\draw[->] (5) to [out=-30,in=210] (6) ;
\draw[->] (6) to [out=30,in=-90] (2) ;
\draw[->] (0) to [out=180,in=45] (7) ;
\draw[->] (7) to [out=225,in=90] (4) ;
% \node at (2.0,3.1)(10){$(f_1,i_0)$} ;
\node at (2.0,3.1)(10){$\ \ \ \ \ \ \ \ \ \ \ \ \ \ \ \ \ \ \ \ \ \ \ \ \ \ \ (f_1,i_0),\ {\tiny\red (\iota_B f_1,i_8)=\iota(f_1,i_0)}$} ;
% \node at (3.4,1.5)(11){$(f_1,i_1)$} ;
\node at (3.4,1.5)(11){$\ \ \ \ \ \ \ \ \ \ \ \ \ \ \ \ \ \ \ \ \ \ \ \ \ \ \ \ \ \ \ (f_1,i_1),\ {\tiny\red (\iota_B f_1,i_7)=\iota(f_1,i_1)}$} ;
\node at (1.5,0.4)(12){$(f_2,i_2)$} ;
\node at (-1.5,0.4)(13){$(f_1,i_3)$} ;
\node at (-3.5,-1.5)(14){$(f_2,i_4)$} ;
\node at (0,-3.6)(15){$(f_2,i_5)$} ;
\node at (3.8,-1.5)(16){$(\iota_B f_1,i_6)$} ;
\node at (-3.5,1.5)(17){$(f_2,i_{10})$} ;
\node at (-2.0,3.1)(18){$(\iota_B f_1,i_0)$} ;
\end{tikzpicture}
\caption{The first 11 steps of walk that visits a theta graph}
\label{fi_sample_w_G_theta}
\end{figure}
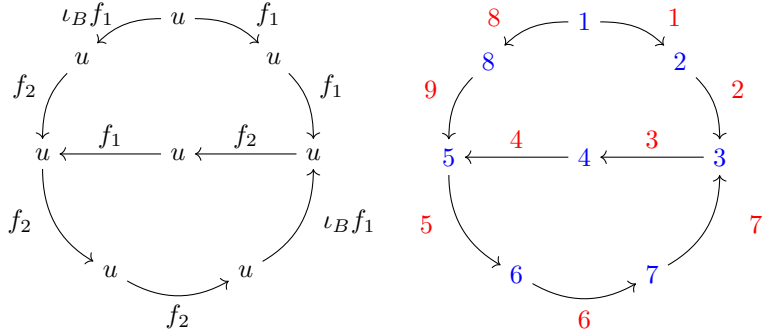
\begin{figure}
\begin{tikzpicture}[scale=0.6]
\node at (0,3)(0){$u$};
\node at (2.1213,2.1213)(1){$u$};
\node at (3,0)(2){$u$};
\node at (0,0)(3){$u$};
\node at (-3,0)(4){$u$};
\node at (-1.5,-2.5981)(5){$u$};
\node at (1.5,-2.5981)(6){$u$};
\node at (-2.1213,2.1213)(7){$u$};
\draw[->] (0) to [out=0,in=135] (1) ;
\draw[->] (1) to [out=-45,in=90] (2) ;
\draw[->] (2) -- (3) ;
\draw[->] (3) -- (4) ;
\draw[->] (4) to [out=-90,in=150] (5) ;
\draw[->] (5) to [out=-30,in=210] (6) ;
\draw[->] (6) to [out=30,in=-90] (2) ;
\draw[->] (0) to [out=180,in=45] (7) ;
\draw[->] (7) to [out=225,in=90] (4) ;
\node at (2.0,3.1)(10){$f_1$} ;
\node at (3.4,1.5)(11){$f_1$} ;
\node at (1.5,0.4)(12){$f_2$} ;
\node at (-1.5,0.4)(13){$f_1$} ;
\node at (-3.5,-1.5)(14){$f_2$} ;
\node at (0,-3.6)(15){$f_2$} ;
\node at (3.8,-1.5)(16){$\iota_B f_1$} ;
\node at (-3.4,1.5)(17){$f_2$} ;
\node at (-2.0,3.1)(18){$\iota_B f_1$} ;
\end{tikzpicture}\quad%
\begin{tikzpicture}[scale=0.6]
\node [blue] at (0,3)(0){$1$};
\node [blue] at (2.1213,2.1213)(1){$2$};
\node [blue] at (3,0)(2){$3$};
\node [blue] at (0,0)(3){$4$};
\node [blue] at (-3,0)(4){$5$};
\node [blue] at (-1.5,-2.5981)(5){$6$};
\node [blue] at (1.5,-2.5981)(6){$7$};
\node [blue] at (-2.1213,2.1213)(7){$8$};
\draw[->] (0) to [out=0,in=135] (1) ;
\draw[->] (1) to [out=-45,in=90] (2) ;
\draw[->] (2) -- (3) ;
\draw[->] (3) -- (4) ;
\draw[->] (4) to [out=-90,in=150] (5) ;
\draw[->] (5) to [out=-30,in=210] (6) ;
\draw[->] (6) to [out=30,in=-90] (2) ;
\draw[->] (0) to [out=180,in=45] (7) ;
\draw[->] (7) to [out=225,in=90] (4) ;
\node [red] at (2.0,3.1)(10){$1$} ;
\node [red] at (3.4,1.5)(11){$2$} ;
\node [red] at (1.5,0.4)(12){$3$} ;
\node [red] at (-1.5,0.4)(13){$4$} ;
\node [red] at (-3.5,-1.5)(14){$5$} ;
\node [red] at (0,-3.6)(15){$6$} ;
\node [red] at (3.8,-1.5)(16){$7$} ;
\node [red] at (-3.4,1.5)(17){$9$} ;
\node [red] at (-2.0,3.1)(18){$8$} ;
\end{tikzpicture}
\caption{$B$-Graph Structure and Ordered Graph Structure}
\label{fi_B_ordered_graph_theta}
\end{figure}

The number of SNBC walks of length $k$ in a fixed theta
graph $S$ is a far more complicated function of $k$ than in the case of
cycles and walks of order zero.  For example,
consider a walk $w_G$ whose visited subgraph, $S=\ViSu(w_G)$,
is determined by its first 11 ``steps''
$$
(u,i_0), (e_1,i_0), (u,i_1), \ldots, (e_{11},i_{10}), (u,i_{11})
$$
and
is depicted in Figure~\ref{fi_sample_w_G_theta}:
our conventions are that 
repeated vertices are indicated in red, and we
assume that these first 11 steps encounter every vertex and every 
edge---in at least one of its orientations---of $S=\ViSu(w_G)$.
Let us make some observations about such a walk, $w_G$, and about
$S=\ViSu(w_G)$ used in \cite{friedman_random_graphs}.

First, fix $i_0,\ldots,i_{11}\in [n]$, subject to the
constraints in Figure~\ref{fi_sample_w_G_theta}
($i_2=i_7$, $i_1=i_8$, etc.), and consider the event that this
graph occurs (i.e., exists)
as a subgraph of $G\in\cC_n(B)$: in terms of the
permutation assignment $\sigma$, this graph requires that
$\sigma(f_1)$ take on $5$ specified values:
$\sigma(f_1)i_j=i_{j+1}$ for $j=0,1,3$ and
$\sigma(\iota_B f_1)i_j=(\sigma(f_1))^{-1}i_j=i_{j+1}$ for $j=6,9$.
Similarly this graph requires that $\sigma(f_2)$ on $4$ specified values;
from this it is not hard to see that
the probability that $\sigma$ satisfies these constraints is
$$
\prod_{j=1}^{d/2} \frac{1}{n(n-1)\ldots(n-a(f_j)+1)}
$$
where $a(f_1)=5,a(f_2)=4$ count the number of times that $f_1,f_2$ occur
in $S$.
Hence for any given $S$, we can define $a(f_j)=a_S(f_j)$ to be 
the number of times that $f_j$ or $\iota_B f_j$ occurs in such a diagram,
and the probability that $S$ occurs is precisely
the product in \eqref{eq_exact_formula_order_zero}.
Similarly the number of ways of choosing $i_0,\ldots,i_{11}\in[n]$
subject to the indicated constraints is
$$
n(n-1)\ldots (n-b+1)
$$
where $b=b_S$ is the total number of vertices in $S$.  Hence the
expected number times $\sigma$ satisfies the above constraints
when $i_0,\ldots,i_{11}$ vary is 
\begin{equation}\label{eq_expected_S}
n(n-1)\ldots \bigl(n-b_S+1\bigr)
\prod_{j=1}^{d/2} \frac{1}{n(n-1)\ldots\bigl(n-a_S(f_j)+1\bigr)}
\end{equation} 
which is an analog of \eqref{eq_exact_formula_order_zero}
(with an additional factor of $n$ since $i_0$ is also varying).

\subsection{$B$-Graphs and Ordered Graphs}

We now make some observations regarding the formula \eqref{eq_expected_S} to
motivate the notions of a {\em $B$-graph} and an {\em ordered graph}:
\begin{enumerate}
\item this formula depends only on how $S$ maps to $B$ under
projection, which is depicted in the left diagram
of Figure~\ref{fi_B_ordered_graph_theta} (i.e., we omit
the $i_0,\ldots,i_{11}$);
\item this formula counts the $\cC_n(B)$ expected number of graphs, 
$S$, as above,
along with the order in which the vertices and edges are first encountered
along $w_G$,
and for each edge the orientation in which it is first encountered;
this order is depicted in the right diagram of
Figure~\ref{fi_B_ordered_graph_theta};
\item equivalently, this formula counts the $\cC_n(B)$
expected of walks that are {\em similar} to any fixed $w_G$ whose first 
$11$ steps are as indicated, where we call two walks similar if 
they differ by a renumbering of the $i_0,\ldots,i_{11}$ subject to
the imposed identities on these numbers (i.e., $i_7=i_2$, etc.).
\end{enumerate}

To express the above notions more precisely, we define:
\begin{enumerate}
\item a {\em $B$-graph} to be a graph, $S$, along with a map $S\to B$; and
\item an {\em ordered graph} to be a graph, $S$, along with (1) a total
ordering of its vertex set, (2) a total ordering of its edge set, and
(3) an orientation $\Eor_S\subset\Edir_S$; a walk, $w$ with
$S=\ViSu(w)$ induces its {\em first encountered} ordering on $S$ which
indicates the order in which the vertices, edges, and edge orientation
of each edge are first encountered along $S$.
(Some edges may be encountered in only one orientation.)
\end{enumerate}
We use the symbol $S_\Bg$ to denote a graph $S$ with a given 
$B$-graph structure,
i.e., a given morphism $S\to B$; we use $S^\og$ to denote a graph $S$
along with an ordering; and we use $S_\Bg^\og$ to denote a graph $S$
with both a $B$-graph structure and an ordering.
If $w_G$ is a walk in some $G\in\Coord_n(B)$, then
$w_G$ endows $S=\ViSu(w_G)$ with both a $B$-graph structure and an
ordering, and we sometimes write $\ViSu_\Bg^\og(w_G)$ to indicate
these structures.

We emphasize that if $S_\Bg^\og=\ViSu_\Bg^\og(w_G)$ for a walk 
in some $G\in\cC_n(B)$, then the ordering on $S$ is
important for a number of reasons:
\begin{enumerate}
\item
when we subdivide walks by homotopy type (below), the ordering is
useful to make sure that we are counting the number of walks correctly;
and
\item 
(more importantly) 
when $S_\Bg$ has non-trivial
automorphisms (as a $B$-graph),
then
\eqref{eq_expected_S} does not count the number of
subgraphs in $G\in\cC_n(B)$ that are isomorphic 
to $S_\Bg$ (as a $B$-graph).
\end{enumerate}
The second reason is subtle but extremely important:  if $S_\Bg^\og$
is an ordered $B$-graph, and $G_\Bg$ is a $B$-graph, we use the notation
\begin{enumerate}
\item
$[S_\Bg]$ for the class of $B$-graphs isomorphic to $S_\Bg$ (where
we forget its ordering), and $[S_\Bg]\cap G$ for the set of 
$B$-subgraphs $H_\Bg\subset G_\Bg$ with $H_\Bg\in[S_\Bg]$; and
\item
$[S_\Bg^\og]$ for the class of ordered $B$-graphs isomorphic to $S_\Bg^\og$
(as ordered $B$-graphs), and $[S_\Bg^\og]\cap G$ for the set of
$H_\Bg^\og\in [S_\Bg^\og]$ such that (when we forget the ordering on
$H_\Bg^\og$) $H_\Bg\subset G_\Bg$.
\end{enumerate}
Then for any $G_\Bg$ we easily see that
\begin{equation}\label{eq_ordered_vs_subgraphs}
\# [S_\Bg^\og]\cap G_\Bg = 
\bigl(\# [S_\Bg]\cap G_\Bg\bigr)\, \bigl(\# {\rm Aut}(S_\Bg) \bigr),
\end{equation} 
where ${\rm Aut}(S_\Bg)$ denotes the set of $B$-graph automorphisms of
$S_\Bg$; \eqref{eq_expected_S} equals
\begin{equation}\label{eq_expected_ordered}
\EE_{G\in\cC_n(B)}\bigl[ \# [S_\Bg^\og]\cap G_\Bg \bigr] 
=
\bigl(\# {\rm Aut}(S_\Bg) \bigr)
\EE_{G\in\cC_n(B)}\bigl[ \# [S_\Bg]\cap G_\Bg \bigr]  .
\end{equation} 
In Subsection~\ref{su_non_triv_aut} we give examples of $S_\Bg$ with
non-trivial automorphisms and comment more
on this formula.
Even for a fixed $S$, $\#{\rm Aut}(S_\Bg)$ generally depends on the
$B$-graph structure of $S$; 
for this reason, our trace methods---e.g., the definition of an
{\em algebraic} model---involve
$[S_\Bg^\og]\cap G$ rather than $[S_\Bg]\cap G$.

We remark that \eqref{eq_ordered_vs_subgraphs} implies that
$[S_\Bg^\og]\cap S_\Bg$ and ${\rm Aut}(S_\Bg)$ have the same size;
this formula, along with examples in Subsection~\ref{su_non_triv_aut},
may provide helpful intuition.

% We remark that in Article~III, just as in
% Proposition~9.5 \cite{friedman_alon}, the function 
% $$
% N(S_\Bg,G_\Bg) \eqdef [S_\Bg]\cap G
% $$
% features prominently when we define {\em M\"obius functions}.
% Hence both $[S_\Bg]\cap G$ and $[S_\Bg^\og]\cap G$ are
% important to our trace methods.

\subsection{Homotopy Type and Expected Walk Formulas in the Theta Graph
Example}

Let us return to the example depicted in
Figures~\ref{fi_sample_w_G_theta} and
\ref{fi_B_ordered_graph_theta}
of Subsection~\ref{su_theta_graph_example}.
We wish to define its {\em homotopy type} and a ``formula'' for the
number of SNBC walks of length $k$ whose ordered visited subgraph
is all of this graph, in the ordering depicted.
It is helpful to keep
in mind that eventually we will sum over all such $i_0,i_1,\ldots,i_{11}$
in $[n]$ (subject to $i_0=i_8$ and the other constraints), multiply
by the probability that such a graph is a subgraph of an
element of $\cC_n(B)$, and
ultimately we will use this---the
terminology and formulas---to estimate
\begin{equation}\label{eq_expected_theta_walks}
\EE_{G\in\cC_n(B)}[{\rm theta}(G,k)] ;
\end{equation} 
we will need to overcome several new difficulties that do not appear with
walks of order $0$, whose visited subgraphs are cycles.

So let $w_G$ be any walk in some $G\in\Coord_n(B)$ whose first 
$11$ steps are as in Figure~\ref{fi_sample_w_G_theta}, and such that
these $11$ steps already cover all of $S_\Bg^\og=\ViSu_\Bg^\og(w_G)$.
From the $B$-graph and ordered graph structure of $S$ depicted in
Figure~\ref{fi_B_ordered_graph_theta} we extract its 
{\em homotopy type} which is the ordered graph depicted on the
left side of Figure~\ref{fi_homotopy_walk}.

\begin{figure}
\begin{tikzpicture}[scale=0.6]
\node [blue] at (0,3)(0){$1$};
% \node [blue] at (2.1213,2.1213)(1){$2$};
\node [blue] at (3,0)(2){$2$};
% \node [blue] at (0,0)(3){$4$};
\node [blue] at (-3,0)(4){$3$};
% \node [blue] at (-1.5,-2.5981)(5){$6$};
% \node [blue] at (1.5,-2.5981)(6){$7$};
% \node at (-2.1213,2.1213)(7){$8$};
\draw[->] (0) to [out=0,in=90] (2) ;
% \draw[->] (1) to [out=-45,in=90] (2) ;
\draw[->] (2) -- (4) ;
% \draw[->] (3) -- (4) ;
\draw (4) to [out=-90,in=-180] (0,-3) ;
\draw[->] (0,-3) to [out=0,in=-90] (2) ;
% \draw[->] (5) to [out=-30,in=210] (6) ;
% \draw[->] (6) to [out=30,in=-90] (2) ;
\draw[->] (0) to [out=180,in=90] (4) ;
% \draw[->] (7) to [out=225,in=90] (4) ;
%
% Commenting out the old red labels; new ones go between
% \node [red] at (2.0,3.1)(10){$1$} ;
% \node [red] at (3.4,1.5)(11){$2$} ;
\node [red] at (2.75,2.35) {$1$} ;
% \node [red] at (1.5,0.4)(12){$3$} ;
% \node [red] at (-1.5,0.4)(13){$4$} ;
\node [red] at (0,0.4) {$2$} ;
% \node [red] at (-3.5,-1.5)(14){$5$} ;
% \node [red] at (0,-3.6)(15){$6$} ;
% \node [red] at (3.8,-1.5)(16){$7$} ;
\node [red] at (0,-3.5) {$3$} ;
% \node [red] at (-3.4,1.5)(17){$9$} ;
% \node [red] at (-2.0,3.1)(18){$8$} ;
\node [red] at (-2.75,2.35) {$4$} ;
\end{tikzpicture}\quad\quad\quad%
\begin{tikzpicture}[scale=0.6]
\node [blue] at (0,3)(0){$v_{T,1}$};
\node [blue] at (3,0)(2){$v_{T,2}$};
\node [blue] at (-3,0)(4){$v_{T,3}$};
\draw[->] (0) to [out=0,in=90] (2) ;
\draw[->] (2) -- (4) ;
\draw (4) to [out=-90,in=-180] (0,-3) ;
\draw[->] (0,-3) to [out=0,in=-90] (2) ;
\draw[->] (0) to [out=180,in=90] (4) ;
\node [violet] at (4.05,2.35) {$e_{T,1}\mapsto f_1f_1$} ;
\node [violet] at (0,0.4) {$e_{T,2}\mapsto f_2f_1$} ;
% \node [violet] at (0,-3.5) {$e_{T,3}\mapsto \iota_B f_1\, f_2f_2$} ;
\node [violet] at (0,-3.5) {$e_{T,3}\mapsto f_2 f_2 \,\iota_B f_1$} ;
\node [violet] at (-4.25,2.35) {$e_{T,4}\mapsto \iota_B f_1\,f_2$} ;
\end{tikzpicture}
\caption{A Homotopy Type, $T^\og$, and a Wording $\Edir_T\to (\Edir_B)^*$}
\label{fi_homotopy_walk}
\end{figure}

Here the {\em homotopy type} refers to the ordered graph $T^\og$
obtained by suppressing all the beads of
$S$ except its first vertex (if it is a bead); each directed edge of
$T$ correspond to a beaded path in $S$; the ordering on $S$
gives rise to a natural ordering on $T$.
The homotopy type in Figure~\ref{fi_homotopy_walk} has four vertices
and four edges that come oriented (eight directed edges) labeled there as
$$
V_T = \{ v_{T,1},v_{T,2},v_{T,3} \},  \quad
\Eor_T = \{ e_{T,1} , \ldots, e_{T,4} \} \subset 
\Edir_T = \{ e_{T,1},\iota_T e_{T,1},\ldots \iota_T e_{T,4} \} .
$$
To reconstruct $S_\Bg^\og$ from $T^\og$ we simply need to know
the walk in $B$ corresponding to directed edge in $T$;
since this walk is determined by its sequence of directed edges,
we define a {\em wording} on $T^\og$ to be this map
$$
W\from\Edir_T\to(\Edir_B)^* ,
$$
where $(\Edir_B)^*$ denotes the
set of finite sequences of elements of $\Edir_B$
(i.e., the set of words over the alphabet $\Edir_B$ is the sense
formal or regular language theory).
The right diagram of Figure~\ref{fi_homotopy_walk} depicts the wording
corresponding to the example of 
Figures~\ref{fi_sample_w_G_theta} and~\ref{fi_B_ordered_graph_theta};
in this diagram $W(\iota_T e_{T,j})$ is implicit, as it must be
the reverse walk of $W(e_{T,j})$.

Abstractly we say that a map
$W\from\Edir_T\to(\Edir_B)^*$ is a {\em wording} if
(1) for each $e\in\Edir_T$,
$W(e)$ is a non-backtracking walk and
$W(\iota_T e)$ is the reverse walk of $W(e)$, and (2)
for each $v\in V_T$, the first letter of each $W(e)$ with
$t_T e=v$ must have the same tail.
In this case, $W$ determines a $B$-graph (unique up to isomorphism)
that we denote by
$\VLG(T^\og,W)$ 
(in analogy with variable-length graphs,
see Article~III and Lemma~9.2 of \cite{friedman_alon}).

We emphasize that the above notion of homotopy type, $T^\og$, can
be defined for any SNBC walk, $w_G$, in any $G\in\Coord_n(B)$,
but this homotopy type depends only on $S_\Bg^\og=\ViSu_\Bg^\og(w_G)$.

The estimate of \eqref{eq_expected_theta_walks} in
\cite{friedman_random_graphs} is based on the following observations.
First if $S_\Bg^\og$ is any ordered $B$-graph whose homotopy
type is $T^\og$ above,
then the length of any walk $w_G$ whose
visited subgraph is $S_\Bg^\og$ equals
$$
k=k_1m_1+\cdots+k_4m_4 
% = \mec k \cdot\mec m
$$
where $k_i$ is the length of the path in $S_\Bg^\og$ corresponding to
the $i$-th oriented edge of 
$T$, and $m_i$ is the number times that this edge is traversed in
$w_G$ (as an edge, i.e., in either orientation); 
we will use $\mec k$ as shorthand for the vectors whose components
are the $k_i$, which can also be regarded a vector
$E_T\to\naturals$ indexed on $E_T$, 
and similarly for $\mec m$, whereupon the length, $k$,
of the walk equals
$$
\mec k\cdot \mec m \eqdef k_1m_1+\cdots+k_4m_4 .
$$
Furthermore, for fixed $S_\Bg^\og$ the number of
walks with these values of $m_1,\ldots,m_4$ is a function
$g=g(m_1,\ldots,m_4)$ depending only on the $m_j$ and not on
the wording.  
We then (fairly easily) prove that the total number of SNBC walks of length
$k$ and homotopy
type $T^\og$ is
\begin{equation}\label{eq_dot_conv_first_form}
\sum_{W\from \Edir_T\to (\Edir_B)^*} 
\quad\sum_{\mec k_W \cdot \mec m = k}  
f(W,n)
g(\mec m)
\end{equation} 
where (1) $g$ is as above, (2) $f(W,n)$ denotes
\eqref{eq_expected_S} ($b_S$ and the $a_S(f_j)$ can be inferred from
$W$),
(3) $\mec k_W$ are the lengths of the paths
corresponding to the directed edges of $T$
(which depend only on $W$),
and (4) the sum is over all wordings $W$ that represent wordings
that are {\em realizable}, i.e., are wordings of the visited subgraph
of some SNBC walk in $G\in\cC_n(B)$; let us discuss the notion
of realizability a bit further.

\subsection{Realizable Wordings, Letterings, and Regular Languages}
\label{su_realizable_letterings_regular}

To analyze \eqref{eq_dot_conv_first_form} we need to study the set
of wordings $W\from\Edir_T\to(\Edir_B)^*$ that can arise from 
graphs $S_\Bg^\og=\ViSu_\Bg^\og(w_G)$.
Aside from the conditions on $W$ in the definition of wording above,
if $S_\Bg\subset G_\Bg$ with $G_\Bg\in\Coord_n(B)$ then
$S_\Bg$ must be \'etale; in the permutation model this turns out to be
sufficient for \eqref{eq_expected_S} to hold.
Furthermore,
the condition for $\VLG(T^\og,W)$ to be \'etale is precisely that
for each $v\in V_T$, the first letters of the $W(e)$ with $t_T e=v$
are distinct.
It turns out to be convenient \cite{friedman_random_graphs,friedman_alon}
to define the {\em lettering} of $W$ to be the information consisting
of the first letter of each $W(e)$ (which through $W(\iota_T e)$
also specifies its last letter).
These articles sum over \eqref{eq_dot_conv_first_form} by fixing
all letterings that give rise to \'etale graphs, and summing over 
such $W$.  The advantage is that the set of all such $W$
of a fixed lettering
is a direct product indexed over any orientation $\Eor_B$ of $B$
of regular languages
$$
{\rm NBWALK}(B,e,e') 
$$
with $e,e'\in\Edir_B$,
defined to be the set of non-backtracking walks in $B$ whose
first directed edge is $e$ and whose last is $e'$.
One can then prove asymptotic expansions for 
\eqref{eq_expected_theta_walks} by summing over all fixed letterings
of the wordings $W$ determined by
$$
\prod_{e\in\Eor_T} {\rm NBWALK}\bigl(B,{\rm first}(e),{\rm last}(e)\bigr)
$$
using the techniques of Chapter~2 of \cite{friedman_random_graphs},
where ${\rm first}(e),{\rm last}(e)$ are those letters, i.e.,
elements of $\Edir_B$, specified by the lettering.

It turns out that similar asymptotic expansion theorems holds whenever
one sums \eqref{eq_expected_S} over the wordings determined by any direct
product
\begin{equation}\label{eq_regular_product}
\prod_{e\in\Eor_T} \cR(e)
\end{equation} 
where for each $e\in\Eor_T$, $\cR(e)$ is a regular language.
This turns out to be useful in defining algebraic models:
the main property of algebraic models is that, roughly speaking,
for each ordered
$B$-graph, $S_\Bg^\og$, we require that 
$$
\EE_{G\in\cC_n(B)}\bigl[ \# [S_\Bg^\og]\cap G \bigr] 
$$
has an asymptotic expansion
$$
c_0(S_\Bg) + c_1(S_\Bg)/n + \cdots + c_{r-1}(S_\Bg)/n^{r-1} +
O(1) c_r(S_\Bg)/n^r,
$$
where the $c_i$ are polynomials in the variables $\mec a=\mec a(S_\Bg)$
which count the number of times each edge of $E_B$ appears in
$S_\Bg$;
by \eqref{eq_ordered_vs_subgraphs},
the $c_i$ don't depend on the ordered graph structure of
$S_\Bg^\og$, so we write $c_i(S_\Bg)$ instead of $c_i(S_\Bg^\og)$.
However, for the permutation model,
$c_0=1$ for $S_\Bg$ where $S$ is a cycle, but
$c_0=0$ and $c_1=1$ when $S$ is a theta graph; hence
we see that the polynomials
must depend---at the very least---on the order of $S$.
Although for the permutation model these polynomials depend only 
on the order of $S$,
for other of our basic models---including the cyclic model---the 
polynomials
for the $c_i=c_i(S_\Bg)$ not only depend on the order and homotopy type 
$S$, but also also depend on a finite number of other features.
However, as long as these features can be expressed by sets
of the form \eqref{eq_regular_product}, then all of our main
theorems hold.
In other words, our main results will still hold provided that
for each homotopy type, $T^\og$, we can partition all wordings
$W\from\Edir_T\to(\Edir_B)^*$ into a finite number of sets, each of which
is a product \eqref{eq_regular_product} for some $\cR$, such that
some polynomial of $\mec a=\mec a_{S_\Bg}$ gives each $c_i(S_\Bg)$
for all the $S_\Bg$ determined by such wordings.

\subsection{Examples of $B$-graphs with Non-trivial Automorphisms}
\label{su_non_triv_aut}

\begin{example}
Let $d=2$, so that $\sigma$ represents a single permutation
$\sigma(f_1)$.  For $i_0,k\in[n]$, 
we easily see that
the probability that 
$i_0$ lies on a cycle of length $k$ is exactly $1/n$\footnote{
Indeed, setting $i_j\ne\sigma(f_1)i_{j-1}$ for all $j\in [k]$, 
we have
$i_1\ne i_0$ with probability $(n-1)/n$, and given this
the probability that $i_2$ is distinct from $i_0,i_1$ is
$(n-2)/(n-1)$, etc.,
and given all these events, $i_k=i_0$ with probability $1/(n-k)$.
 }.
So the expected number of cycles of length $k$ is $1/k$, but the
expected number of walks $w_G$ of the form
$$
(u,i_0), (f_1,i_1), (u,i_1), \ldots, (f_1,i_{k-1}),(u,i_k)
$$
with $i_0,\ldots,i_{k-1}$ distinct and $i_k=i_0$ is $1$.
This difference from $1/k$ to $1$ occurs because the automorphism
group of the cycle with a given $B$-graph structure is of order $k$.
Moreover, if we forget the $B$-graph structure, the number of
automorphisms of a cycle of length $k$, as a graph, is $2k$.
\end{example}

\begin{example}
Consider a $B$-graph with two vertices $v_1,v_2$, joined by
two edges from $v_1$ to $v_2$, one labeled $f_1$, the other
$\iota_B f_1$, plus one whole-loop at $v_1$ and
one at $v_2$.  This graph has order $2$
and has one non-trivial automorphism iff the whole-loops have the
same label.
Similarly if the whole-loops are replaced by beaded paths in which
each directed edge is labeled.
\end{example}

\begin{example}
Say that we allow $B$ to have half-loops, then say $f_1$ is a half-loop
and $f_2$ a whole loop.  Consider the $B$-graph $S_\Bg$ where $S$
is a barbell graph, where the ``bar'' is a 
single edge over $f_1$, and the other edges are labeled with
$f_2,\iota_B f_2$.  Then $S_\Bg$ has a non-trivial automorphism iff
the two loops have the same length.
More generally, a $B$-graph $S_\Bg$ where $S$ is a barbell graph
can have non-trivial automorphisms depending on the lengths of its loops
and how we label its edges.
Keeping track of this automorphism group would make our methods 
significantly more tedious.
\end{example}

We remark that any automorphism
of a graph $S_\Bg$ must take vertices of degree three or more
to 
themselves, and if each vertex of $S$ has degree at least two, then
then number of such vertices is at least one and at most $2\ord(S)$
unless $S$ is a cycle.
It easily follows that the size of the automorphism group is bounded
as a function of $\ord(S)$ if $\ord(S)\ge 1$;
this contrasts the case where $S$ is a cycle, 
where the size of the automorphism
group can be any number dividing the length of the cycle, depending
on the structure map $S\to B$.

\section{$B$-Graphs, Orderings, and Strongly Algebraic Models}
\label{se_ordered_B_strong_alg}

In this section we define the notion of an {\em strongly algebraic model}
which is a special case of {\em algebraic models} that are easier
to describe.  The permutation models and the permutation-involution models
of even degree are examples of strongly algebraic models.

% This section also serves to introduce much of the terminology needed to
% outline the contents of Articles~II--V in this series of articles.
% Motivation for this terminology is in Appendix~\ref{se_alge_intro}.
%
In order to define the term {\em strongly algebraic}
we introduce some terminology fundamental to our trace methods, such
as {\em $B$-graphs} and {\em ordered graphs}.
This terminology is illustrated in Section~\ref{se_bsf}, 
specifically the discussion regarding
Figure~\ref{fi_B_ordered_graph_theta}.

\subsection{$B$-Graphs}

\begin{definition}\label{de_B_graph}
Let $B$ be a graph.  By a {\em $B$-graph} we mean a graph, $G$, endowed
with a morphism
$\phi\from G\to B$; we typically write $G_\Bg$ for such a structure,
with $\phi$ understood.
We say that $\phi$ or $G_\Bg$ is an {\em \'etale} (respectively, {\em covering})
$B$-graph if $\phi$ is an \'etale (respectively, covering) morphism.
\end{definition}

A $B$-graph can therefore be viewed as a morphism $G\to B$, although the
concepts regarding $B$-graphs are usually understood as working with $G$
along with an underlying map $G\to B$.

\begin{definition}\label{de_B_graph_morphisms}
By a {\em morphism} of $B$-graphs,
from $\phi\from G\to B$ to $\phi'\from G'\to B$, or 
$G_\Bg\to G'_\Bg$, we mean a morphism
of the sources, i.e., $\nu\from G\to G'$, that respects the
$B$-structure in the evident sense, i.e., $\phi=\phi'\nu$.
\end{definition}
In the literature,
$B$-graphs are often called {\em graphs over $B$}, and this construction
is referred to as a
{\em slice category}.

We will use some common nomenclature regarding $B$-graphs.
If $G$ is a $B$-graph with $G\to B$ understood, we will speak of the
{\em $B$-graph structure on $G$} as the morphism $G\to B$.  Similarly,
if $G$ is a graph, to {\em endow $G$ with the
structure of a $B$-graph} means to specify a morphism $G\to B$.
If $\pi\from G\to B$ is a $B$-graph and $v\in V_B$, then the 
{\em vertex fibre of $v$
(in $G$)} refers to $\pi^{-1}(v)$ (more precisely $\pi_V^{-1}(v)$);
similarly for directed edge fibres and edge fibres.

\begin{example}
For any graph $B$ and $n\in\naturals$, any $G\in\Coord_n(B)$ comes
with its projection $\pi\from G\to B$,
which is its projection ``onto the first component,'' since $V_G=V_B\times[n]$
and $\Edir_G = \Edir_B\times [n]$
in view of \eqref{eq_coordinatized_digraph}.
We easily see that
$G$ is a covering $B$-graph; any subgraph of $G$ inherits a 
$B$-structure from $G$, and making it necessarily
an \'etale $B$-subgraph.
Each vertex fibre and directed edge fibre of $G$ is identified with $[n]$
(by its projection onto the second component, in 
\eqref{eq_coordinatized_digraph}).
\end{example}

\subsection{Ordered Graphs}

\begin{definition}\label{de_ordered_graph}
By an {\em ordered graph} we mean a $G^{\le}=(G,\Eor_G,\le_V,\le_E)$ where $G$
is a graph, $\Eor_G\subset\Edir_G$ is an orientation of $G$, $\le_V$ is a
total ordering of $V_G$, and $\le_E$ is a total ordering
of $\Eor_G$; we sometimes simply
write $G$ if the orientation and two orderings are understood.
By a {\em morphism} of ordered graphs $G_1^\og\to G_2^\og$
we mean a morphism of
graphs that preserves the orientations and two orderings in the evident sense.
\end{definition}
An ordered graph $G^\og$
has no non-trivial automorphisms, since
such a morphism would have to (1) be the identity on $V_G$ (since it
preserves the vertex ordering), (2) take
$\Eor_G$ to itself, (3) be the identity on $\Eor_G$
(by order preservation), and (4) therefore be the identity map on $\Edir_G$.
It follows that there is at most one isomorphism from one ordered graph to
another.

\subsection{First-Encountered Ordering}

\begin{definition}\label{de_first_encountered}
Let $w=(v_0,\ldots,e_k,v_k)$ be a walk in a graph, $G$,
and let $S=\ViSu_G(w)$.  By the
{\em $w$-first-encountered ordering of $S$} we mean the ordering on
the vertices of $S$ in the order in which they occur first in the
sequence $v_0,\ldots,v_k$, the orientation $\Eor_S\subset\Edir_S$ 
which gives the orientation in which an edge first occurs in the
sequence, $e_1,\ldots,e_k$, and the ordering on $E_S$ in the order
in which they first occur---in either orientation---in the sequence
% its ordering according to which occurs first (and its first orientation)
% in the sequence 
$e_1,\ldots,e_k$.
We use $\ViSu^\og_G(w)$ to denote $S$ with this ordering.
\end{definition}

\subsection{The Fibre Counting Functions $\mec a,\mec b$}

\begin{definition}\label{de_fibre_counting}
If $\phi\from S\to B$ is a $B$-graph, the 
{\em vertex-fibre counting vector} of $\phi$ or $G_\Bg$ is the vector
$\mec b=\mec b_{S_\Bg}\from V_B\to\integers_{\ge 0}$ given by
$$
b(v) \eqdef \# \phi^{-1}_V(v);
$$
one similarly defines the {\em directed-edge-fibre counting vector}
% $\mec a\from E_B\to\integers_{\ge 0}$ or
$\mec a=\mec a_{S_\Bg}\from \Edir_B\to\integers_{\ge 0}$ defined by
$$
a(e) \eqdef \# \phi^{-1}_E(e) ;
$$
it follows that $a(\iota_B e)=a(e)$ for any $e\in\Edir_B$, and
when convenient we may view $\mec a$ as a function
$E_B\to\integers_{\ge 0}$ whose value on an $\{e,\iota_B e\}$
is $a(\iota_B e)=a(e)$.
If $w$ is a walk in a $B$-graph, then we similarly define
$\mec a=\mec a_w$ (and $\mec b=\mec b_w$)
as $\mec a_{S_\Bg}$ (and $\mec b_{S_\Bg}$) where
$S_\Bg=\ViSu_\Bg(w)$.
\end{definition}

The vectors $\mec a,\mec b$ are crucial to our notion of {\em algebraic
models}: see 
\eqref{eq_permutation_model_exact_form},
\eqref{eq_strongly_algebraic}, and \eqref{eq_algebraic}
below, and \eqref{eq_exact_formula_order_zero} 
and \eqref{eq_expected_S} in Section~\ref{se_bsf} (where $\#V_B=1$).

\subsection{Ordered $B$-Graphs and Strongly Algebraic Models}

In this section we describe most of the conditions for a model to
be {\em strongly algebraic}; this involves ordered $B$-graphs.

\begin{definition}\label{de_ordered_B_graph}
By an {\em ordered $B$-graph}, $G_\Bg^\og$, we mean a graph, $G$, which is
endowed with an ordering and the structure of a $B$-graph.
By a {\em morphism} $G_\Bg^\og\to H_\Bg^\og$ we mean a morphism of
graphs $G\to H$ which respects the $B$-structure and the ordering.
\end{definition}

\begin{example}
If $w$ is a walk in a $B$-graph $G_\Bg$, then its visited subgraph,
$\ViSu(w)$,
is naturally endowed with the structure of a $B$-graph and of an
ordered graph; we sometimes write $\ViSu_\Bg^\og(w)$ to emphasize
these structures.
\end{example}

\begin{definition}\label{de_subgraph_counting}
If $S_\Bg^\og$ is an ordered $B$-graph, we use 
$[S_\Bg^\og]$ to describe the class of all ordered $B$-graphs isomorphic
to $S_\Bg^\og$ (as ordered $B$-graphs).  If $G_\Bg$ is another $B$-graph,
we use $[S_\Bg^\og]\cap G_\Bg$ to denote the set of ordered $B$-graphs,
$U_\Bg^\og\in [S_\Bg^\og]$ such that $U_\Bg$ is a $B$-subgraph of $G_\Bg$.
\end{definition}

If $[S_\Bg]$ refers to the isomorphism class of $S$ as a $B$-graph, with
its ordering ignored, and $[S_\Bg]\cap G_\Bg$ is the set of subgraphs of
$G_\Bg$ isomorphic to $S_\Bg$ as $B$-graphs, then
it is easy to see that
\begin{equation}\label{eq_automorphisms}
\#[S_\Bg^\og]\cap G_\Bg = \bigl( \#{\rm Aut}(S_\Bg)\bigr)
\bigl( \#[S_\Bg]\cap G_\Bg \bigr)
\end{equation} 
where ${\rm Aut}(S_\Bg)$ is the number of automorphisms of $S_\Bg$
(as a $B$-graph).  
However, $\#[S_\Bg^\og]\cap G_\Bg$ are better adapted to developing
asymptotic expansions, as the following example illustrates:
if $B$ has no half-loops and $\cC_n(B)$ is the permutation model,
then for any \'etale $B$-graph, $S_\Bg$, and any ordering on $S_\Bg$ we have
\begin{equation}\label{eq_ordered_inclusion_expansion}
\EE_{G\in\cC_n(B)}[ \#[S_\Bg^\og]\cap G_\Bg]
=
n^{-\ord(S)}\bigl(1+c_1/n+\cdots+c_{r-1}/n^{r-1}+O(1/n^r)\bigr)
\end{equation} 
for large $n$,
where the $c_i$ are universal polynomials in $\mec a=\mec a_{S_\Bg}$
and $\mec b=\mec b_{S_\Bg}$ of degree $2i$, otherwise independent (!)
of $S_\Bg$;
indeed, similar reasoning as in \eqref{eq_expected_S} shows that 
the right-hand-side of 
\eqref{eq_ordered_inclusion_expansion} equals
\begin{equation}\label{eq_permutation_model_exact_form}
\prod_{v\in V_B} \Bigl( n(n-1)\ldots\bigl(n-b_{S_\Bg}(v)+1\bigr) \Bigr)
\prod_{e\in E_B} \frac{1}{n(n-1)\ldots\bigl(n-a_S(e)+1\bigr)} ;
\end{equation} 
now we expand this in an asymptotic series in $1/n$ (as in
the bottom of page~336 of \cite{friedman_random_graphs}, or extending
the derivation of 
\eqref{eq_cycle_k_refined} 
and \eqref{eq_next_coef_order_zero}.

\begin{definition}\label{de_pruned}
We say that a graph $S$ is {\em pruned} if each vertex of $S$ has
degree at least two.
\end{definition}
We easily see that if $w$ is an SNBC walk of positive length 
in some graph, $G$, then $\ViSu_G(w)$ is pruned: otherwise some $v\in V_G$
is isolated, or incident upon exactly one half-loop or exactly one
edge that isn't a self-loop, and we easily check
that no SNBC walk can pass through $v$
(note that if $e$ is a half-loop about $v$, then $v,e,\ldots,e,v$
is not SNBC, since $\iota_G e=e$).

\begin{definition}\label{de_strongly_algebraic}
Let $B$ be a graph and 
$\{\cC_n(B)\}_{n\in N}$ be a model over $B$.
We say that a $B$-graph $S_\Bg$ {\em occurs} in $\{\cC_n(B)\}_{n\in N}$
if for all sufficiently large $n\in N$
there is a $G\in\cC_n(B)$, such that $G_\Bg$ has a $B$-subgraph
isomorphic to $S_\Bg$.
We say that the family
of probability spaces $\{\cC_n(B)\}_{n\in N}$ is
{\em strongly algebraic} provided that
\begin{enumerate}
\item for each $r\in\naturals$
there is a function, $g=g(k)$, of growth $\mu_1(B)$
such that if $k\le n/4$ we have
\begin{equation}\label{eq_algebraic_order_bound}
\EE_{G\in\cC_n(B)}[ \snbc_{\ge r}(G,k)] \le  
g(k)/n^r
% C k^{r+1}\Trace(H_B^k)/n^r 
\end{equation} 
(recall Definition~\ref{de_order} for $\snbc_{\ge r}$); 
\item
for any $r$ there exists 
a function $g$ of growth $1$ and real $C>0$ such that the following
holds:
for any ordered $B$-graph, $S_\Bg^\og$, that is pruned and of
order less than $r$,
\begin{enumerate}
\item
if $S_\Bg$ occurs in $\cC_n(B)$, then for
$1\le \#\Edir_S\le n^{1/2}/C$,
\begin{equation}\label{eq_expansion_S}
\EE_{G\in\cC_n(B)}\Bigl[ \#\bigl([S_\Bg^\og]\cap G\bigr) \Bigr]
=
c_0 + \cdots + c_{r-1}/n^{r-1}
% + O(1/n^r)(\# E_S)^{r'}
+ O(1) g(\# E_S) /n^r
\end{equation} 
where the $O(1)$ term is bounded in absolute value by $C$
(and therefore independent of $n$ and $S_\Bg$), and
where $c_i=c_i(S_\Bg)\in\reals$ such that
$c_i$ is $0$ if $i<\ord(S)$ and $c_i>0$ for $i=\ord(S)$;
and
\item
if $S_\Bg$ does not occur in $\cC_n(B)$, then for any
$n$ with $\#\Edir_S\le n^{1/2}/C$,
\begin{equation}\label{eq_zero_S_in_G}
\EE_{G\in\cC_n(B)}\Bigl[ \#\bigl([S_\Bg^\og]\cap G\bigr) \Bigr]
= 0 
\end{equation} 
(or, equivalently, no graph in $\cC_n(B)$ has a $B$-subgraph isomorphic to
$S_\Bg^\og$);
\end{enumerate}
\item
$c_0=c_0(S_\Bg)$ equals $1$ if $S$ is a cycle (i.e., $\ord(S)=0$ and
$S$ is connected) that occurs in $\cC_n(B)$;
\item
$S_\Bg$ occurs in $\cC_n(B)$ iff $S_\Bg$ is an \'etale $B$-graph
and $S$ has no half-loops; and
% $S_\Bg$ is an \'etale $B$-graph; and
\item
there exist
polynomials $p_i=p_i(\mec a,\mec b)$ such that $p_0=1$
(i.e., identically 1), and for every
\'etale $B$-graph, $S_\Bg^\og$ we have that
\begin{equation}\label{eq_strongly_algebraic}
c_{\ord(S)+i}(S_\Bg) = p_i(\mec a_{S_\Bg},\mec b_{S_\Bg}) \ .
\end{equation}
\end{enumerate}
\end{definition}
We write $c_i(S_\Bg)$ rather than $c_i(S_\Bg^\og)$, and similarly
in \eqref{eq_strongly_algebraic}, since 
\eqref{eq_automorphisms} implies that
that the $c_i$ do not depend on the
ordering on $S_\Bg^\og$.

[Of course, if $S_\Bg$ does not occur in $\cC_n(B)$, then
\eqref{eq_zero_S_in_G} implies that
\eqref{eq_expansion_S} holds trivially, with all $c_i=0$;
however, it seems better pedagogically to separate the case of
$S_\Bg$ occurring and not occurring in the model.]

Note that if $B$ does not have half-loops, then if $S\to B$ is \'etale,
then $S$ has no half-loops; however, in $B$ does have half-loops,
then in our basic model of odd degree $n$,
elements of $G\in\cC_n(B)$ can have half-loops, and then
formulas for 
$$
\EE_{G\in\cC_n(B)}\Bigl[ \#\bigl([S_\Bg^\og]\cap G\bigr) \Bigr]
$$
in terms of $\mec a_{S_\Bg},\mec b_{S_\Bg}$
depend on the half-loops in $S$.
Hence the polynomials $p_i$ in
\eqref{eq_strongly_algebraic} must depend on the kind of
half-loops in $S_\Bg$;  such a model cannot be
strongly algebraic.

Notice that the condition on cycles, $S$, is implied by the condition
$p_0=1$; however, it is convenient to leave it there when we define
an {\em algebraic} model.
We also note that if $S=\ViSu(w)$ where $w$ is an SNBC walk, then
$S$ is necessarily connected (clearly) and if $\ord(S)=0$ then
$S$ must be a cycle (in view of \eqref{eq_order_local}, where
each half-loop is counted as contributing $2$ to the degree of $v$
because of the definition of order).

We remark that for our main theorems we only need
\eqref{eq_expansion_S} to hold in the range 
$1\le \#E_S\le h(n)$ for a function $h(n)$ asymptotically larger
than $\log n$; however, for all our basic models
\eqref{eq_expansion_S} holds in the larger range $1\le \#E_S\le n^{1/2}/C$.
Also, in all our basic models we can take $g(\# E_S)$ to be
merely $(\#E_S)^{2r+2}$ (making $g$ a polynomial, and hence of
growth $1$); this more precise bound is unimportant to us.
Finally we remark that in all our basic models,
\eqref{eq_expansion_S} holds for all $S$ without
isolated vertices, or all $S$ without conditions if we replace
$\#E_S$ by $\#E_S + \#V_S$; however, we only need
\eqref{eq_expansion_S} for graphs, $S$, each of whose vertex is of
degree at least two, and working with such $S$ simplifies some
later considerations.

In Article~V will prove that the permutation model and the 
permutation-involution model of even degree are algebraic, 
as an easy consequence of
of Lemmas~3.7---3.9 of \cite{friedman_random_graphs}.
However, in the permutation-involution model of odd degree, 
one has different polynomials $p_i$ in
\eqref{eq_strongly_algebraic} depending on how many half-loops
$S_\Bg^\og$ contains.
An {\em algebraic model} allows the polynomial to depend on some
finite amount of data that we call the {\em $B$-type} of $S_\Bg$;
we will define $B$-type in Section~\ref{se_new_algebraic}.

\section{The Homotopy Type of a Walk and VLG's (Variable-Length Graphs)}
\label{se_new_homot}

In this section we define the {\em homotopy type} of an SNBC walk
in a graph and {\em variable-length graphs (VLG's)},
similar to Section~3 of
\cite{friedman_alon}; see also Section~\ref{se_bsf} of this article.
The main theorems in Article~II,
used by Article~III, require these definitions.

In this section we make these notions precise, which are well known
but a bit tedious to spell out;
the essential ideas are illustrated by the 
examples in Section~\ref{se_bsf}.

\subsection{Bead Suppression}

\begin{definition}
\label{de_bead}
Let $S$ be a graph.
By a {\em bead of $S$} we mean a vertex of $S$
that is of degree two and not
incident upon a self-loop.  Let $V'\subset V_S$ be any subset consisting
entirely of beads;
by a {\em $V'$-beaded path} in $S$ we mean a non-backtracking walk
$(v_0,\ldots,e_k,v_k)$ in $S$ such that $v_0,v_k\notin V'$ but
$v_1,\ldots,v_k\in V'$.
\end{definition}
A beaded path is therefore a walk, involving directed edges; hence
its visited subgraph is a graph known as a ``path''
(i.e., a tree with two leaves and all other vertices of degree two).
However,
a beaded path---as opposed to a path---also specifies 
an orientation of the edges and a 
walk from one end to the other.

The idea of {\em homotopy type}---of a graph or an
ordered graph---is to classify them by
``suppressing'' as many beads as possible.  
(This is therefore a
refinement of the topological notion of homotopy type,
since any two connected graphs without half-loops and of the
same order are topologically homotopy equivalent).
However, this idea only works well on certain classes of graphs:
for example, if we work with connected graphs each of whose vertices
has degree at least two, then there are finitely many homotopy types
of graphs of a fixed order; see 
Subsection~\ref{su_homotopy_type_examples}.
However, if one allows graphs to have vertices of degree one,
then the number of homotopy types---defined by suppressing all the
beads of a graph---becomes infinite, 
even connected graphs of order $-1$, i.e., trees.
Similarly, the notion of the homotopy type of an ordered graph
works well on ordered graphs of the form $\ViSu^\og(w)$ when
$w$ is an SNBC walk in a graph (or a non-backtracking walk), but
not for general walks, $w$.
So some care must be taken when we ``suppress'' the beads of a graph
or of an ordered graph to define its homotopy type.
Here is an easy lemma in this direction.

\begin{lemma}\label{le_suppression_unique_diedge}
Let $S$ be a connected graph and $V'\subset V_S$ any set of beads
such that $V'$ is a proper subset of $V_S$ if $S$ is a cycle.
Then any $e\in \Edir_S$ lies on a unique $V'$-beaded path in $S$.
\end{lemma}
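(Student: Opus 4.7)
The plan is to construct the unique $V'$-beaded path through $e$ by extending $e$ maximally forward and backward using the canonical non-backtracking continuation available at a bead. First I would record the deterministic step: if $v\in V'$ is a bead and $e_{\mathrm{in}}\in\Edir_S$ has $h_S(e_{\mathrm{in}})=v$, then because $v$ has degree two and is not incident to a self-loop, there is exactly one $e_{\mathrm{out}}\in\Edir_S$ with $t_S(e_{\mathrm{out}})=v$ and $e_{\mathrm{out}}\neq \iota_S e_{\mathrm{in}}$. Setting $e_1=e$, I would inductively define $e_{i+1}$ to be this unique continuation from $e_i$ so long as $h_S(e_i)\in V'$, and symmetrically extend backward from $t_S(e_1)$. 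If both extensions terminate, their concatenation is a non-backtracking walk containing $e$, with interior vertices in $V'$ and endpoints outside $V'$, i.e., a $V'$-beaded path.

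The hard part will be showing the forward extension actually terminates (backward being symmetric). I would argue by contradiction: if the forward extension never exits $V'$, then by finiteness of $\Edir_S$ some directed edge must repeat, and by the determinism property the sequence $(e_i)$ becomes eventually periodic. Let $S'\subseteq S$ be the subgraph visited during one period; every vertex of $S'$ lies in $V'$ and is thus a bead of $S$, of degree two in $S$. On the other hand, at each such vertex $v$ the period both enters $v$ via some directed edge and exits via the unique non-backtracking continuation, which belongs to the other edge at $v$ (no self-loops are available), so both edges of $S$ at $v$ lie in $S'$. Hence $v$ has the same degree in $S'$ as in $S$, no edge of $S$ can leave $V_{S'}$, and by connectivity of $S$ we get $S'=S$. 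This forces $S$ to be a cycle with $V_S\subseteq V'$, contradicting the hypothesis on $V'$.

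Uniqueness then follows at once from the deterministic step: any $V'$-beaded path $(v_0,e_1,\ldots,e_k,v_k)$ containing $e=e_i$ must, at each interior vertex $v_j\in V'$ (a bead), take the unique non-backtracking continuation from $e_j$, and must stop precisely when it first reaches a vertex outside $V'$ in either direction. Thus it agrees step for step with the extension constructed above, proving that the $V'$-beaded path through $e$ exists and is unique.
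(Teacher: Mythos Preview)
Your proof is correct and follows essentially the same approach as the paper: both construct the beaded path by deterministic forward and backward extension from $e$, using the fact that a bead admits exactly one non-backtracking continuation, and both derive uniqueness from this determinism. The only cosmetic difference is in the termination argument: the paper tracks vertices and argues directly that a repeated interior vertex would have degree at least three (and that returning to $v_0$ forces $S$ to be a cycle with $V'=V_S$), whereas you track directed edges, pass to the periodic part, and show that the visited subgraph of one period is a degree-two component of $S$, hence all of $S$. These are minor variants of the same idea and yield the same contradiction.
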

\begin{proof}
Set $e_1=e$, $v_1=h_S e$, $v_0=t_S e$.
If $v_1=h_S e$ lies in $V'$, then $v_1$ is of degree two and not
incident upon a self-loop; hence $e\ne \iota_S e$,
and $v_1$ is incident upon one edge
in $S$ other than $\{ e,\iota_S e \}$.  It follows that there
is a unique $e_2\in\Edir_S$ such that
$$
(v_0,e_1,v_1,e_2,h_S e_2)
$$
is non-backtracking.  Continuing in this manner, we construct
a unique non-backtracking walk
$$
(v_0,e_1,v_1,\ldots,e_m, v_m)
$$
for some $m\le \#V_S$
with $v_0,\ldots,v_{m-1}$ distinct elements of $V'$
and $v_m$ either equal to one of
$v_0,\ldots,v_{m-1}$ or $v_m\notin V'$.
We claim that (1) $v_m$ cannot equal any $v_i$ with $i<m$
(or otherwise $v_i$ is of degree greater than two),
(2) $v_m\ne v_0$ (or else $S$ is a cycle and $V'=V_S$).
Hence $v_m\notin V'$.  Similar we walk ``along $\iota_S e_1$'' to
construct a unique two-sided non-backtracking walk
$$
(v_{-m'},e_{-m'+1},v_{-m'+1},\ldots,v_0,e_1,\ldots,e_m,v_m)
$$
where $m'\ge 0$, $v_{-m'}\notin V'$, and all the $v_i$ with
$-m'\le i\le m-1$ distinct elements of $V'$.
This is clearly the unique $V'$-beaded path of $S$ containing $e$.
\end{proof}

This lemma motivates the following definition.

\begin{definition}\label{de_proper_bead_set}
Let $S$ be a graph.  We say that a subset $V'\subset V_S$ of beads
of $S$ is a {\em proper bead set of $S$}
if $V'$ does not contain all the vertices of any connected component
of $S$ that is a cycle.
\end{definition}

\begin{definition}\label{de_suppression}
Let $S$ be a graph, and $V'\subset V_S$ be a proper bead set of $S$.
We define the {\em suppression of $V'$ in $S$},
denoted $S/{V'}$, to be the graph, $T$, given as:
\begin{enumerate}
\item $V_T=V_S\setminus V'$ (i.e., the complement of $V'$ in $V_S$);
\item $\Edir_T$ is the set of $V'$-beaded paths in $S$;
\item for $e_T=(v_0,\ldots,e_k,v_k)\in \Edir_T$, we define its tail 
(i.e., $t_Te_T$) to be
$v_0$, its head (i.e., $h_Te_T$) to be $v_k$, and $\iota_Te_T$ to be
its {\em reverse walk}, i.e., $(v_k,\iota_S e_k,\ldots,\iota_S e_1,v_0)$.
\end{enumerate}
In addition, for $e_T=(v_0,\ldots,e_k,v_k)\in \Edir_T$, we define the
{\em length} of $e_T$ to be $k$;
since the lengths of $e_T$ and $\iota_T e_T$ are the same,
we define the {\em length} of an edge in $E_T$ to be the length of
an orientation of this edge.
\end{definition}

Notice that by definition each directed edge of $T=S/V'$ is
a walk in $S$; hence---for pedantic reasons---one can completely
reconstruct $S$ from $T=S/V'$: each directed edge
$e_T=(v_0,\ldots,e_k,v_k)$ comes paired with its inverse edge
$\iota_T e_T = (v_k,\iota_S e_k,\ldots, \iota_S e_1,v_0)$,
and so we can recover not only $V_S,\Edir_S,h_S,t_S$, but
the pairing 
allows us to determine how $\iota_S$ acts.

However, we define the {\em homotopy type} of a graph $S$
(or of an ordered graph) in
terms of the isomorphism class of $S/V'$ rather than $S/V'$ 
itself.  It becomes important to note that
we can reconstruct $S$ up to isomorphism (as a graph)
provided that we know a graph,
$T$, isomorphic to $S/V'$, and the function
$\mec k\from \Edir_T\to\naturals$ or $E_T\to\naturals$
that gives the length of directed edge or edge of $S/V'$ under
the isomorphism from $T$ to $S/V'$.

\begin{example}\label{ex_order_one}
The usual graph theoretic notion of the
{\em homotopy type} of a connected
graph $S$ is (any graph isomorphic to) the
suppression $S/V'$ of all beads of $V_S$,
except that
$V'$ omits one vertex if $S$ is a
cycle of length at least two.
For example, if $S$ is a connected graph of order $1$ without self-loops
and leaves (i.e., vertices of degree $1$), then
$S$ is of one of three homotopy types:
figure-eight, barbell, or theta (see, for example,
Figure~\ref{fi_m_equals_one} above or
Figure~6 in \cite{linial_puder}).
\end{example}

For an SNBC walk in a graph, we want to define a notion of
{\em homotopy type} that will ``remember'' its first encountered
ordering;
this will consist of the usual homotopy type of a graph, but
have some additional information that we now make precise.

\subsection{The Homotopy Type of a Non-Backtracking Walk and of its 
Ordered Visited Subgraph}

\begin{definition}\label{de_homotopy_walk}
Let $w$ be a non-backtracking walk
in some graph, $G$, and let $S^\og=\ViSu^\og(w)$.
% By a {\em bead of $w$} we mean any bead of $S$ that is not the
% first vertex of $w$.
By the {\em reduction} of $w$ is the ordered graph $R^\og$ where:
\begin{enumerate}
\item
$R=S/V'$, where $V'$ is the set of all beads of $w$ except the
first and last vertices of $w$ (if one or both of them are beads);
\item
the ordering $R^\og$ is given as follows:
\begin{enumerate}
\item the vertex ordering for $v_1,v_2\in V_R$ is $v_1<v_2$ iff
$v_1$ is encountered first before $v_2$ along $w$;
\item the orientation of $R$ are those $e\in \Edir_R$ whose 
corresponding beaded-path is encountered before the reverse beaded-path
along $w$;
\item the edge ordering is $e_1<e_2$ if the beaded-path
corresponding to the orientation of $e_1$ is encountered along $w$ before the
one corresponding to $e_2$.
\end{enumerate}
\end{enumerate}
We also write $S^\og/V'$ for $R^\og$ to emphasize the ordering.
By the {\em edge-lengths of $w$ on $R$} we mean the edge-lengths of $S/V'$.
We say that $w$ is {\em of homotopy type $T^\og$}
if $R^\og\isom T^\og$ (as ordered graphs); in this case the isomorphism
is unique, and the {\em edge-lengths of $w$ in $T^\og$} are the edge-lengths
$E_T\to\integers$ (and $\Edir_T\to\integers$) obtained from composing
this unique isomorphism with the edge-lengths on $R$.
\end{definition}

We easily see that in the above definition 
we can recover the ordering of $S^\og=\ViSu^\og_G(w)$ from the ordering
on the homotopy type of $w$: the point is that if a non-backtracking
walk encounters the first directed edge in a beaded path, then it must
immediately traverse the entire beaded path.
This is not true of general walks $w$, and the above definition
does not work well in this general case.

Notice that if in the above definition $w$ is an 
SNBC walk in some graph, then (1) the first and last vertices of $v$ 
are equal, and (2) each degree of a vertex in
$S=\ViSu(w)$ is at least two.
(If $w$ is merely closed and non-backtracking, then
property~(1) holds but not
generally property~(2).)
Since we are interested in SNBC walks, we will have properties (1)
and (2).
In Article~II it turns out to be convenient to know
that a graph, $G$, contains a 
$(\ge \nu,<r)$-tangle iff it contains such a tangle where each vertex
is of degree at least two (obtained by repeatedly
``pruning'' all leaves in the
tangle).

In the above definition we do not suppress the first and last 
vertices of $w$.
This implies that $V'$ is automatically a proper bead set
of $S$, which is convenient.  
However, the real reason we do not suppress the first and last 
vertices of $w$
(even if $S$ is not a cycle)
is that we need the first and last vertex to correctly
reconstruct the order $S^\og$
from the order on its homotopy type, $T^\og$
(which we cannot do if the walk does not begin and end on vertices 
in $T^\og$).

Clearly the reduction of $R^\og$ of a walk, $w$,
depends only on information that can be inferred from $\ViSu^\og(w)$;
this enables us to make the following definition.

\begin{definition}\label{de_homotopy_SNBC_visited}
If $S^\og$ is an ordered graph that is the visited subgraph of some 
non-backtracking
walk, $w$, on some graph, we define the {\em reduction}, {\em homotopy type},
and {\em edge-lengths} of $S^\og$ to be those of the walk $w$.
\end{definition}

Our trace methods will count the SNBC walks in a graph by dividing them
into
their homotopy types (as do 
\cite{broder,friedman_random_graphs,friedman_alon}).
Here are the particular counting functions.

\begin{definition}\label{de_snbc_walks_of_homotopy_type}
Let $G$ be a graph and $T^\og$ an ordered graph.  For $k\in\naturals$ 
and $\mec k$ functions $E_T\to\naturals$, we use
\begin{enumerate}
\item $\SNBC(T^\og; G,k)$ to denote the set of SNBC walks in $G$ of
length $k$ and homotopy type $T^\og$;
\item $\SNBC(T^\og,\mec k; G,k)$ to denote those elements of
$\SNBC(T^\og; G,k)$ whose edge-lengths in $T$ equal $\mec k$;
\item $\SNBC(T^\og,\ge\mec k; G,k)$ to denote
$$
\bigcup_{\mec k'\ge\mec k}\SNBC(T^\og,\mec k';G,k) 
$$
where $\mec k'\ge\mec k$ means $k'(e)\ge k(e)$ for all $e\in E_T$;
and
\item in the above, we replace $\SNBC$ with $\snbc$ to denote the
cardinality of such a set.
\end{enumerate}
\end{definition}
The sets $\SNBC(T^\og;G,k)$ and $\SNBC(T^\og,\mec k;G,k)$
are implicit in \cite{broder} and appear
explicitly in \cite{friedman_random_graphs};
however,
% since there a walk of any homotopy type can occur (for $d$-regular graphs
% with $d$ sufficiently large).
% The 
the sets $\SNBC(T^\og,\ge\mec k; G,k)$ are special to our
{\em certified traces}, a concept which is simpler than (and replaces) the
{\em selective traces} of \cite{friedman_alon}.

\subsection{Variable-Length Graphs (VLG's)}

Variable-length graphs (VLG's) is, in a sense, the opposite of
bead suppression.
They were introduced by Shannon \cite{shannon}
in the context of directed graphs
to model Morse Code; 
we will need VLGs 
(see Theorem~\ref{th_expansion_thm_for_types})
for the same reasons
they were needed in \cite{friedman_alon}, beginning in Section~3.4 there.

Informally, if $T$ is a graph and $\mec k\from E_T\to\naturals$ a 
function, then the variable-length graph $\VLG(T,\mec k)$ refers to
a graph (it is not unique)
obtained by replacing each edge $e\in E_T$ with a path of
length $k(e)$.
Here is a more precise definition.

\begin{definition}
\label{de_VLG}
Let $T$ be a graph and $\mec k\from E_T\to\naturals$ be a function
such that $k(e)=1$ whenever $e$ is a half-loop.
By a
{\em variable-length graph (VLG) on $T$ with edge-lengths $\mec k$}, denoted
$\VLG(T,\mec k)$, we mean any graph $S$ such that
for some subset of beads, $V'\subset V_S$ we have
(1) $T$ is isomorphic to $S/V'$, (2) the edge-lengths of $S/V'$
are $\mec k$ (under this isomorphism), and
(3) $V'$ omits at least vertex in
each connected component of $S$.
\end{definition}
It is immediate that $S=\VLG(T,\mec k)$ always exists: we form $S$ from
$T$, by taking each
$e\in E_T$
such that $k(e)>1$ and replacing $e$
with $k(e)$ edges that form 
an (undirected) path of length $k(e)$, which in the process
introduces $k(e)-1$ new vertices
(all of which become beads in $S$).
Sometimes VLG's are defined as pairs $T,\mec k$ as above, and
$\VLG(T,\mec k)$ is defined as the {\em realization} of $(T,\mec k)$.
Of course, $\VLG(T,\mec k)$ is only defined up to isomorphism.
We illustrate this construction in Figure~\ref{fi_vlg_example}.

\begin{figure}
\begin{tikzpicture}[scale = 0.6]
\node at (-3,0)(1){$v_1$};
\node at (3,0)(2){$v_2$};
\node at (0,4)(3){$v_3$};
\draw (1) to (3) ;
\node at (-1.9,2) {\red $4$} ;
\draw (2) to (3);
\node at (1.9,2) {\red $3$} ;
\draw (1) to [out=150,in=90] (-5,0) to [out=270, in=210] (1) ;
\node at (-5.3,0) {\red $2$};
\draw (1) to [out=30,in=150] (2) ;
\node at (0,1.4) {\red $1$} ;
% \draw (1) to [out=-30,in=-150] (2) ;
\draw (1) to [out=-30,in=180] (0,-1) to [out=0,in=-150] (2) ;
\node at (0,-0.7) {\red $2$} ;
\end{tikzpicture}\quad\quad%
\begin{tikzpicture}[scale = 0.6]
\node at (-3,0)(1){$v_1$};
\node at (3,0)(2){$v_2$};
\node at (0,4)(3){$v_3$};
\draw[fill] (-0.75,3) circle [radius = 0.1] ;
\draw[fill] (-1.5,2) circle [radius = 0.1] ;
\draw[fill] (-2.25,1) circle [radius = 0.1] ;
\draw (1) to (3) ;
\draw[fill] (2,1.3333) circle [radius = 0.1] ;
\draw[fill] (1,2.6666) circle [radius = 0.1] ;
\draw (2) to (3);
\draw[fill] (-5,0) circle [radius = 0.1] ;
\draw (1) to [out=150,in=90] (-5,0) to [out=270, in=210] (1) ;
\draw (1) to [out=30,in=150] (2) ;
% \draw (1) to [out=-30,in=-150] (2) ;
\draw[fill] (0,-1) circle [radius = 0.1];
\draw (1) to [out=-30,in=180] (0,-1) to [out=0,in=-150] (2) ;
\end{tikzpicture}
\caption{$T$ and $\mec k\from E_T\to\naturals$ (left)
and $\VLG(T,\mec k)$ (right)}
\label{fi_vlg_example}
\end{figure}

Definition~\ref{de_VLG} shows that forming VLG's is a sort of
``opposite'' of forming bead suppressions.

\section{Algebraic Models}
\label{se_new_algebraic}

In this section we define {\em algebraic models}, which
are models that are strongly algebraic except that the
polynomials $p_i$ in
\eqref{eq_strongly_algebraic} can depend on some information
including regarding $S_\Bg$ which includes the homotopy type of $S_\Bg$.
The precise information is called a {\em $B$-type}, which is based on
{\em $B$-wordings} which we define beforehand; we will also need some
background on {\em regular languages}, which we now review.

\subsection{Regular Languages and ${\rm NBWALKS}(B)$}

We will use some notions 
from the theory of regular languages (e.g., \cite{sipser},
Chapter~1): if $\cA$ is an alphabet (i.e., a finite
set) and $k\in\integers_{\ge 0}$, we use
$\cA^k$ to denote the set of {\em words} (or {\em strings}) of length $k$
over $\cA$ (i.e., finite sequences
of $k$ elements of $\cA$), and we use $\cA^*$ to denote the union of
$\cA^k$ over all $k\in\integers_{\ge 0}$; 
a {\em language} over $\cA$ (i.e., a subset of
$\cA^*$) is {\em regular} if
it is recognized by some (deterministic) finite automaton or, equivalently,
if it can be expressed as a regular expression.

\begin{definition}\label{de_NBWALKS}
If $B$ is a graph, recall that a walk $w=(v_0,\ldots,e_k,v_k)$ of positive
length (i.e., $k\ge 1$) is determined by its sequence 
$(e_1,\ldots,e_k)\in(\Edir_B)^*$ of directed edges.
If $B$ is a graph, we use ${\rm NBWALKS}(B)\subset (\Edir_B)^*$ to
denote those words $(e_1,\ldots,e_k)$ of positive length 
that are the directed edges of a non-backtracking walk in $B$, i.e., for
which $t_B e_i = h_B e_{i-1}$ and $\iota_B e_i\ne e_{i-1}$ for all 
$2\le i\le k$.
Similarly, if $e,e'\in\Edir_B$, we use
${\rm NBWALKS}(B,e,e')$ to denote the subset of $(e_1,\ldots,e_k)\in
{\rm NBWALKS}(B)$ for which $e_1=e$ and $e_k=e'$.
\end{definition}
For all graphs $B$ 
${\rm NBWALKS}(B)$ is a
regular language, since the possible values of $e_i$ are determined by
those of $e_{i-1}$ for a word $(e_1,\ldots,e_i)\in{\rm NBWALKS}(B)$;
similarly, for any $e,e'\in \Edir_B$, ${\rm NBWALKS}(B,e,e')$ is a
regular language.

At times we will identify an $(e_1,\ldots,e_k)\in{\rm NBWALK}$
with its associated non-backtracking walk $(v_0,e_1,\ldots,e_k,v_k)$
in $B$, if confusion is unlikely to occur.

\subsection{$B$-Wordings}

Consider any suppression, $T=S/V'$, of a graph, $S$, with edge
lengths $\mec k\from E_T\to\naturals$
(or $\Edir_T\to\naturals$); then $\VLG(T,\mec k)$ is isomorphic---as
a graph---to $S$.
In this subsection we consider the information needed to recover
$S$ when we endow it with the structure of a $B$-graph;
we will this information a {\em $B$-wording}.  Let us first abstractly
define this notion.

\begin{definition}\label{de_wording}
Let $B,T$ be a graphs.  By a {\em $B$-wording of $T$} we mean a function
$$
W \from \Edir_T \to {\rm NBWALK}(B)
$$
such that
\begin{enumerate}
\item for all $e\in\Edir_T$,
$W(\iota_T e)=W(e)^R$ is the reverse walk, i.e.,
if $W(e)=(e_1,\ldots,e_k)$, then
$W(e)^R\eqdef (\iota_B e_k,\ldots,\iota_B e_1)$;
\item if $e\in\Edir_T$ is a half-loop, then $W(e)$ is of length one
whose single letter is a half-loop of $B$;
\item
\label{it_vertex_consistency}
the first vertex in $W(e)$ (i.e., the tail of the first
directed edge) depends only on $t_T e$
(i.e., $W(e),W(e')$ have the same first vertex for any $e,e'\in\Edir_T$
for which $t_T e=t_T e'$).
\end{enumerate}
By the {\em edge lengths} of $W$ we mean the function
$\Edir_T\to\naturals$ mapping $e$ to the length of $W(e)$; since
the length of $W(e)$ equals that of $W(\iota_T e)$, we also
view the edge lengths as a function $E_T\to\naturals$ (whose value
on an edge is that of an orientation of $e$).
\end{definition}

Let us explain how wordings arise.

\begin{definition}\label{de_induced_wording}
If $S_\Bg$ is a $B$-graph, and $T=S/V'$ is a suppression of $S$,
the {\em wording on $T$ induced by $S_\Bg$} refers to the following
map $W\from \Edir_T\to {\rm NBWALK}(B)$:
let $\pi\from S\to B$ be the structure map of $S_\Bg$;
by definition (Definition~\ref{de_suppression}) each element
of $\Edir_T$ is non-backtracking walk
$$
e_T=(v_0,\ldots,e_k,v_k)
$$
in $S$; we set
$$
W(e_T) = \pi(e_1),\pi(e_2),\ldots,\pi(e_k).
$$
\end{definition}
We easily check in the above definition that an induced wording
is actually a wording.
We also check that for any $B$-wording, $W$, of a graph $T$, there is
a $B$-graph $S_\Bg$ with a suppression $S/V'$ such that
(1) there is an isomorphism $\mu\from T\to S/V'$, (2) 
the wording induced by $S_\Bg$ on $S/V'$, when pulled back via
$\mu$, is the wording $W$;
we easily see that any two such $S_\Bg$ are isomorphic as
$B$-graphs.

\begin{definition}\label{de_realization}
If $W$ is a $B$-wording of a graph, $T$, with edge lengths $\mec k$,
then the {\em realization of $W$}, denoted $\VLG_\Bg(T,W)$
or $\VLG(T,W)$,
refers to any $B$-graph $S_\Bg$ whose underlying graph is $S=\VLG(T,\mec k)$,
and whose $B$-graph structure is the one given in the last paragraph:
namely, if $e_T\in\Edir_T$ corresponds to the beaded path in $S$
given as $e_1,e_2,\ldots,e_k$,
then we take $e_i$ to the $i$-th letter in $W(e_T)$.
(The directed edge $e_i$ appears in exactly one beaded path
by Lemma~\ref{le_suppression_unique_diedge}.)
\end{definition}
Of course, since $\VLG(T,\mec k)$ does not refer to a unique graph,
the above definition gives rise to many possible $S_\Bg$.
But we easily see that all such $B$-graphs are isomorphic as $B$-graphs.

\begin{figure}
\begin{tikzpicture}[scale=0.7]
\tikzstyle{sommet}=[circle,outer sep = 2pt,scale=0.3,radius=5pt,draw,thick,fill=black]
\node[sommet] at (-10,0) (v1) {}; 
\node[below] at (v1.south) {\green $u_T$} ;
\node[sommet] at  (0,0) (v2) {};
\node[below] at (v2.south) {\green $v_T$} ;
\draw[->] (v1.north) to [out=30,in=150] 
  node[above]{$\scaleto{{\blue W({\green e_T})=e_Bf_B}\leftrightarrow {\red (u_B,e_B,w_B,f_B,v_B)}}{6pt}$} (v2.north) ;
%  node[above]{$\scriptstyle W(e_\ssT)=e_Bf_B=(u_B,e_B,w_B,f_B,v_B)$} (v2.north) ;
\draw[->] (v2) to 
  node[below]{$\scaleto{{\blue W({\green \iota_T e_T})=(\iota_B f_B)(\iota_B e_B)}
  \leftrightarrow {\red (v_B,\iota_B f_B,w_B,\iota_B e_B,u_B)}}{6pt}$} (v1) ;
\node[sommet] at (2,0) (v1new) {}; 
\node[below] at (v1new.south) {$u_T$} ;
\node[above] at (v1new.north) {$\red\scaleto{u_B}{4pt}$} ;
\node[sommet] at (5,0) (vmid) {};
\node[below] at (vmid.south) {new} ;
\node[above] at (vmid.north) {$\red\scaleto{w_B}{4pt}$} ;
\node[sommet] at  (8,0) (v2new) {};
\node[below] at (v2new.south) {$v_T$} ;
\node[above] at (v2new.north) {$\red\scaleto{v_B}{4pt}$} ;
\draw[->] (v1new.north) to [out=30,in=150] 
%  node[above]{$\scriptstyle e_B$} (vmid.north) ;
  node[above]{$\red\scaleto{e_B}{4pt}$} (vmid.north) ;
\draw[->] (vmid) to 
  node[below]{$\red\scaleto{\iota_B e_B}{4pt}$} (v1new) ;
\draw[->] (vmid.north) to [out=30,in=150] 
  node[above]{$\red\scaleto{f_B}{6pt}$} (v2new.north) ;
\draw[->] (v2new) to 
  node[below]{$\red\scaleto{\iota_B f_B}{6pt}$} (vmid) ;
\end{tikzpicture}
\caption{
A $B$-wording, $W$, on a graph, $T$, of two vertices joined by one edge;
$V_T=\{u_T,v_T\}$
(labeled in {\green green}), $\Edir_T=\{e_T,\iota_T e_T\}$.
$W(e_T)$ is the word $e_B f_B$ over the alphabet $\Edir_B$, which is
identified with the NB walk $(u_B,e_B,w_B,f_B,v_B)$ in $B$, where 
$u_B=t_B e_B$, $w_B=h_B e_B=t_B f_B$, $v_B=h f_B$.
$\VLG_{/B}(T,W)$ is a path of length $2$, with the
indicated $B$-graph structure (in {\red red} on the right).
}
\label{fi_wording}
\end{figure}
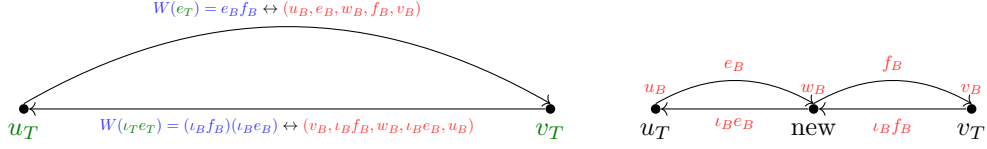

See Figure~\ref{fi_wording} for an example.

We remark that if $T$ is endowed with an ordering $T^\og$ arising from
an SNBC walk, 
then this ordering induces
one on $\VLG(T,\mec k)$ and therefore on the realization of $W$.
This ordering will be used in Article~II; however here we want to
define only what is meant by an {\em algebraic} model, and this notion
relies only on wordings of graphs rather than of ordered graphs.

Notice that if $S_\Bg^\og=\VLG(T^\og,W)$ for a $B$-wording, $W$,
then the invariants $\mec a_{S_\Bg}$ and $\mec b_{S_\Bg}$ can be
inferred from $W$, and we may therefore write $\mec a_W$ and
$\mec b_W$.

\subsection{$B$-Types}

\begin{definition}\label{de_B_type}
Let $B$ be a graph.
By a {\em $B$-type} we mean a pair
$T^{\rm type}=(T,\cR)$ where
$T$ is a graph, and
$\cR$ is function from $\Edir_T$ to the set of regular languages 
over the alphabet $\Edir_B$
that are subsets of
${\rm NBWALKS}(B)\subset(\Edir_B)^*$, such that
\begin{enumerate}
\item
for all $e\in\Edir_T$, $w\in \cR(e)$ iff $w^R\in\cR(\iota_T e)$;
\item
if $W$ is any function $\Edir_T\to {\rm NBWALKS}(B)$ such that
for all $e\in\Edir_T$,
$W(e)\in \cR(e)$ and
$W(\iota_T e) = W(e)^R$, then
$W$ is a $B$-wording.
\end{enumerate}
Furthermore, we say that any $B$-wording $W$ as in~(2) 
{\em is of type $\cR$} and
{\em belongs to $\cR$}, and the same with $T^{\rm type}$
replacing $\cR$.
\end{definition}
The novelty of this definition, which is crucial to Article~II
(and \cite{friedman_random_graphs}), is that
if $\Eor_T\subset\Edir_T$ is any orientation of $T$, then the set of 
$B$-wordings that belong to any $B$-type
$T^{\rm type}=(T,\cR)$ is in one-to-one correspondence with
$$
\prod_{e\in\Eor_T} \cR(e).
$$

\subsection{The Definition of Algebraic}

\begin{definition}\label{de_algebraic}
Let $B$ be a graph, and $N\subset\naturals$ an infinite set, and
for $n\in N$ let $\cC_n(B)$ be a probability space whose atoms are
elements of ${\rm Coord}_n(B)$.  
Recall that a $B$-graph $S_\Bg$ {\em occurs in $\cC_n(B)$}
if for all sufficiently large $n\in N$ there is a
$G\in\cC_n(B)$ such that
$[S_\Bg]\cap G$ is nonempty, i.e., $G$ contains a $B$-subgraph
isomorphic to $S_\Bg$.
Let $\cC_n(B)$ satisfy (1)--(3) of
Definition~\ref{de_strongly_algebraic}.
If $\cT$ is a subset of $B$-graphs,
we say that $\cC_n(B)$ is
{\em algebraic when restricted to $\cT$} if
either all $S_\Bg\in\cT$ occur in $\cC_n(B)$ or they all do not,
and (if so)
there are polynomials $p_{i}=p_{i}(\mec a_{S_\Bg})$ such that
for each $S_\Bg\in\cT$ and $i\in\naturals$ 
\begin{equation}\label{eq_algebraic}
c_i(S_\Bg) = p_{i}(\mec a_{S_\Bg})
\end{equation}
(where $c_i=c_i(S_\Bg)$ are as in Definition~\ref{de_strongly_algebraic}).
We say that the family
of probability spaces $\{\cC_n(B)\}_{n\in N}$ is 
{\em algebraic} provided that it satisfies 
conditions~(1)--(3) of 
Definition~\ref{de_strongly_algebraic}, and
\begin{enumerate}
\item
the number of $B$-graph isomorphism classes of \'etale $B$-graphs
$S_\Bg$ such that $S$ is a cycle of length $k$ and $S$ does
not occur in $\cC_n(B)$ is equals $h(k)$ where
$h$ is a function of growth $(d-1)^{1/2}$; and
\item
for any
pruned, ordered graph, $T^\og$, there is a finite number of
$B$-types, $T_j^{\rm type}=(T^\og,\cR_j)$, $j=1,\ldots,s$, 
such that (1) any $B$-wording, $W$, of $T$ belongs to exactly one
$\cR_j$, and
(2) $\cC_n(B)$ is algebraic when restricted to $T_j^{\rm type}$.
\end{enumerate}
\end{definition}
Let us make a few remarks on the above definition.

First,
in \eqref{eq_algebraic}, the $p_{i}$ are written in terms of
$\mec a$ alone since
(1) for fixed $B$-type $(T,\cR_j)$,
$\mec b$ turns out to be a fixed, linear function of $\mec a$, and
(2) this will be convenient to us in Article~II.

Second, in proving our main theorems in Article~II,
it is convenient to insist that
each wording belong to a {\em unique} $B$-type
$(T,\cR_j)$ rather than to {\em at least one}.
However, it is easy to prove that
if each $B$-wording belongs to {\em at least one} of $(T,\cR_j)$ in
the above definition,
then there is another set of $B$-types for which (1) and (2) of
Definition~\ref{de_algebraic} 
holds: indeed, let us give a proof, starting with the general fact that
if
$L_1,\ldots,L_u$ are any sets,
and for $A\subset [u]$ we set
$$
L^A\eqdef \bigcap_{a\in A} L_a \ \setminus\ \bigcup_{b\notin A} L_b,
$$
then each element of $L\eqdef L_1\cup\cdots\cup L_u$ lies in a unique $L^A$
(with $A\ne\emptyset$);
moreover, if each $L_i$ is a regular language over some
common alphabet, then so is each $L^A$ (by the closure properties
of regular languages); finally, it will be useful to note that
for any $a\in[u]$ and $A\subset[u]$, 
\begin{equation}\label{eq_meets_implies_containment} 
L^A \cap L_a\ne\emptyset \quad\implies\quad
a\in A \quad\implies\quad
L^A \subset L_a 
\end{equation} 
(both reverse implications hold whenever $L^A\ne \emptyset$).
Now take $L_1,\ldots,L_u$ to be all sets of the form $\cR_j(e)$
with $1\le j\le s$ and $e\in\Edir_T$ 
[one could use an orientation $\Eor_T$ instead of all of $\Eor_T$],
and consider all $B$-types of the form 
$(T,\cR')$ where for each $e\in\Edir_T$, $\cR'(e)$ is of the
form $L^{A(e)}$ for some $A(e)\subset[u]$, and where $(T,\cR')$ contains
at least one $B$-wording of $T$.
In this way,
then each $B$-wording of $T$, $W$, lies in a unique such $B$-type
$(T,\cR')$.
But now we claim that $\cC_n(B)$ restricted to
any such $(T,\cR')$ is algebraic: indeed, such a $(T,\cR')$
contains some wording, $W$,
and such a wording is of $\cR_j$ for at least one $j$;
fix such a $j$.
We have $W(e)\in\cR'(e)\cap\cR_j(e)$ for all $e\in\Edir_B$,
and therefore
\eqref{eq_meets_implies_containment} implies that
$\cR'(e)\subset\cR_j(e)$ for all $e\in\Edir_B$.
Therefore any wording of type $\cR'$ is also of type
$\cR_j$, and since $\cC_n(B)$ is algebraic
when restricted to $\cR_j$, it is also algebraic
when restricted to $\cR'$.

\subsection{The Eigenvalues of a Regular Language, of a $B$-type, and of 
an Algebraic Model}

In this subsection we define what we mean by the {\em eigenvalues}
of an algebraic model.
The {\em eigenvalues} of an algebraic model appear in the statements
of the main theorems of Articles~II
and III; they are also fundamental to the main theorem of Article~VI,
where we prove that $\tau_{\rm alg}=+\infty$ under certain conditions,
one of which is that all {\em larger eigenvalues} of the model are 
either $\pm(d-1)$.

First we point out an easy fact about regular languages.

\begin{proposition}\label{pr_eigens_regular_language}
Let $L$ be a regular language, and for $k\in\naturals$ let
$f(k)$ be the number of 
words in $L$ of length $k$.
Then $f(k)$ is a polyexponential function of $k$
(Definition~\ref{de_polyexponential_growth}), i.e.,
there are
unique distinct $\mu_1,\ldots,\mu_m\in\complex$ and unique non-zero
polynomials $p_1(k),\ldots,p_m(k)$ such that for all $k\in\naturals$
\begin{equation}\label{eq_reg_lang_words}
f(k) = \sum_{i=1}^m p_i(k) \mu_i^k ,
\end{equation} 
where we understand that
$p_i(k) \mu_i^k$ with $\mu_i=0$ refers to a non-zero function that is
zero for $k$ sufficiently large.
[And we understand the convention that if $f$ is identically zero, then
$m=0$ and $L=\emptyset$.]
\end{proposition}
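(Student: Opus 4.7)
The plan is to reduce the counting problem to a matrix power, then apply the Jordan form discussion already used in the paper (just before Definition~\ref{de_Ramanujan_function}).

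First I would invoke the Kleene/Myhill-Nerode theorem to fix a deterministic finite automaton $M = (Q,\cA,\delta,q_0,F)$ recognizing $L$, where $\cA$ is the alphabet of $L$. I would then introduce the transition matrix $T \in \complex^{Q \times Q}$ defined by
\[
T_{q,q'} = \#\{a \in \cA : \delta(q,a) = q'\}.
\]
A straightforward induction on $k$ shows that $(T^k)_{q,q'}$ is the number of words of length $k$ that drive $M$ from state $q$ to state $q'$; consequently
\[
f(k) = \sum_{q' \in F} (T^k)_{q_0,q'} = \mathbf{1}_F^{\,\top}\, T^k\, e_{q_0},
\]
where $e_{q_0}$ is the standard basis vector at $q_0$ and $\mathbf{1}_F$ is the indicator vector of the accept set.

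Second, I would appeal directly to the Jordan form observation stated in the paragraph following Definition~\ref{de_polyexponential_growth}: each entry of $T^k$ is a $\complex$-linear combination of functions of the form $k(k-1)\cdots(k-\ell+1)\nu^{k-\ell}$, where $\nu$ ranges over the eigenvalues of $T$ and $\ell$ is bounded by the corresponding Jordan block size. Taking the appropriate linear combination for $f(k)$, we obtain an expression of the form $\sum_i p_i(k)\mu_i^k$ with the $\mu_i$ distinct eigenvalues of $T$ and $p_i \in \complex[k]$. The convention about $\nu_i = 0$ is handled as in Definition~\ref{de_polyexponential_growth}: any Jordan block with eigenvalue $0$ contributes a term vanishing for $k$ larger than the block size, which we absorb into the $p_i(k)\mu_i^k$ term with $\mu_i = 0$. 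After discarding those $(\mu_i,p_i)$ for which $p_i$ is identically zero (or pooling the zero case into one term that vanishes for large $k$), we get the desired existence of a representation \eqref{eq_reg_lang_words}, with the convention that $m = 0$ (empty sum) covers the case $L = \emptyset$.

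For uniqueness I would argue that two such representations with distinct bases $\{\mu_i\}$ and nonzero polynomial coefficients cannot agree on all of $\naturals$ unless they are identical. The standard route: suppose $\sum_{i=1}^m p_i(k)\mu_i^k \equiv 0$ on $\naturals$ with the $\mu_i$ distinct and $p_i \neq 0$ (treating the $\mu_i = 0$ term, if present, as an eventually vanishing sequence that we can simply delete after taking $k$ large enough). Order the $\mu_i$ so that $|\mu_1|$ is maximal; dividing by $\mu_1^k$ and letting $k \to \infty$ along suitable subsequences (first isolating $|\mu_i| = |\mu_1|$ terms, then applying finite-difference arguments to peel off the polynomial degree of $p_1$) forces $p_1 \equiv 0$, a contradiction. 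Alternatively, one can use a Vandermonde-type determinant argument: evaluating $\sum p_i(k)\mu_i^k$ at $k, k+1, \ldots, k+N-1$ for $N = \sum_i (1+\deg p_i)$ and letting $k$ vary gives a linear system with invertible coefficient matrix, forcing all coefficients of all $p_i$ to vanish.

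The only mild subtlety, and the step I would be most careful about, is the $\mu_i = 0$ convention: one must verify that absorbing the nilpotent-Jordan-block contributions into a single term $p_0(k)\cdot 0^k$ is consistent with the definition of polyexponential given in Definition~\ref{de_polyexponential_growth}, and that the uniqueness argument correctly disposes of this term (by the eventual-vanishing convention, it plays no role once $k$ is large enough, so uniqueness on $\naturals$ follows from uniqueness of the tail, which is a polyexponential with only nonzero bases).
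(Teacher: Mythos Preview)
Your proposal is correct and follows essentially the same approach as the paper: build the transition matrix of a finite automaton recognizing $L$, express $f(k)$ as a sum of entries of its $k$-th power, and invoke Jordan canonical form (with the nilpotent blocks accounting for the $\mu_i=0$ convention). The paper's sketch is terser and does not spell out the uniqueness argument, so your Vandermonde/growth-rate discussion simply fills in a detail the paper leaves implicit.
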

The proof follows by considering a finite automaton, $M$, recognizing $L$, and
letting $A_M$ be the square matrix indexed
on the states, $Q$, of $M$ with $(A_M)_{q_1,q_2}$ equal to the number of 
letters in the alphabet taking state $q_1$ to state $q_2$.  Then 
$$
f(k) = \sum_{q\in F} (A_M^k)_{q_0,q}
$$
where $q_0$ is the initial state of $M$ and $F$ is the set of accepting states
of $M$; the proposition follows from the Jordan canonical form of $A_M$
(an eigenvalue $0$ of $A_M$ yields a nilpotent Jordan block, which explains
our convention for $\mu_i=0$).

\begin{example}
If $L={\rm NBWALKS}(B,e,e')$ for a graph $B$ and $e,e'\in\Edir_B$, 
then the number of strings of length $k$ in $L$ is just the $(e,e')$
entry of $H_B^k$; in this case
the $\mu_i$ above are
always a subset of the $\mu_i(B)$, the eigenvalues of the
Hashimoto matrix $H_B$.
Also the oriented line graph of $B$ easily yields a finite automaton
recognizing $L$, which shows that $L$ is a regular language.
\end{example}

Note that it is crucial that we view ${\rm NBWALKS}(B)$ as strings
in $\Edir_B$ so that the eigenvalues are what we want:
one could view any $L\subset {\rm NBWALKS}(B)$ as a set of
alternating strings
of $V_B$ and $\Edir_B$, then this would roughly double the length of
each word in $L$ and therefore change the eigenvalues of $L$.
Hence it is crucial that we omit the vertices when describing
NB walks in order to get the correct values
of {\em eigenvalues of $L$} that we need for our trace methods,
even though the notion of {\em regular language} is 
(easily checked to be) the same whether or not
we omit the vertices.

\begin{definition}\label{de_eigens_regular_language}
If $L$ is a regular language, then the {\em eigenvalues} of $L$ are
the unique $\mu_1,\ldots,\mu_m\in\complex$ in
Proposition~\ref{pr_eigens_regular_language}.
If $B$ is a graph, the {\em eigenvalues} of a $B$-type $(T^\og,\cR)$ is
the union of the eigenvalues of all the regular languages $\cR(e)$ with
$e\in\Edir_T$.
If $\{\cC_n(B)\}_n$ is an algebraic model, then {\em a set of eigenvalues}
of the model is any subset of $\complex$ that for any $T^\og$
contains all the eigenvalues of some set of $B$-types $(T^\og,\cR_j)$
satisfying the conditions of Definition~\ref{de_algebraic}.
\end{definition}
In the above definition, a {\em set of eigenvalues} is not unique; we
can always add some extraneous eigenvalues by
subdividing a $B$-type into a number of smaller $B$-types whose
underlying regular languages have additional eigenvalues; it is not clear 
to us (at least from
the definition) that there is a unique minimal set of eigenvalues
of a model.

\subsection{Our Basic Models are Algebraic}
\label{se_our_basic_models_are_algebraic}

In Article~V we will prove that all of
our basic models (Definition~\ref{de_models})
over a graph, $B$, are algebraic, and a set of eigenvalues
for these models
is possibly $1$ (for models involving full-cycles)
and some subset of the eigenvalues
of $H_B$.  Let us make some remarks regarding this proof; these 
remarks help to motivate our definition of $B$-type.

First, it is not hard to prove that
any model that is strongly algebraic is also algebraic.
The main point (see Article~V for details)
is that for a strongly algebraic model,
for any $S_\Bg$ \eqref{eq_strongly_algebraic} implies that
$$
c_{\ord(S)+i}(S_\Bg) = 
\left\{ \begin{array}{ll} p_i(\mec a_{S_\Bg},\mec b_{S_\Bg}) 
& \mbox{if $S_\Bg$ is an \'etale $B$-graph, and} \\
0 & \mbox{otherwise.} \end{array}\right.
$$
However, if $S_\Bg=\VLG_\Bg(T,W)$, then whether or not $S_\Bg$ is \'etale
depends only on the first and last letters of $W(e)$ for all $e\in\Edir_T$.
So if $(T,\cR)$ varies over all $B$-types where all $\cR(e)$ are
sets of the form ${\rm NBWALK}(B,e_1,e_2)$, then each $B$-wording belongs
to a unique $(T,\cR)$; furthermore, the polynomials
expressing $c_i(S_\Bg)$ in terms of 
$\mec a_{S_\Bg},\mec b_{S_\Bg}$ depend only on which $\cR$ gives
rise to
$S_\Bg$ (i.e., as $\VLG_\Bg(T,W)$ with $W\in\cR$).
Furthermore, it is easy to see that for a fixed $B$-type $(T,\cR)$,
the variables $\mec b_{S_\Bg}$ are fixed, linear functions of the
$\mec a_{S_\Bg}$.
For this reason all strongly algebraic models are algebraic.

We remark that $B$-types $(T,\cR)$,
as above, based on the first and last letters of
each $W(e)$,
corresponds to the notion of a {\em lettering} in
in \cite{friedman_random_graphs}, page~338,
and \cite{friedman_alon} (Definition~5.8).
[In both these articles, ${\rm Irr}_{k,e,e'}$ denotes the elements of
${\rm NBWALK}(B,e,e')$ of length $k$.]
See also 
Subsection~\ref{su_realizable_letterings_regular} of this article.

The main reason why we want to allow $\cR(e)$ to be a more general 
regular language (more general than ${\rm NBWALK}(B,e_1,e_2)$ is because
of the cyclic model:
a random full cycle, $\pi$, in $\cS_n$ with
$a$ of its values fixed occurs with probability
$$
\frac{1}{(n-1)\ldots (n-a)}
$$
provided that the values fixed do not force $\pi$ to have a cycle of length
$\le n-1$.
It follows that when we identify wordings of type $(T,\cR)$ with
$$
\prod_{\Eor_T} \cR(e)
$$
for an orientation $\Eor_T$, we must be careful to avoid wordings that
force the associated $\sigma\from\Edir_B\to\cS_n$ to have a cycle of
length $\le n-1$ at any whole-loop $e\in \Edir_B$
(i.e., that force $\sigma(e)$---which we insist is a full-cycle in this
model---to have a cycle of length less than $n$).
For example, if $e_T\in\Edir_T$ is a whole-loop, and $e_B\in\Edir_B$ is
a whole-loop, then we must forbid
$W(e_T)$ to be a word in $(e_B)^*$ (i.e., of the form 
$(e_1)^\ell=(e_1,\ldots,e_1)$).
Hence for the cyclic model
our $B$-types include $(T,\cR)$ where $\cR(e)$ can take on 
values such as
$$
{\rm NBWALK}(B,e_B,e_B)\setminus e_B^*, \quad
e_B^*,
$$
(whenever $e_B$ is a whole-loop),
as well as the sets ${\rm NBWALK}(e_1,e_2)$.
Since $e_B^*$ is a language whose eigenvalues are $1$
(there is exactly one word of any length),
the eigenvalues of the cyclic model must include $1$
in addition to the $\mu_i(B)$.
[For similar reasons, the language
${\rm NBWALK}(B,e_B,e_B)\setminus e_B^*$ has $1$ as an eigenvalue
in addition to some of the $\mu_i(B)$.]

The above remarks about the cyclic model were overlooked in
\cite{friedman_alon}; so working with regular languages and $B$-types
as defined here is one way to fix
this error.
This correction doesn't change any of the computations, since
these computations are done modulo functions of growth
$\nu>(d-1)^{1/2}$ in the NB walk statistics.

\section{Article II: Expansion Theorems}
\label{se_art_expansion}

Our asymptotic expansions theorems are akin to those in
\cite{friedman_random_graphs} and proven by the same methods.
Let us briefly describe a special case of the main result of Article~II
that indicates the general type of result.

\subsection{Asymptotic Expansions for Walks of a Given $B$-Type}

The main expansion theorems in Article~II can be understood for the
special case of the expected number of walks of a given type.

\begin{theorem}\label{th_expansion_thm_for_types}
% This is just copied from the Expansion directory
Let $B$ be a graph, and $\{\cC_n(B)\}_{n\in N}$ an algebraic model over
$B$.  Let 
$T^\og$ be an ordered graph, let
$\bec\cert\from E_T\to\naturals$ be a function, 
and let
$$
\nu = \max\Bigl( \mu_1^{1/2}(B), \mu_1\bigl(\VLG(T,\bec\cert)\bigr) \Bigr).
$$
Then for any $r\ge 1$ we have
$$
f(k,n)\eqdef \EE_{G\in\cC_n(B)}[\snbc(T^\og,\ge\bec\cert;G;k)]
$$
has a $(B,\nu)$-bounded expansion 
$$
c_0(k)+\cdots+c_{r-1}(k)/n^{r-1}+ O(1) c_r(k)/n^r
$$
to order $r$, where the
bases of the coefficients $c_i=c_i(k)$ is a subset of a set
of eigenvalues of the model; furthermore $c_i(k)=0$ for
$i<\ord(T)$.
% Old:
% Let $\cC_n(B)$ be an algebraic model over a graph $B$,
% and $\nu>1$ a real number.
% Let 
% % $T^{\rm type}$ be a $B$-type, 
% $T^\og$ be an ordered graph, and let $\bec\xi\from E_T\to\naturals$
% satisfy
% $$
% \mu_1\bigl( \VLG(T,\bec\xi) \bigr) < \nu.
% $$
% Then for any
% $r\ge 1$ we have
% $$
% \EE_{G\in\cC_n(B)}[ \snbc(T^\og;\ge\bec\xi,G,k) ]
% $$
% has a $(B,\nu)$-bounded expansion 
% $$
% c_0(k)+\cdots+c_{r-1}(k)+ O(1) c_r(k)/n^r,
% $$
% to any order $r$; the bases of
% the coefficients in the expansion are some subset of the
% eigenvalues of the model, and $c_i(k)=0$ for $i<\ord(T)$.
\end{theorem}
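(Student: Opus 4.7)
The plan is to partition the SNBC walks by the $B$-wording they induce on $T$, then use the algebraic hypothesis term by term. First I would establish a counting decomposition: any SNBC walk $w$ in some $G\in\cC_n(B)$ of length $k$, homotopy type $T^\og$, and edge-lengths $\ge\bec\cert$ determines an ordered visited $B$-subgraph $S_\Bg^\og=\ViSu_\Bg^\og(w)$ and (via Definition~\ref{de_induced_wording}) a $B$-wording $W$ of $T^\og$ with edge-lengths $\mec k_W\ge\bec\cert$ for which $S_\Bg^\og\isom\VLG_\Bg(T^\og,W)$. Conversely, the number of SNBC walks of length $k$ in $G$ whose ordered visited $B$-subgraph lies in $[S_\Bg^\og]\cap G$ equals $N(T^\og,\mec k_W,k)\cdot\#([S_\Bg^\og]\cap G)$, where the combinatorial factor $N(T^\og,\mec k_W,k)$ depends only on the homotopy type, edge-lengths, and total length $k$ (it counts choices of traversal multiplicities $\mec m\from E_T\to\naturals$ with $\mec k_W\cdot\mec m=k$, together with universal per-homotopy-type ordering data). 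Hence
\[
f(k,n)=\sum_{W}N(T^\og,\mec k_W,k)\;\EE_{G\in\cC_n(B)}\Bigl[\#\bigl([\VLG_\Bg(T^\og,W)_\Bg^\og]\cap G\bigr)\Bigr],
\]
where the sum is over $B$-wordings $W$ of $T^\og$ satisfying $\mec k_W\ge\bec\cert$.

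Next I would apply the algebraic hypothesis. By Definition~\ref{de_algebraic}, the $B$-wordings of $T^\og$ partition into finitely many $B$-types $(T^\og,\cR_j)$, $j=1,\ldots,s$. Within each $\cR_j$, the expected inclusion count has an expansion
\[
\EE_{G\in\cC_n(B)}\Bigl[\#\bigl([\VLG_\Bg(T^\og,W)_\Bg^\og]\cap G\bigr)\Bigr]=\sum_{i=\ord(T)}^{r-1}\frac{p_{i,j}(\mec a_W)}{n^{i}}+O(g(\#E_S))/n^{r},
\]
with polynomial coefficients $p_{i,j}$ in $\mec a_W$ (the lower-$i$ coefficients vanish because $\ord(S)=\ord(T)$ is invariant under bead suppression). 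Substituting and interchanging sums yields the desired template
\[
f(k,n)=\sum_{i=\ord(T)}^{r-1}\frac{c_i(k)}{n^{i}}+\frac{O(1)\,c_r(k)}{n^{r}},
\quad
c_i(k)\eqdef\sum_{j=1}^{s}\sum_{\substack{W\in\cR_j\\ \mec k_W\ge\bec\cert}}N(T^\og,\mec k_W,k)\,p_{i,j}(\mec a_W).
\]

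The heart of the argument is showing that each $c_i(k)$ is $(B,\nu)$-bounded and that its polyexponential bases come from a set of eigenvalues of the model. For a fixed $B$-type $(T^\og,\cR_j)$ the set of wordings factors as $\prod_{e\in\Eor_T}\cR_j(e)$, so for each fixed multiplicity vector $\mec m$ the sum over wordings $W$ with $\mec k_W\cdot\mec m=k$ factorizes into a product of word-count generating functions for the regular languages $\cR_j(e)$, each weighted by a polynomial in the letter-frequencies (the $\mec a_W$-variables). By Proposition~\ref{pr_eigens_regular_language}, each such weighted generating function is polyexponential in its length argument with bases among the eigenvalues of $\cR_j(e)$; summing these products over the finitely many homotopy-type-constrained convolution patterns in $\mec m$ preserves the polyexponential structure, introduces bases bounded in absolute value by $\mu_1(\VLG(T,\bec\cert))$ from the geometric growth along $T$, and yields additional bases bounded by $\mu_1(B)$ from the NB-walk factors. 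The resulting bases lie among the eigenvalues of the model (together with those forced by the threshold $\bec\cert$), and the maximum of $\mu_1^{1/2}(B)$ and $\mu_1(\VLG(T,\bec\cert))$ is exactly the $\nu$ in the statement that governs the growth of $c_r(k)$. The shift $\mec k_W\ge\bec\cert$ is harmless: replacing $\cR_j(e)$ by its sublanguage of words of length $\ge\cert(e)$ removes only a finite part of each generating function, so the polyexponential structure is unchanged.

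The main obstacle will be step three: carefully organizing the multi-index convolution
over $\mec m$ and over $\prod_{e\in\Eor_T}\cR_j(e)$, and tracking which eigenvalues actually appear as bases of the $c_i(k)$. This is the generalization of the type-by-type generating-function analysis carried out in Sections~2--3 of \cite{friedman_random_graphs}; the new features here are that the base graph is arbitrary (so $\Edir_B$ is the alphabet, with $\mu_1(B)$ replacing $d-1$), and that $\cR_j(e)$ may be any regular sublanguage of $\NBWALK(B)$ rather than just a first-last-letter class. The error bound $O(1)c_r(k)/n^{r}$ follows from the same convolution estimate applied to the $O(g(\#E_S))/n^{r}$ remainder in the algebraic expansion, using $\#E_S\le k\le n^{1/2}/C$ from Definition~\ref{de_strongly_algebraic} to ensure the remainder sums into a single $(B,\nu)$-bounded function.
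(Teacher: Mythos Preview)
Your proposal is correct and takes essentially the same approach that the paper outlines in Section~\ref{se_bsf} (see especially the decomposition \eqref{eq_dot_conv_first_form}) and carries out in Article~II: partition walks by their induced $B$-wording, group wordings into the finitely many $B$-types supplied by the algebraic hypothesis, expand each expected inclusion count via \eqref{eq_expansion_S}, and then analyze the resulting multi-index convolution over multiplicities $\mec m$ and regular-language factors $\prod_{e\in\Eor_T}\cR_j(e)$ using the generating-function methods of \cite{friedman_random_graphs}. The convolution step you correctly flag as the main obstacle is what the paper packages as the Dot Convolution Theorem in Article~II (alluded to at the end of Section~\ref{se_art_expansion}).
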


\subsection{Asymptotic Expansions for Walks Times Inclusions}

If $S_\Bg,G_\Bg$ are fixed $B$-graphs, then for any ordering on $S_\Bg$,
\begin{equation}\label{eq_define_N}
N(S_\Bg,G_\Bg) \eqdef \#[S_\Bg^\og]\cap G_B
\end{equation}
is independent of the ordering; the
quantity $N(S_\Bg,G_\Bg)$ features prominently in our trace methods
because 
(1) in Article~III we will use linear combinations of
$N(S_\Bg,G_\Bg)$---where $S_\Bg$ varies over a finite number
of $(\ge\nu,<r)$-tangles---to approximate the
indicator function
$$
\II_{{\rm HasTangles}(\ge\nu,<r)}(G)
$$
which we use to prove the main theorems there
(e.g., Theorem~\ref{th_main_tech_result}; see also
Appendix~\ref{se_append_cert_ind}),
and 
(2)
we can prove the following variant of 
Theorem~\ref{th_expansion_thm_for_types} with only minor additional
difficulties.

% we count the number of certified walks in $G_B$ by
% gathering together those walks, $w$, whose visited subgraph
% $\ViSu_\Bg^\og(w)$ takes on a fixed value $S_\Bg^\og$.
% However, the $N(S_\Bg,G_\Bg)$ also feature prominently in approximating
% indicator functions in Article~III, as we explain in the next section,
% and we will need the following theorem, which fortunately can be proven
% easily from the methods of proving
% Theorem~\ref{th_expansion_thm_for_types}.

\begin{theorem}\label{th_expansion_thm_for_types_times_inclusions}
% This is now copied directly from Expansion article
Let $\cC_n(B)$ be an algebraic model over a graph $B$.
Let $T^\og$ be an ordered graph, let $\bec\xi\from E_T\to\naturals$ be
a function, and let
$$
\nu = \max\Bigl( \mu_1^{1/2}(B), \mu_1\bigl(\VLG(T,\bec\cert)\bigr) \Bigr).
$$
Let $\psi_\Bg^\og$ be any ordered $B$-graph.
Then for any
$r\ge 1$ we have
\begin{equation}\label{eq_subgraphs_times_walks}
\EE_{G\in\cC_n(B)}[ 
(\#[\psi_\Bg^\og]\cap G)
\snbc(T^\og;\ge\bec\xi,G,k) ]
\end{equation} 
has a $(B,\nu)$-bounded expansion 
$$
c_0(k)+\cdots+c_{r-1}(k)/n^{r-1}+ O(1) c_r(k)/n^r,
$$
to order $r$; the bases of
the coefficients $c_i=c_i(k)$ are some subset of the
eigenvalues of the model, and $c_i(k)=0$ for $i$ less than the
order of any $B$-graph that contains both a walk of 
homotopy type $T^\og$
and a subgraph isomorphic to $\psi_\Bg$.
% old:
% Let $\cC_n(B)$ be an algebraic model over a graph $B$,
% and $\nu>1$ a real number.
% Let % $T^{\rm type}$ be a $B$-type
% $T^\og$ be an ordered graph, and let $\bec\xi\from E_T\to\naturals$
% satisfy
% $$
% \mu_1\bigl( \VLG(T,\bec\xi) \bigr) < \nu.
% $$
% Let $\psi_\Bg^\og$ be any ordered $B$-graph.
% Then for any
% $r\ge 1$ we have
% $$
% \EE_{G\in\cC_n(B)}[ 
% (\#[\psi_\Bg^\og]\cap G)
% \snbc(T^\og;\ge\bec\xi,G,k) ]
% $$
% has a $(B,\nu)$-bounded expansion 
% $$
% c_0(k)+\cdots+c_{r-1}(k)+ O(1) c_r(k)/n^r,
% $$
% where the bases of
% the coefficients in the expansion are some subset of the
% eigenvalues of the model, and $c_i(k)=0$ for $i$ less than the
% order of any $B$-graph that contains an SNBC walk of homotopy type $T^\og$
% and a subgraph isomorphic to $\psi_\Bg$.
\end{theorem}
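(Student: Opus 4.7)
The plan is to reduce Theorem~\ref{th_expansion_thm_for_types_times_inclusions} to Theorem~\ref{th_expansion_thm_for_types} by partitioning the pairs $(\psi',w)$ counted by the product in \eqref{eq_subgraphs_times_walks} according to the isomorphism type of the $B$-subgraph $U_\Bg \eqdef \psi' \cup \ViSu_\Bg(w) \subset G_\Bg$ they jointly span. First I expand
$$
(\#[\psi_\Bg^\og]\cap G)\,\snbc(T^\og;\ge\bec\xi,G,k)
= \sum_{\psi' \in [\psi_\Bg^\og]\cap G}\ \sum_{w \in \SNBC(T^\og;\ge\bec\xi,G,k)} 1,
$$
and for each pair endow $U$ with the ordering induced by first listing the vertices, edges, and orientations coming from $\psi'$ in their $\psi_\Bg^\og$ order, then appending those new vertices/edges/orientations in the first-encountered order along $w$. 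Since $\ord(U)\le\ord(\psi)+\ord(T)$, and since the amount by which $U$ ``exceeds'' $T^\og$ is governed by a finite set of combinatorial gluing data (how the vertices/edges of $\psi$ identify with or attach to those of $\VLG(T,\bec\xi)$), there are only finitely many isomorphism classes of triples $(U_\Bg^\og,\alpha,\beta)$, where $\alpha$ records the inclusion of $\psi_\Bg^\og$ into $U_\Bg^\og$ and $\beta$ records how the walk $w$ threads $U_\Bg^\og$.

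Second, for each such triple $(U_\Bg^\og,\alpha,\beta)$, the pairs $(\psi',w)$ giving rise to it are in natural bijection with SNBC walks in $G$ of a refined ordered homotopy type $T'^\og$ with a refined certificate $\bec\xi'$, where $T'^\og$ encodes both $T^\og$ and the extra structure of $\psi$ glued on according to $\alpha,\beta$, and $\bec\xi'$ extends $\bec\xi$ by the (fixed, bounded) edge lengths inherited from $\psi$. The total expected value in \eqref{eq_subgraphs_times_walks} therefore equals a finite sum
$$
\sum_{(U_\Bg^\og,\alpha,\beta)} \EE_{G\in\cC_n(B)}\bigl[\snbc(T'^\og;\ge\bec\xi',G,k)\bigr]
$$
plus a bounded multiplicative combinatorial factor. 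Applying Theorem~\ref{th_expansion_thm_for_types} to each summand yields a $(B,\nu')$-bounded asymptotic expansion with $\nu' = \max(\mu_1^{1/2}(B),\mu_1(\VLG(T',\bec\xi')))$; since $\psi$ is fixed and finite, $\mu_1(\VLG(T',\bec\xi')) \le \max(\mu_1(\VLG(T,\bec\xi)),\mu_1(\psi))$, and the contribution of the $\mu_1(\psi)$ part is absorbed into a constant (as $\psi$ appears inside $U'$ of bounded size). After collecting, the bases of the resulting coefficients lie in the finite set of model eigenvalues, as in Theorem~\ref{th_expansion_thm_for_types}.

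Finally, the vanishing claim $c_i(k)=0$ for $i$ less than the order of any $B$-graph containing both a walk of homotopy type $T^\og$ and a copy of $\psi_\Bg$ follows from the corresponding vanishing in Theorem~\ref{th_expansion_thm_for_types}: each summand in the decomposition above corresponds to a particular such $B$-graph $U_\Bg$, and its expansion has $c_i=0$ for $i<\ord(U)$; taking the minimum over the finitely many classes gives the stated bound.

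The hard part will be verifying cleanly that the ordering/bookkeeping on $U_\Bg^\og$ leads to a genuine bijection between decorated pairs $(\psi',w)$ and refined walks of type $(T'^\og,\ge\bec\xi')$, without over- or undercounting—in particular, making sure that automorphisms of $\psi_\Bg$ (if any) and the interaction of the walk's first-encountered ordering with the prescribed ordering on $\psi_\Bg^\og$ are accounted for by the explicit constant multiplicity coming from the number of ways to embed the ordered $\psi_\Bg^\og$ into the ordered $U_\Bg^\og$. Once that bookkeeping is done, the remainder is a direct application of the machinery already established for Theorem~\ref{th_expansion_thm_for_types}.
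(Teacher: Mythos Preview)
Your reduction to Theorem~\ref{th_expansion_thm_for_types} breaks at the bijection step, and this is not just the bookkeeping difficulty you flag---it is a type mismatch. By definition, $\snbc(T'^\og;\ge\bec\xi',G,k)$ counts SNBC walks $w'$ of length $k$ whose visited subgraph $\ViSu^\og(w')$ has homotopy type $T'^\og$; such a walk must traverse every edge of $\VLG(T'^\og,\mec k')$. But in a pair $(\psi',w)$, the walk $w$ of length $k$ visits only $\ViSu(w)$, the $T^\og$-part of $U$; the edges and vertices of $\psi'\setminus\ViSu(w)$ are never touched by $w$. So there is in general no SNBC walk of length $k$ whose visited subgraph equals $U$ (indeed $U$ need not be connected or pruned), and no reinterpretation of the pair as a single SNBC walk of the refined type exists. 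Two smaller problems: the bound $\ord(U)\le\ord(\psi)+\ord(T)$ is false (if $\psi'$ and $\ViSu(w)$ share $m$ vertices and no edges, $\ord(U)=\ord(\psi)+\ord(T)+m$); and the inequality $\mu_1(\VLG(T',\bec\xi'))\le\max(\mu_1(\VLG(T,\bec\xi)),\mu_1(\psi))$ fails already when $\psi$ is a single edge glued as a chord onto a cycle.

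The paper's route (in Article~II) begins where you do---it classifies pairs $(\psi',w)$ by the homotopy type of their union---but it does not try to squeeze pairs into the framework of Theorem~\ref{th_expansion_thm_for_types}. Instead it introduces a notion of \emph{homotopy type of a pair} (following Chapter~9 of \cite{friedman_alon}) and re-runs the expansion argument of \cite{friedman_random_graphs} directly on pairs: one sums over finitely many pair-homotopy-types, over $B$-wordings of the enlarged skeleton, and over walk multiplicities on the $T^\og$-edges only, with the $\psi$-edges carrying fixed data. The expansion is then proved from scratch by the same mechanism that proves Theorem~\ref{th_expansion_thm_for_types}, just with more notation; this is why the paper says the two proofs use the same methods, rather than that one reduces to the other.
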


Technically
Theorem~\ref{th_expansion_thm_for_types_times_inclusions}
is a generalization of
Theorem~\ref{th_expansion_thm_for_types}, by taking $\psi_\Bg^\og$ to
be the empty graph.
However, we first prove
Theorem~\ref{th_expansion_thm_for_types},
since its proof is less notationally
cumbersome and illustrates the key ideas in the proof of
Theorem~\ref{th_expansion_thm_for_types_times_inclusions}.

We wish to make one technical remark regarding Article~II:
one could scale the entries of the matrix $H_B$ to prove a more
general expansion theorem; one limit of this scaling occurs
in what is called the Dot Convolution Theorem in this article;
this requires the Perron-Frobenius eigenvalue of any scaled
form of $H_B$ to be at least one (or else one in defining
$(B,\nu)$-functions one must require $\nu\ge 1$).

\section{Article III: The Certified Trace Expansion Theorems}
\label{se_art_certified}

% In Article~III we prove 
% Theorem~\ref{th_tangle_theorems}, which is relatively easy.
Most of Article~III is devoted to proving the following result.

\begin{theorem}\label{th_main_tech_result}
Let $B$ be a connected graph with 
% $\chi(B)<0$, and let 
$\mu_1(B)>1$, and let 
$\{\cC_n(B)\}_{n\in N}$ be
an algebraic model over $B$.
Let $r>0$ be an integer and $\nu\ge\mu_1^{1/2}(B)$ be a real number.
Then 
\begin{equation}\label{eq_main_tech_result1}
f(k,n)\eqdef
\EE_{G\in\cC_n(B)}[ \II_{{\rm TangleFree}(\ge\nu,<r)}(G) \Trace(H^k_G) ]
\end{equation}
has a $(B,\nu)$-bounded expansion to order $r$,
$$
f(k,n)=c_0(k)+\cdots+c_{r-1}(k)/n^{r-1}+O(1)c_r(k)/n^r,
$$
where
$$
c_0(k)=\sum_{k'|k} \Trace(H_B^{k'}) 
$$
where the sum is over all positive integers, $k'$, dividing $k$; hence
$$
c_0(k) = \Trace(H_B^k) + O(k) \mu_1^{k/2}(B);
$$
furthermore, the larger
bases of each $c_i(k)$ (with respect to $\mu_1^{1/2}(B)$)
is some subset of the eigenvalues
of the model.
Finally, for any $r'\in\naturals$ the function
\begin{equation}\label{eq_main_tech_result2}
\widetilde f(n)  \eqdef
\EE_{G\in\cC_n(B)}[ \II_{{\rm TangleFree}(\ge\nu,<r')}(G)]
=
\Prob_{G\in\cC_n(B)}[ G\in {\rm TangleFree}(\ge\nu,<r') ]
\end{equation}
has an asymptotic expansion in $1/n$ to any order $r$,
$$
\widetilde c_0+\cdots+\widetilde c_{r-1}/n^{r-1}+O(1)/n^r ;
$$
where $\widetilde c_0=1$; furthermore, if $j_0$ is the 
smallest order of a $(\ge\nu)$-tangle occurring in $\cC_n(B)$,
then 
$\widetilde c_j=0$ for $1\le j<j_0$ and
$\widetilde c_j>0$ for $j=j_0$ (provided that $r\ge j_0+1$ so that
$\widetilde c_{j_0}$ is defined).
\end{theorem}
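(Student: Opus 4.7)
My plan is to decompose $\Trace(H_G^k)=\snbc(G,k)$ by the homotopy type of each walk's visited subgraph, absorb the large-order tail via \eqref{eq_algebraic_order_bound}, and handle the tangle-free indicator through a finite sieve built on the expansion theorems of Article~II. I would first write
$$
\snbc(G,k)=\sum_{T^\og:\ord(T)<r}\snbc(T^\og;G,k)+\snbc_{\ge r}(G,k),
$$
where the outer sum is finite by \eqref{eq_order_local} (a pruned $T$ of order $<r$ has at most $2(r-1)$ non-bead vertices). Multiplied by $\II_{\rm TF}\le 1$, the tail $\snbc_{\ge r}$ contributes $O(g(k)/n^r)$ in expectation. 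For each $T^\og$, write $\II_{\rm TF}=1-\II_{\rm HT}$ (abbreviating ${\rm TangleFree}$, ${\rm HasTangles}$): the piece $\EE[\snbc(T^\og;G,k)]$ is expanded by Theorem~\ref{th_expansion_thm_for_types}, reducing the problem to expanding the correction $\EE[\II_{\rm HT}(G)\snbc(T^\og;G,k)]$.

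To represent $\II_{\rm HT}$ through counts of the form $\#[\psi_\Bg^\og]\cap G$, I would prove two compactness lemmas (extending Lemma~9.2 of \cite{friedman_alon}): first, the minimal $(\ge\nu,<r)$-tangle $B$-graphs form a finite list $\psi_1,\ldots,\psi_s$; second, for each $T^\og$ of order $<r$, the set of edge-length vectors $\bec k\in\NN^{E_T}$ with $\mu_1(\VLG(T,\bec k))\ge\nu$ is a finite union of \emph{certificate orthants} $\{\bec k\ge\bec\xi_j\}$. Walks satisfying some $\bec k\ge\bec\xi_j$ have visited subgraphs that are already tangles, so cannot survive the $\II_{\rm TF}=1$ restriction; their contribution $\snbc(T^\og,\ge\bec\xi_j;G,k)$ is removed by inclusion-exclusion over the $\bec\xi_j$, each piece expanded by Theorem~\ref{th_expansion_thm_for_types}. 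Remaining configurations, where the witnessing tangle $\psi_i$ lies partially outside $\ViSu(w)$, are captured by a second finite inclusion-exclusion over $\emptyset\ne I\subseteq[s]$, producing expected values of $\bigl(\prod_{i\in I}\#[\psi_{i,\Bg}^\og]\cap G\bigr)$ times a certified walk count---exactly the form handled by Theorem~\ref{th_expansion_thm_for_types_times_inclusions}. Collecting the polyexponentials and tracking their bases through both sieves puts them in a set of eigenvalues of the model, as claimed.

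For the identification $c_0(k)=\sum_{k'|k}\Trace(H_B^{k'})$, the only connected pruned graphs of order $0$ are cycles; an SNBC walk whose visited subgraph is a cycle of length $k'$ forces $k'\mid k$, and by condition~(3) of Definition~\ref{de_strongly_algebraic} each cycle $B$-graph contributes leading coefficient $1$, so summing $\Trace(H_B^{k'})$ over $k'\mid k$ gives the formula; the bound follows since $k'<k$ forces $k'\le k/2$. For $\widetilde f(n)$, the same sieve applied without any walk gives
$$
\widetilde f(n)=1+\sum_{\emptyset\ne I\subseteq[s]}(-1)^{|I|}\Prob\Bigl[\bigcap_{i\in I}\bigl\{\#[\psi_i]\cap G\ge 1\bigr\}\Bigr]+O(n^{-r}),
$$
each probability being $O(n^{-j_0})$ by Markov applied to \eqref{eq_expansion_S}; the $n^{-j_0}$ coefficient is dominated by singletons $I=\{i\}$ with $\ord(\psi_i)=j_0$, each contributing a definite-sign term (via the positive leading coefficients of \eqref{eq_expansion_S}), so $\widetilde c_{j_0}\ne 0$. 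The main obstacle will be making the double sieve produce an \emph{equality} up to a $(B,\nu)$-bounded remainder rather than a mere one-sided bound: since $\#[\psi_{i,\Bg}^\og]\cap G$ over-counts occurrences in $G$, the sieve must include correction terms indexed by $B$-subgraphs realizing unions of several $\psi_i$ (a finite collection once orders are bounded), and one must verify that the polyexponential bases produced by Theorem~\ref{th_expansion_thm_for_types_times_inclusions} on each such term stay within a set of eigenvalues of the model---this careful bookkeeping linking certificate orthants $\bec\xi_j$, tangles $\psi_i$, and their overlaps is where the bulk of Article~III's technical work lies.
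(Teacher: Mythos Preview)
Your overall architecture---split by homotopy type, absorb the order $\ge r$ tail, handle $\II_{\rm HT}$ via a finite tangle sieve and the Article~II expansions---matches the paper's. But the certificate step contains a genuine error that blocks the $(B,\nu)$-bound.

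The map $\bec k\mapsto\mu_1(\VLG(T,\bec k))$ is \emph{decreasing}: subdividing an edge (increasing a component of $\bec k$) can only lower the Hashimoto spectral radius. Hence $\{\bec k:\mu_1(\VLG(T,\bec k))\ge\nu\}$ is a \emph{lower} set in $\NN^{E_T}$ and is \emph{not} a finite union of orthants $\{\bec k\ge\bec\xi_j\}$; the upper set is the complement $\{\bec k:\mu_1(\VLG(T,\bec k))<\nu\}$, and it is \emph{its} minimal elements $\bec\xi_1,\ldots,\bec\xi_m$ (all satisfying $\mu_1(\VLG(T,\bec\xi_j))<\nu$) that serve as certificates. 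This orientation matters because Theorem~\ref{th_expansion_thm_for_types} delivers a $(B,\nu')$-bounded expansion with $\nu'=\max\bigl(\mu_1^{1/2}(B),\mu_1(\VLG(T,\bec\xi))\bigr)$: to land at $(B,\nu)$ you must apply it only at $\bec\xi$ with $\mu_1(\VLG(T,\bec\xi))<\nu$.

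Consequently your split $\EE[\snbc(T^\og;G,k)]-\EE[\II_{\rm HT}(G)\,\snbc(T^\og;G,k)]$ invokes that theorem at $\bec\xi=\mathbf 1$, giving only a $(B,\mu_1(T))$-bound on each piece, with $\mu_1(T)$ possibly much larger than $\nu$; the cancellation between the two pieces that would bring the difference down to growth $\nu$ is real but is not exhibited by your argument. The paper's remedy is to \emph{certify first}: on tangle-free $G$ one has the identity $\II_{\rm TF}(G)\,\snbc_{<r}(G,k)=\II_{\rm TF}(G)\,{\rm cert}_{<\nu,<r}(G,k)$, where ${\rm cert}_{<\nu,<r}$ counts only those SNBC walks whose visited subgraph $S$ already satisfies $\mu_1(S)<\nu$. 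One then splits ${\rm cert}_{<\nu,<r}-\II_{\rm HT}\cdot{\rm cert}_{<\nu,<r}$. Inclusion-exclusion over the correctly-oriented certificate orthants writes ${\rm cert}(T^\og;\cdot)$ as a signed finite sum of $\snbc(T^\og,\ge\bec\xi;\cdot)$ with each $\mu_1(\VLG(T,\bec\xi))<\nu$, so Theorems~\ref{th_expansion_thm_for_types} and~\ref{th_expansion_thm_for_types_times_inclusions} now give $(B,\nu)$-bounded expansions \emph{term by term}, with no hidden cancellation needed. The M\"obius-type approximation to $\II_{\rm HT}$ that you sketch (correctly noting it must be indexed by $B$-graphs realizing unions of the $\psi_i$, truncated by order) is then applied to the certified counts rather than to the raw $\snbc(T^\og;\cdot)$.
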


% {\red From Articl~I: this should be added everywhere the main result in
% Article~III is mentioned or proved:}

Notice that a model may have---at least in principle---an infinite number 
of eigenvalues, which
means that for each $r,\nu$, the number of bases of the
$c_i(k)$ may be unbounded as $i\to\infty$; 
however there are a few remarks to consider:
\begin{enumerate}
\item 
Taking $\nu=\mu_1^{1/2}(B)$, for each $r$, the $c_i(k)$ with $i<r$
have a finite number of exponent bases;
\item since for any fixed $k$ we have
$$
c_i(k)  = \lim_{n\to\infty} n^i \bigl( f(k,n) - c_0(k) - \cdots - 
n^{1-i}c_{i-1}(k) \bigr) ,
$$
the function 
$c_i(k)$ is uniquely defined and independent of $r$ over all $r>i$;
hence a fixed $c_i(k)$ has a finite
number of larger 
(than $\nu$) bases.
% (that $\mu_1^{1/2}(B)$) exponent bases.
\item
In all our basic models, the eigenvalues consist only of the 
$\mu_j(B)$ and possibly the eigenvalue $1$; hence all larger bases
of the $c_i(k)$ lie in this finite set.
\end{enumerate}

% Our proof of the relatived Alon conjecture for regular base graph
% (Theorem~\ref{th_rel_Alon_regular}) follows by applying
% this result to $\nu>\mu_1^{1/2}(B)$ and using the
% Sidestepping Lemma of Article~I.

% We remark that we could have assumed $\nu>\mu_1^{1/2}(B)$ in 
% Theorem~\ref{th_main_tech_result}, in which case the error growth
% rate is simply $\nu$.  However, we prefer to state a stronger theorem
% which indicates a natural limit of this theorem.  
We remark
(see \cite{friedman_kohler}) that one could replace the condition
$\nu\ge \mu_1^{1/2}$ by $\nu\ge \mu_1^{1/N}$ for any positive integer $N$, 
but
then the coefficients in the asymptotic expansions may depend on
the class of $n$ modulo the least common multiple of the numbers 
from $1$ to $N$.

%%%%%%%%%%%%%%%%%%%%%%%%%%%%%%%%%%%%%%%%%%%%%%%%%%%%%%%%%% 

A description of the methods used to prove 
Theorem~\ref{th_main_tech_result} is given in 
Appendix~\ref{se_append_cert_ind}.
These methods include ideas regarding
tangles, minimal tangles, and indicator function approximations
\cite{friedman_alon}, Section~9 (see Article~III for more detailed
references) with the expansion theorems proven in Article~II.
The main difference between Article~III and \cite{friedman_alon}
is that Article~III involves {\em certified traces},
which are simpler to define
and much easier to work with than the {\em selective traces}
of \cite{friedman_alon}.

In Article~V we will also need the following result, whose proof
is similar to the proof of the expansion for \eqref{eq_main_tech_result2}
above.

\begin{theorem}\label{th_extra_needed}
Let $\cC_n(B)$ be an algebraic model over a graph, $B$, and
let $S_\Bg$ be a connected, pruned graph of positive order
that occurs in this model.
% (recall that this
% means that for some $n$ and some $G\in\cC_n(B)$, $G_\Bg$ has a subgraph
% isomorphic to $S_\Bg$).  
Then for some constant, $C'$, and
$n$ sufficiently large,
$$
\Prob_{G\in\cC_n(B)}\Bigl[ [S_\Bg]\cap G\ne\emptyset
\Bigr] \ge
C' n^{-\ord(S_\Bg)}.
$$
\end{theorem}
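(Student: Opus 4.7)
\medskip

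\noindent\textbf{Proof plan for Theorem~\ref{th_extra_needed}.}
The plan is to apply the second moment (Paley--Zygmund) inequality to the random variable
$$
X \eqdef \#\bigl([S_\Bg]\cap G\bigr), \qquad G\in\cC_n(B),
$$
after first controlling $\EE X$ and $\EE X^2$ using the algebraic structure of the model and a combinatorial bound on proper subgraphs of pruned graphs. Set $m = \ord(S_\Bg) \geq 1$.

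\medskip

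\noindent\emph{First moment.}
Fix any ordering on $S_\Bg$, yielding $S_\Bg^\og$, and set $Y \eqdef \#([S_\Bg^\og]\cap G)$.  By
\eqref{eq_automorphisms} we have $Y = \#\Aut(S_\Bg) \cdot X$. Because $S_\Bg$ occurs in $\cC_n(B)$
and has order $m$, Definition~\ref{de_strongly_algebraic}(2)(a) and the assumption that $\cC_n(B)$ is algebraic
give an asymptotic expansion
$$
\EE Y = c_m(S_\Bg)\, n^{-m} + O(n^{-m-1})
$$
valid for $n$ large, with $c_m(S_\Bg) > 0$. Hence $\EE X = (c_m/\#\Aut(S_\Bg))\, n^{-m}(1+o(1))$, which is $\Theta(n^{-m})$.

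\medskip

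\noindent\emph{Second moment.}
Write $\EE X^2 = \EE X + \EE[X(X-1)]$. The quantity $\EE[X(X-1)]$ counts expected ordered pairs $(T_1,T_2)$
of distinct $B$-subgraphs of $G$ that are $B$-isomorphic to $S_\Bg$. Grouping such pairs by the
isomorphism class of the union $U_\Bg \eqdef T_1\cup T_2$ (as a sub-$B$-graph of $G$) yields
$$
\EE[X(X-1)] \;=\; \sum_{[U_\Bg]}\, N(U_\Bg)\,
\EE_{G\in\cC_n(B)}\!\bigl[\#([U_\Bg]\cap G)\bigr],
$$
where $N(U_\Bg)$ is the (finite) number of ordered pairs of distinct $B$-subgraphs of a fixed copy of $U_\Bg$
that are $B$-isomorphic to $S_\Bg$ and whose union is all of $U_\Bg$. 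Since $\#V_U \leq 2\#V_S$,
the isomorphism classes contributing are finite in number. Each $U_\Bg$ arising is pruned (every vertex
lies in some $T_i$, which is pruned), and $U_\Bg$ occurs in $\cC_n(B)$ (namely, inside whichever $G$ it came from),
so by Definition~\ref{de_strongly_algebraic}(2)(a) we have
$\EE[\#([U_\Bg]\cap G)] = O(n^{-\ord(U_\Bg)})$.

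\medskip

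\noindent\emph{The key geometric bound.}
The main step is the estimate $\ord(U_\Bg) \geq m+1$ for every $U_\Bg$ arising above, equivalently
$$
\ord(T_1\cap T_2) \;\leq\; m-1
$$
whenever $T_1\neq T_2$. Since $\ord(T_1\cup T_2) = \ord(T_1)+\ord(T_2)-\ord(T_1\cap T_2) = 2m-\ord(T_1\cap T_2)$,
this is equivalent to showing: \emph{for a connected pruned graph $S$ of order $m\geq 1$, every proper
subgraph $U\subsetneq S$ satisfies $\ord(U)\leq m-1$}. When $V(U)=V(S)$, at least one edge is missing,
so $\ord(U) = \#E(U) - \#V(S) \leq m-1$ immediately. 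When $V(U)\subsetneq V(S)$, pick $v\in V(S)\setminus V(U)$;
since $S$ is pruned, $\deg_S(v)\geq 2$, and an inductive argument on $\#V(S)\setminus V(U)$ (removing one vertex
of degree $\geq 2$ in the remaining subgraph at each step, after re-pruning) reduces $\ord$ by at least one total;
equivalently one counts edges incident to removed vertices and uses that each such vertex had degree $\geq 2$ in $S$
while connectivity of $S$ ensures enough boundary edges are lost. I expect this proper-subgraph inequality
to be the main obstacle—it is purely combinatorial, but requires careful bookkeeping
of how pruning and connectivity interact when removing vertex subsets.

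\medskip

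\noindent\emph{Conclusion.}
Given the bound $\ord(U_\Bg)\geq m+1$, the second-moment computation gives
$$
\EE X^2 \;=\; \EE X + \sum_{[U_\Bg]} N(U_\Bg)\, O(n^{-\ord(U_\Bg)})
\;=\; \Theta(n^{-m}) + O(n^{-m-1}) \;=\; O(n^{-m}).
$$
Paley--Zygmund then yields
$$
\Prob_{G\in\cC_n(B)}\bigl[[S_\Bg]\cap G\neq\emptyset\bigr]
\;=\; \Prob[X\geq 1] \;\geq\; \frac{(\EE X)^2}{\EE X^2}
\;\geq\; \frac{\Theta(n^{-2m})}{O(n^{-m})} \;\geq\; C'\, n^{-m}
$$
for some constant $C'>0$ and all $n$ sufficiently large, completing the proof.
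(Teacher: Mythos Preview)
Your second-moment approach is correct and gives a genuinely different route from the one the paper indicates. Just after the statement, the paper says the proof is ``similar to the proof of the expansion for~\eqref{eq_main_tech_result2}'': that method (cf.\ Appendix~\ref{se_append_cert_ind}) writes the indicator $\II_{[S_\Bg]\cap G\neq\emptyset}$ as a truncated M\"obius-type sum $\sum_{\psi}\mu(\psi)\,N(\psi_\Bg,G_\Bg)$ over unions of copies of $S_\Bg$, applies the algebraic expansion~\eqref{eq_expansion_S} term by term, and obtains a full asymptotic expansion $\Prob\bigl[[S_\Bg]\cap G\neq\emptyset\bigr]=\widetilde c_m\,n^{-m}+O(n^{-m-1})$ with $\widetilde c_m>0$, from which the lower bound is immediate. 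Your Paley--Zygmund argument is more elementary and self-contained, at the price of yielding only the lower bound rather than the full expansion. Both routes rest on the same combinatorial fact---that a union of two distinct copies of $S_\Bg$ has order at least $m+1$---so they are close cousins; your version just packages it as a variance bound rather than as control on the higher-order M\"obius terms.

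One caveat on your ``key geometric bound'': the inductive sketch (remove a vertex, re-prune, repeat) does not work as written, since after removing a vertex the remaining graph need not be pruned or connected, and the hypothesis is lost. A clean direct argument: for $\emptyset\neq V'\subsetneq V_S$ let $e(V')$ be the number of edges of $S$ meeting $V'$, let $h$ be the number of half-loops at vertices of $V'$, and let $c$ be the number of edges with exactly one endpoint in $V'$. One checks
\[
\sum_{v\in V'}\deg_S(v)=2\,e(V')-h-c,
\]
so prunedness gives $2\,e(V')\ge 2|V'|+h+c\ge 2|V'|+1$ (connectivity forces $c\ge1$), hence $e(V')\ge|V'|+1$. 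Thus deleting $V'$ and its incident edges drops the order by at least $1$, which is exactly $\ord(T_1\cap T_2)\le m-1$. The case $V(U)=V(S)$ is immediate (an edge is missing), and the empty subgraph has order $0<m$. With this in hand, your computation $\EE[X(X-1)]=O(n^{-m-1})$, $\EE[X^2]=\Theta(n^{-m})$, and $\Prob[X\ge1]\ge(\EE X)^2/\EE[X^2]\ge C'n^{-m}$ goes through.
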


\section{Article IV: The Sidestepping Theorem}
\label{se_art_sidestep}

In Article~IV we prove an improved form of 
the {\em Sidestepping Lemma} of \cite{friedman_alon}, suited to our
more general context.

\subsection{Intuition Behind the Sidestepping Theorem}

Let us give the intuition behind our Sidestepping Theorem.
To fix ideas, say that for each $n\in\naturals$, $\xi_n$ is a random variable
on a probability space $\cP_n$ with $\xi_n=2$ on an event $E_n$ 
of probability $1-(1/n^3)$, and $\xi_n$ is uniformly distributed on
$[4,5]$
on the complement of $E_n$.  In this case we have
$$
f(k,n)\eqdef \EE[\xi_n^k] = 2^k + (1/n^3)(c_k-2^k),
$$
where $c_k$ is the $k$-th moment of the uniform distribution on $[4,5]$.
Similarly, if $M_n$ is a random $n\times n$ matrix whose eigenvalues are
all distributed as $\xi_n$, then one has
\begin{equation}\label{eq_typical_side}
f(k,n)\eqdef \EE[\Trace(M_n^k)] = 2^k n + (1/n^2)(c_k-2^k).
\end{equation}
The Sidestepping Theorem attempts to invert this process, namely given
a random $n\times n$ matrix $M_n$ for which
$$
\EE[\Trace(M_n^k)] = c_{-1}(k) n + c_0(k) + \cdots + c_r(k)/n^{r-1} 
+ O\bigl(1/n^r\bigr),
$$
with some assumptions on the eigenvalues of $M_n$ and some assumptions
on the $c_i(k)$, this theorem guarantees (very) high probability bounds
on the locations of $M_n$'s eigenvalues.
Our Sidestepping Theorem assumes that $c_{-1}(k)=0$, since this is the
case for expected traces of powers of the Hashimoto matrices in
our models.

It turns out that all the techniques and results in Article~IV can be stated
as theorems regarding the random 
set of $n$ eigenvalues of $M_n$; for example, $\Trace(M_n^k)$ is
just the sum of their $k$-th powers.
However, we state our results in Article~IV in terms of 
$\Trace(M_n^k)$ because this is how we will apply these results,
namely 
to draw conclusions about these eigenvalues based on
facts regarding the
$G\in\cC_n(B)$ expected values of
$\Trace(H_G^k)$ times the indicator function of $G$ that indicates
that $G$ is free of certain tangles.
Let us state the precise theorem.

\subsection{Precise Statement of the Sidestepping Theorem}

\begin{definition}\label{de_matrix_model}
Let $\Lambda_0<\Lambda_1$ be positive real numbers.  
By a {\em $(\Lambda_0,\Lambda_1)$
matrix model} we mean a collection of finite probability spaces
$\{\cM_n\}_{n\in N}$ where $N\subset\naturals$ is an
infinite subset, and where the atoms of $\cM_n$ are
$n\times n$ real-valued matrices whose eigenvalues lie in the set
$$
B_{\Lambda_0}(0) \cup [-\Lambda_1,\Lambda_1]
% \bigl( B_{\Lambda_1}(0)\cap \reals \bigr)
$$
in $\complex$.
Let $r\ge 0$ be an integer and $K\from\naturals\to\naturals$
be a function such that $K(n)/\log n\to \infty$ as $n\to\infty$.
We say that this model has an
{\em order $r$ expansion} with {\em range $K(n)$} 
(with $\Lambda_0,\Lambda_1$ understood) if
as $n\to\infty$ we have that
\begin{equation}\label{eq_matrix_model_exp}
\EE_{M\in\cM_n}[\Trace(M^k)] =
c_0(k) + c_1(k)/n +\cdots+c_{r-1}(k)/n^{r-1}+ O(c_r(k))/n^r
\end{equation}
for all $k\in\naturals$ with $k\le K(n)$,
where (1) $c_r=c_r(k)$ is of growth $\Lambda_1$,
(2) the constant in the $O(c_r(k))$ is
independent of $k$ and $n$, and
(3) for $0\le i<r$, $c_i=c_i(k)$ is an approximate polyexponential with
$\Lambda_0$ error term and whose larger bases 
(i.e., larger than $\Lambda_0$ in absolute value)
lie in $[-\Lambda_1,\Lambda_1]$;
at times we speak of an {\em order $r$ expansion} without 
explicitly specifying $K$.
When the model has such an expansion,
then we use the notation $L_r$ to refer to the union of all larger
bases of $c_i(k)$ (with respect to $\Lambda_0$) over all $i$ between
$0$ and $r-1$, and call $L_r$ the {\em larger bases (of the order $r$
expansion)}.
\end{definition}
Note that in Definition~\ref{de_matrix_model},
the larger bases of the $c_i$ are an arbitrary finite subset
of
$[-\Lambda_1,\Lambda_1]\setminus[-\Lambda_0,\Lambda_0]$ 
(e.g., there is no bound on the
number of bases).
In our applications we will take $\Lambda_1=d-1$ and
$\Lambda_0$
slightly larger than $(d-1)^{1/2}$.

% Nov 2, 2019: these inserted from Article IV
%
We also note that \eqref{eq_matrix_model_exp} implies that
for fixed $k\in\naturals$,
\begin{equation}\label{eq_c_i_limit_formula}
c_i(k) = \lim_{n\in N,\ n\to\infty}
\Bigl( \EE_{M\in\cM_n}[\Trace(M^k)] -
\bigl( c_0(k) + \cdots+c_{i-1}(k)/n^{i-1} \bigr) \Bigr) n^i
\end{equation}
for all $i\le r-1$; we conclude that the $c_i(k)$ are uniquely determined,
and that $c_i(k)$ is independent of $r$ for any $r>i$ for which
\eqref{eq_matrix_model_exp} holds.
We also see that if \eqref{eq_matrix_model_exp} holds for some value of
$r$, then it also holds for smaller values of $r$.
It follows that if \eqref{eq_matrix_model_exp} holds for some $r$, then
$L_i$ is defined for each $i<r$ (as the set of larger bases of the
functions $c_0(k),\ldots,c_{i-1}(k)$).
Furthermore $L_i$ is empty iff
$c_0(k),\ldots,c_{i-1}(k)$ are all functions of growth $\Lambda_0$.

Let us remark on what $N$ in the above definition typically looks like.
In our applications the matrices will be that of the new functions
on $H_G$ (times an indicator function),
over $G\in\cC_{n'}(B)$ where $n'\in N'$, with $N'$ some infinite set.
Hence the 
spaces $\cM_{n}$ would vary over
dimension $n=(n'-1)(\#\Edir_B)$ with $n'\in N'$.

If $\cM$ is a probability space of $n\times n$ matrices, and
$R\subset \complex$, it will be very useful to use the
shorthand
\begin{equation}\label{eq_Ein_Eout}
\EE{\rm in}_\cM[R] \quad\mbox{and}\quad
\EE{\rm out}_\cM[R] \ ,
\end{equation}
respectively,
as the expected number of eigenvalues of $M\in\cM$ (counted with multiplicity)
that lie, respectively, in $R$ and not in $R$.  Hence 
the sum of these two expected values is $n$.

\ignore{
The following is an intermediate step to proving the ``Sidestepping Lemma,''
and gives good intuition as to where the eigenvalues of matrices in a
$(\Lambda_0,\Lambda_1)$-bounded matrix model are located.

\begin{lemma}\label{le_sidestep_one}
Let $\Lambda_0<\Lambda_1$ and $\epsilon,\alpha$ all be
positive real numbers.
Then there is
a positive integer $r_0=r_0(\Lambda_0,\Lambda_1,\epsilon,\alpha)$ 
and a real
$\theta_0=\theta_0(\Lambda_0,\Lambda_1,\epsilon,\alpha)>0$
with the following property:
let $\{\cM_n\}_{n\in N}$ be a $(\Lambda_0,\Lambda_1)$-bounded matrix model
that has an order $r$ expansion \eqref{eq_matrix_model_exp} 
for some $r\ge r_0$, and let $L_r$ be the set of larger bases of 
$c_0(k),\ldots,c_{r-1}(k)$.
Then for any sufficiently large $n$ and any $\theta\le\theta_0$ we have
\begin{equation}\label{eq_sidestep_r_one}
\EE{\rm out}
\bigl[ B_{\Lambda_0+\epsilon}(0) \cup B_{n^{-\theta}}(L_r) \bigr] 
\le n^{-\alpha} \ .
\end{equation} 
\end{lemma}
In other words, an eigenvalue of $M\in\cM_n$ that lies outside of
$B_{\Lambda_0+\epsilon}(0) \cup B_{n^{-\theta}}(L_r)$ is 
``exceptional'' in the sense that the expected number
of such eigenvalues decays larger than any fixed power of $n$,
for models with expansions of order $r$ with $r$ sufficiently large,
namely $r\ge r_0(\Lambda_0,\Lambda_1,\epsilon,\alpha)$.
We note that it will be useful to know that $r_0$ does
depend on the coefficients $c_i=c_i(k)$ of
\eqref{eq_matrix_model_exp} or their bases
(provided the larger bases lie in $[-\Lambda_1,\Lambda_1]$).
}

Here is the main theorem of Article~IV, a strengthening
and easier to apply version of the Sidestepping Lemma
of \cite{friedman_alon}.

% Nov 2: replacing this theorem with the one from Articles IV and V (which
% are in sync).
% 

\begin{theorem}\label{th_sidestep}
Let $\{\cM_n\}_{n\in N}$ be a $(\Lambda_0,\Lambda_1)$-bounded matrix model,
for some real $\Lambda_0<\Lambda_1$, that
for all $r\in\naturals$ has an order $r$ expansion;
let $p_i(k)$ denote the polyexponential part of $c_i(k)$
(with respect to $\Lambda_0$) in \eqref{eq_matrix_model_exp}
(which is independent of $r\ge i+1$ by
\eqref{eq_c_i_limit_formula}).
If $p_i(k)=0$ for all $i\in\integers_{\ge 0}$, then
for all $\epsilon>0$ and $j\in\integers_{\ge 0}$
\begin{equation}\label{eq_largest_j}
\EE{\rm out}_{\cM_n}
\bigl[ B_{\Lambda_0+\epsilon}(0) \bigr]
= O(n^{-j}) .
\end{equation}
Otherwise let $j$ be the smallest integer for which
$p_j(k)\ne 0$.
Then for all $\epsilon>0$, and for all $\theta>0$ sufficiently small we have
% \eqref{eq_largest_j} holds;
\begin{equation}\label{eq_mostly_near_Lambda_0_or_Ls}
\EE{\rm out}_{\cM_n}
\bigl[ B_{\Lambda_0+\epsilon}(0)\cup B_{n^{-\theta}}(L_{j+1}) \bigr]
= o(n^{-j});
\end{equation}
moreover, if $L=L_{j+1}$ is the (necessarily nonempty) set of bases of $p_j$,
then for each $\ell\in L$ there is a real $C_\ell>0$ such that
\begin{equation}\label{eq_thm_p_j_pure_exp}
p_j(k)=\sum_{\ell\in L} \ell^k C_\ell ,
\end{equation}
and
for all $\ell\in L$
for sufficiently small $\theta>0$,
\begin{equation}\label{eq_thm_C_ell_as_limit}
\EE{\rm in}_{\cM_n}\bigl[ B_{n^{-\theta}}(\ell) \bigr]
= n^{-j} C_\ell + o(n^{-j}) .
\end{equation}
\end{theorem}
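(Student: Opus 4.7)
The plan is to translate the expansion hypothesis on $\EE[\Trace(M^k)]$ into an analogous statement about the expected counting measure $\bar\mu_n$ of the \emph{outer} eigenvalues of $M\in\cM_n$---those lying in $[-\Lambda_1,\Lambda_1]\setminus B_{\Lambda_0}(0)$, which are automatically real since $M$ is real and its non-real eigenvalues must sit in the conjugation-symmetric set $B_{\Lambda_0}(0)$. Splitting $\Trace(M^k)$ into the contribution of outer eigenvalues and the contribution of eigenvalues in $B_{\Lambda_0}(0)$---the latter bounded in absolute value by $n\Lambda_0^k$, hence of growth $\Lambda_0$---and writing $c_i(k)=p_i(k)+e_i(k)$ with $e_i$ of growth $\Lambda_0$, one obtains for every $r$
\begin{equation}
\int\lambda^k\,d\bar\mu_n(\lambda)=\sum_{i=0}^{r-1}p_i(k)/n^i+E_r(k,n),
\label{eq_proposal_moment_exp}
\end{equation}
where $E_r(k,n)$ is the sum of a growth-$\Lambda_0$ function of $k$ (of absolute size possibly carrying an extra factor of $n$ from the small-trace contribution) and an $O(\Lambda_1^k/n^r)$ remainder.

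First I would dispatch the case where every $p_i$ vanishes. Then \eqref{eq_proposal_moment_exp} reduces to $\int\lambda^k\,d\bar\mu_n=O(n\Lambda_0^k)+O(\Lambda_1^k/n^r)$, while restricting the integral to $[-\Lambda_1,\Lambda_1]\setminus(-\Lambda_0-\epsilon,\Lambda_0+\epsilon)$ gives the lower bound $(\Lambda_0+\epsilon)^k\,\EE{\rm out}_{\cM_n}[B_{\Lambda_0+\epsilon}(0)]$. Choosing $k=C_j\log n$ with $C_j$ sufficiently large relative to $j,\epsilon,\Lambda_0,\Lambda_1$, and $r$ sufficiently large, both error terms are dominated by $(\Lambda_0+\epsilon)^k/n^j$, and \eqref{eq_largest_j} follows.

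When some $p_i$ is nonzero, let $j$ be minimal; then by definition $L=L_{j+1}$ is exactly the set of bases of $p_j$. Multiplying \eqref{eq_proposal_moment_exp} by $n^j$ and using minimality of $j$ gives
\begin{equation}
n^j\int\lambda^k\,d\bar\mu_n(\lambda)=p_j(k)+O(\Lambda_1^k/n)+O(n^{j+1}\Lambda_0^k),
\label{eq_proposal_main_identity}
\end{equation}
once $r$ is chosen large enough to absorb the tail of \eqref{eq_proposal_moment_exp} into the first error. For $k$ in a window of length roughly $|L|+1$ centred at $k=C_0\log n$ with $C_0$ large enough that $C_0\log(\Lambda_1/\Lambda_0)>j+1$, both error terms in \eqref{eq_proposal_main_identity} are $o(\Lambda_1^k)$, so the moments of $n^j\bar\mu_n$ agree with $p_j(k)$ up to lower order. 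A Vandermonde-type inversion across this window---valid because the $\ell\in L$ are distinct reals in $[-\Lambda_1,\Lambda_1]\setminus[-\Lambda_0,\Lambda_0]$ and the matrix $(\ell^k)$ is well-conditioned at logarithmic scale---allows one to approximate the indicator of each $B_{n^{-\theta}}(\ell)$ by a linear combination of monomials, and thereby to extract the mass of $n^j\bar\mu_n$ near each base. This simultaneously yields \eqref{eq_mostly_near_Lambda_0_or_Ls} (the mass of $n^j\bar\mu_n$ away from $L$ and from $B_{\Lambda_0+\epsilon}(0)$ tends to zero) and identifies the limiting mass at each $\ell$ with a non-negative real $C_\ell$; strict positivity $C_\ell>0$ follows from $\bar\mu_n\ge 0$ combined with the minimality of $L$ as the set of bases of $p_j$. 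Since atomic positive measures on a finite subset of $\reals$ have monomial moments exactly of the form $\sum_\ell\ell^kC_\ell$, and these must reproduce $p_j(k)$ modulo growth-$\Lambda_0$ error, any polynomial prefactor of $\ell^k$ in $p_j$ is forced to be constant, proving \eqref{eq_thm_p_j_pure_exp}; rewriting the mass statement as an expected eigenvalue count gives \eqref{eq_thm_C_ell_as_limit}.

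The hard part will be the Vandermonde projection step, which must be made quantitative enough to yield the $n^{-\theta}$-localisation in \eqref{eq_thm_C_ell_as_limit} rather than a merely asymptotic weak-limit statement, and simultaneously to control the error $n^{j+1}\Lambda_0^k$ in \eqref{eq_proposal_main_identity}. One must calibrate $C_0$ and the window-length so that the smallest singular value of the Vandermonde matrix on a logarithmic window is bounded below by a fixed inverse-polynomial in $n$, while the error amplifications from that inversion remain negligible compared to the signal $C_\ell\ell^k$. The improvement over the Sidestepping Lemma of \cite{friedman_alon}---whose argument handled the pure-exponential case by separate \emph{ad hoc} manipulations---comes precisely from treating polynomial-prefactor and pure-exponential scenarios uniformly through this quantitative projection.
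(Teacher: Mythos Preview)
This paper (Article~I in the series) does not contain a proof of Theorem~\ref{th_sidestep}; it only \emph{states} the theorem, with the full proof deferred to Article~IV. The surrounding text in Section~\ref{se_art_sidestep} gives only motivation (the example around \eqref{eq_typical_side}) and the precise statement. So there is no proof in the present paper to compare your proposal against.

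That said, a few comments on your outline as it stands. The reduction to outer eigenvalues and the moment identity \eqref{eq_proposal_moment_exp} are the right starting point, and the treatment of the case $p_i\equiv 0$ is essentially correct, though you should take $k$ \emph{even} (say $k=2\lfloor C_j\log n/2\rfloor$) so that $\lambda^k\ge(\Lambda_0+\epsilon)^k$ on the outer region; as written, the lower bound ``restricting the integral gives $(\Lambda_0+\epsilon)^k\,\EE{\rm out}[\cdots]$'' fails for odd $k$ when eigenvalues may be negative.

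The Vandermonde step is, as you say, the heart of the matter, and your description is still a sketch rather than an argument. Two points deserve care. First, you need to separate the bases $\ell\in L$ from one another \emph{and} from the residual mass sitting anywhere in $[-\Lambda_1,\Lambda_1]\setminus B_{\Lambda_0+\epsilon}(0)$, not just from each other; a pure Vandermonde on $|L|+1$ consecutive $k$-values only distinguishes finitely many known points, whereas the outer measure may a priori have support anywhere in the interval. You will likely need a polynomial (in $\lambda$) that is large near a given $\ell$ and small elsewhere on the interval---closer in spirit to a Chebyshev-type localization than a raw Vandermonde inversion. Second, the positivity argument for $C_\ell$ (``strict positivity follows from $\bar\mu_n\ge 0$ combined with minimality of $L$'') is not quite right: positivity of the measure gives $C_\ell\ge 0$, and then $C_\ell>0$ follows because $\ell$ is by hypothesis a base of $p_j$, so $C_\ell=0$ would contradict $\ell\in L$. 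You have the ingredients, but the logical order should be: first show the limiting masses exist and are nonnegative, then match moments to deduce \eqref{eq_thm_p_j_pure_exp}, and only then conclude each $C_\ell$ is nonzero.
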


\section{Article V: Proofs of the First Main Theorem}
\label{se_art_relativized}

Article~V begins by proving some easy results that we stated
in this article, namely
(1) the Ihara Determinantal formula in the context of graphs that can
have half-loops, and
(2) that all our standard models (Definition~\ref{de_models}) are
algebraic.
Our proof of the Ihara formula is a simple adaptation of
Bass' elegant proof \cite{bass_elegant} of the usual Ihara formula.
Our proof that our basic models are algebraic is based in
\cite{broder,friedman_random_graphs};
we also correct an error in \cite{friedman_alon} regarding the
cyclic (and cyclic-involution) models 
that we described in Subsections~\ref{su_realizable_letterings_regular}
and~\ref{se_our_basic_models_are_algebraic}.

After these easy results we gather the results of Articles~III
and~IV to easily prove the first main theorem,
Theorem~\ref{th_rel_Alon_regular_simple}, the relativized Alon
conjecture for any algebraic model over a 
$d$-regular graph, $B$.

Then we prove Theorem~\ref{th_rel_Alon_regular}.
Most of the work is to show the following result:
let $B$ be a $d$-regular graph for some $d\ge 3$, and let
$S$ be a graph with $\mu_1(S)>(d-1)^{1/2}$.
Let $\epsilon>0$ be real.
Then for sufficiently large $n$,
any covering map $G\to B$ of degree $n$
has a new adjacency eigenvalue that is larger than
$$
\mu_1(S) + \frac{d-1}{\mu_1(S)} - \epsilon
$$
provided that $G$ has a subgraph isomorphic to $S$.
Our proof uses the ``Curious Theorem'' (in Section~3.8) of 
\cite{friedman_alon}, as well as the methods of
Friedman-Tillich \cite{friedman_tillich_generalized}.

We finish Article~V with a section showing 
a number of examples regarding trace methods for new eigenvalues
of $G\in\cC_n(B)$ where $B$ is $d$-regular.
These examples concern trace methods proving high probability
bounds for the new spectral radius of $A_G$;
we show in these examples---where the high probability bound
is larger than the Alon bound, $2(d-1)^{1/2}$---one gets
improved bounds by applying
the trace method to $H_G$, and then converting the high probability bounds
on the new spectral radius of $H_G$ back to those on $A_G$
(using the Ihara determinantal formula).
We have no rigorous proof that applying
a trace method to $H_G$ gives better bounds than applying it to $A_G$
for random graphs $G$; this just happens to be the case in
all examples that we know.

% Second, we make some remarks on the coefficients of the expansion of
% $$
% \Prob_{G\in\cC_n(B)}[ G\in{\rm HasTangles}(\ge\nu,<r)]
% $$
% as $\nu$ varies, and why we believe that these coefficients have
% discontinuities
% near certain values of $\nu$.

\ignore{
---------------------------------------------------------- 
The main work of Article~V is to prove a result regarding
Alon's notion {\em magnification} in our basic models over a $d$-regular
graph $B$; the computation
is similar to the one done in \cite{friedman_alon}, originally
done in \cite{friedman_random_graphs} for the permutation model.
The fact that our base graph, $B$, may have more than one vertex makes this
computation 
more involved than that in \cite{friedman_random_graphs,friedman_alon}.  
This
result can be used to identify the polyexponential 
part of our asymptotic expansions'
coefficients, $c_i(k)$, that comes from the bases $\pm(d-1)=\pm \mu_1(B)$.

One the magnification results are proven, then we know all of the
polyexponential parts of the coefficients larger than $(d-1)^{1/2}$
under the condition that our base graph, $B$, has no eigenvalues
$\mu_i(B)$ strictly between $(d-1)^{1/2}$ and $d-1$ in absolute value.
This is true if (and only if) $B$ is Ramanujan, and this easily yields
our second main theorem (Theorem~\ref{th_main_Ramanujan}),
for all our basic models if $B$ is Ramanujan.
This is also true for any algebraic model over $B$ that satisfies a 
similar magnification bound.

We conjecture that 
the coefficients $c_i(k)$ above have vanishing polyexponential part
of any base between $(d-1)^{1/2}$ and $d-1$ in absolute value for
$i$ less than the tangle order $\tau_{\rm tang}$ (for our basic models,
at least);
if this conjecture is true, then our second main theorem also holds
for such models.
}

\section{Article~VI: Sharp Exponents for Regular Ramanujan Base Graphs}
\label{se_art_sharp}

In this article we prove a number of results needed to prove
our second main theorem, Theorem~\ref{th_second_main_theorem},
for our basic models $\cC_n(B)$ when $B$ is Ramanujan.
This requires a number of independent results.

% Now in Article V summary, se_art_relative.tex
%
% First, we use the methods of Friedman-Tillich 
% \cite{friedman_tillich_generalized} along with those of
% \cite{friedman_geometric_aspects,friedman_alon}
% to show that if $B,S$ are any fixed graphs with $B$ $d$-regular
% and $\mu_1(S)>(d-1)^{1/2}$,
% then any graph
% $G\in\Coord_n(B)$ that has a subgraph isomorphic to $S$
% has a new $H_G$ eigenvalue that is at least 
% $$
% \mu_1(S) + \frac{d-1}{\mu_1(S)} - o(n)
% $$
% where $o(n)$ depends only on $B,S,n$ and not on $G$.
% Hence for any algebraic model, the probability of having a non-Alon
% eigenvalue is at least of order $n^{-\tau_{\rm tang}}$.
% Hence this probability is known to within a constant,
% in the sense of Theorem~\ref{th_rel_Alon_regular},
% whenever $\tau_{\rm alg}\ge \tau_{\rm tang}+1$.

The main computation in this article regards the {\em magnification}
\cite{alon_eigenvalues}
of a graph, $G$, which refers to the smallest $\gamma>0$ such that
for all $U\subset V_G$ with $\#U\le (1/2)(\#V_G)$ we have
$$
\# \bigl( \Gamma(U) \setminus U \bigr) \ge \gamma (\# U)
$$
where $\Gamma(U)$ is the set of neighbouring vertices of $U$
(i.e., of distance one, i.e., joined to $U$ by some edge in $G$).

First, we show that all our basic models $\cC_n(B)$ have a sort of
{\em magnification} property that, roughly speaking, says
that for $G\in\cC_n(B)$ with very high probability---larger than
$1-O(n^{-s})$ for any fixed $s\in\naturals$---$G$
will only be a poor magnifier if it
has a small sized, disconnected component.
This is a fairly involved computation that closely resembles
those in \cite{friedman_alon}, Chapter~12.

Second, it is then an easy consequence that if $B$ is Ramanujan,
then $\tau_{\rm alg}=+\infty$ for our basic models and, more generally,
any algebraic model over such $B$
that satisfies a similar magnification property.
[Beyond this,
we conjecture that $\tau_{\rm alg}=+\infty$ for our basic models
regardless of $B$.]

These two results imply that the probability of having a non-Alon
eigenvalue when $B$ is $d$-regular and Ramanujan is bounded from 
below and above
as of order $n^{-\tau_{\rm tang}}$.
More precisely, having a non-Alon eigenvalue $2(d-1)^{1/2}+\epsilon$
or greater, for fixed $\epsilon>0$ is bounded above by
$n^{-\tau_{\rm tang}}$ times a constant depending on $\epsilon$,
and bounded below---for $G\in\cC_n(B)$ with $n$ sufficiently large---by
an absolute constant times $n^{-\tau_{\rm tang}}$, for $\epsilon$
sufficiently small.
This is always the case when $\tau_{\rm alg}=+\infty$, or, more
weakly, when $\tau_{\rm alg}\ge \tau_{\rm tang}+1$.

We finish Article~VI with
some remarks on the value of $\tau_{\rm tang}$ for our basic
models.
In particular, we prove
Corollary~\ref{co_tau_tang_bounds}
using the results of
\cite{friedman_alon}, Sections~6.3 and~6.4.

%%%%%%%%%%%%%%%%%%%%%%%%%%%%%%%%%%%%%%%%%%%%%%%%%%%%%%%%%% 
\ignore{

If $\{\cC_n(B)\}_{n\in N}$ is an algebraic model over a $d$-regular base
graph, $B$, we can determine---at least in principle---compute an
integer $\tau_{\rm tang}$, and then check if 
$\tau_{\rm alg}\le \tau_{\rm tang}$ and, if so, determine
$\tau_{\rm alg}$; this allows to determine within a factor of $n$ the
probability
$$
\Prob_{G\in\cC_n(B)}[ {\rm NonAlon}_B(G;\epsilon)>0 ].
$$

Unfortunately, $\tau_{\rm alg}$ involves computing various types of 
expansions over various $B$-types, which makes it difficult to compute.
However, for all of our basic models we can prove that
$\tau_{\rm alg}=+\infty$ by magnification results whose ideas go back to
\cite{friedman_random_graphs,friedman_alon}.
The magnification estimate needed is quite easy for 
{\red $d\ge 4$ ???} sufficiently
large.  The magnification estimate---really valid for arbitrary base
graphs of negative order (including, of course, $d$-regular graphs for
$d\ge 3$)---are a bit more subtle.
This technical computation is given in Article~VI.

Whenever $\tau_{\rm alg}=+\infty$, 
$$
\Prob_{G\in\cC_n(B)}[ {\rm NonAlon}_B(G;\epsilon)>0 ]
$$
is proportional to $n^{-\tau_{\rm tang}}$ for $\epsilon>0$ sufficiently
small.
}

\appendix

\section{Certified Trace and Indicator Function Approximation}
\label{se_append_cert_ind}

In this appendix we give an overview of contents of Article~III,
which is the proof
of Theorem~\ref{th_main_tech_result}.
This appendix involves some remarks on {\em certified traces}, which 
are new to this series of articles (and \cite{friedman_kohler}), which
significantly simplify the {\em selective traces} of 
\cite{friedman_alon}.
In addition, we make some remarks of indicator function
approximation from \cite{friedman_alon} that are used essentially
verbatim in Article~III.

This appendix also illustrates how Articles~II and~III work
together.

Our strategy to obtain expansion theorems for 
\begin{equation}\label{eq_art_III_goal}
f(k,n)\eqdef
\EE_{G\in\cC_n(B)}[ \II_{{\rm TangleFree}(\ge\nu,<r)}(G) \Trace(H^k_G) ],
\end{equation}
as well that in\cite{friedman_alon}, can be explained as follows.
Given a model, $\cC_n(B)$, 
and an $r\in\naturals$ and 
$\nu\in\reals$, say that we find a ``modified trace'' function,
${\rm ModifiedTrace}(G,k,\nu,r)$, with the following
properties:
\begin{enumerate}
\item
for any graph, $G$, and any $k,\nu,r$,
\begin{equation}\label{eq_modified_positivity}
0 \le {\rm ModifiedTrace}(G,k,\nu,r) \le \Trace(H^k_G) ,
\end{equation} 
\item
we have
\begin{equation}\label{eq_modified_trace_agrees}
% \Trace(H^k_G) 
\snbc_{<r}(G,k) = {\rm ModifiedTrace}(G,k,\nu,r) \quad
\mbox{if $G\in {\rm TangleFree}(\ge\nu,<r)$},
\end{equation} 
and
\item 
both the functions
\begin{equation}\label{eq_modified_trace}
\EE_{G\in\cC_n(B)}[ {\rm ModifiedTrace}(G,k,\nu,r) ]
\end{equation} 
and
\begin{equation}\label{eq_modified_trace_has_tangles}
\EE_{G\in\cC_n(B)}[ \II_{{\rm HasTangles}(\ge\nu,<r)}(G)
{\rm ModifiedTrace}(G,k,\nu,r) ]
\end{equation} 
have $(B,\nu)$-bounded asymptotic expansions to order $r$.
\end{enumerate}
Then, of course, one also has such an expansion for the difference of
\eqref{eq_modified_trace} and 
\eqref{eq_modified_trace_has_tangles},
which equals \eqref{eq_art_III_goal}.

In Theorem~2.2 of \cite{friedman_alon} one sees that, intuitively,
$\Trace(H^k_G)$ is so large on graphs with tangles, that
the function
$$
f(k,n)\eqdef
\EE_{G\in\cC_n(B)}[ 
% \II_{{\rm TangleFree}(\ge\nu,<r)}(G) 
\Trace(H^k_G) ],
$$
cannot have $(B,\nu)$-bounded coefficients, at least
for $r$ larger than roughly $d^{1/2}$.
% \eqref{eq_modified_trace} and
% \eqref{eq_modified_trace_has_tangles}.
In this series of articles, and in \cite{friedman_alon},
we define
${\rm ModifiedTrace}(G,k,\nu,r)$ as a function that counts some elements of
$\SNBC(G,k)$ and ignores others, with the property
that we only ignore
elements of $\SNBC(G,k)$ when
$G$ has a $(\ge\nu,<r)$-tangle;
such a count automatically satisfies
\eqref{eq_modified_positivity} and
\eqref{eq_modified_trace_agrees}.
The difficulty is to define ${\rm ModifiedTrace}(G,k,\nu,r)$ in a
way that guarantees that
\eqref{eq_modified_trace} and
\eqref{eq_modified_trace_has_tangles} have the desired expansions.

The modified trace functions in \cite{friedman_alon} were
called {\em selective traces}, and are a bit involved to describe
precisely.
Our approach in this series of articles is much simpler: namely
we take 
we discard {\em all} walks whose 
visited subgraph, $S$, satisfies $\mu_1(S)\ge\nu$ or $\ord(S)\ge r$.
In other words, we take
${\rm ModifiedTrace}(G,k,\nu,r)$ to be
$$
{\rm cert}_{<\nu,<r}(G,k) = \# {\rm CERT}_{<\nu,<r}(G,k),
$$
where ${\rm CERT}_{<\nu,<r}(G,k)$ is the subset of $\SNBC(G,k)$
of elements, $w$, whose
visited subgraph, $S=\ViSu(w)$ has
$\mu_1(S)<\nu$ and $\ord(S)<r$.
It is immediate that ${\rm cert}_{<\nu,<r}(G,k)$
satisfies
\eqref{eq_modified_positivity} and
\eqref{eq_modified_trace_agrees}.
Our strategy for proving that
\eqref{eq_modified_trace} and
\eqref{eq_modified_trace_has_tangles} have 
$(B,\nu)$-expansions to order $r$ begins with the following
preliminary observations.
\begin{enumerate}
\item
We are free to discard all
walks of order $r$ or greater when computing
expansions to order $r$
(see \eqref{eq_walks_large_order} or \eqref{eq_algebraic_order_bound}).
\item
For fixed $r$,
there are finitely many homotopy types of SNBC walks of order less than
$r$.
\item
Since each SNBC walk is of a unique homotopy type, $T^\og$,
the elements of ${\rm CERT}_{<\nu,<r}(G,k)$ is a sum over
all homotopy types of order less than $r$, $T^\og$, of those
walks, $w$, whose ordered visited subgraph, $S=\ViSu^\og(w)$
satisfies $\mu_1(S)<\nu$.
\item
It therefore suffices to fix a homotopy type, $T^\og$,
and prove a $(B,\nu)$-asymptotic expansion exists
for the $\cC_n(B)$-expected number of 
walks $w\in\SNBC(G,k)$, whose visited
subgraph is isomorphic to $\VLG(T^\og,\mec k)$ where $\mec k$ lies in
$$
{\rm Certified}(T^\og,<\nu)
\eqdef \{ \mec k\from E_T\to\naturals \ | \ \mu_1(\VLG(T,\mec k)) < \nu \} .
$$
\end{enumerate}

So we now focus on the above sets ${\rm Certified}(T^\og,<\nu)$.
We may view
the functions $E_T\to\naturals$ as a partially ordered set,
by the usual partial order on $\naturals^{E_T}$, i.e.,
$\mec k\le \mec k'$ if for all $e\in E_T$, $k(e)\le k'(e)$.
Then ${\rm Certified}(T^\og,<\nu)$ is an {\em upper set}, i.e., if
it contains $\mec k$ then it contains all $\mec k'$ with
$\mec k\le \mec k'$ (this follows from the theory of VLG's).
It then follows (by general remarks on upper sets in
$\naturals^{E_T}$)
that for fixed $T^\og$ and $\nu$ there are a 
finite number of 
elements $\bec\xi_1,\ldots,\bec\xi_m\in\naturals^{E_T}$ such that
${\rm Certified}(T^\og,<\nu)$ is the same as
$$
\{ \mec k\from E_T\to\naturals \ | \  
\mbox{$\mec k\ge\bec\xi_i$ for some $i$} \} .
$$
We call $\bec\xi_1,\ldots,\bec\xi_m$ a set of
{\em certificates for $T^\og,\nu$}.

Inclusion-exclusion then implies that it is enough to individually
prove asymptotic expansion theorems for expected counts of walks
$$
\{ w\in\SNBC(G,k)  \ | \ \ViSu^\og(w)\isom \VLG(T^\og,\mec k), \ 
\mec k\ge\bec\xi \} 
$$
for a fixed $\xi$ with $\VLG(T^\og,\bec\xi)$ having $\mu_1<\nu$.
This is precisely
\begin{equation}\label{eq_expected_snbc_cert}
f(k,n)=\EE_{G\in\cC_n(B)}[\snbc(T^\og,\ge\bec\xi;G,k)] \ .
\end{equation}
The proof that such a function has a $(B,\nu)$-asymptotic expansion
is a straightforward adaptation of the methods of 
\cite{friedman_random_graphs}; we do this in Article~II, where
we factor this proof into a number of general results and
simplify part of \cite{friedman_random_graphs}.
Such results then imply a $(B,\nu)$-asymptotic expansion to order
$r$ for 
\eqref{eq_modified_trace}, where
${\rm cert}_{<\nu,<r}(G,k)$ is used for
${\rm ModifiedTrace}(G,k,\nu,r)$.

To prove
\eqref{eq_modified_trace_has_tangles} we adapt the methods
of Chapter~9 of \cite{friedman_alon} to our situation.
The idea, roughly speaking, is to count pairs $(w,{\tilde S}_\Bg^\og)$
where $w\in \SNBC(T^\og,\ge\xi;G,k)$ and ${\tilde S}_\Bg^\og$
is a fixed $(\ge\nu,<r)$-tangle.
In Article~II we proves the existence of asymptotic expansions for
$$
f_{T^\og,\xi,{\tilde S}_\Bg^\og}(k,n)=\EE_{G\in\cC_n(B)}[
(\#[{\tilde S}_\Bg^\og]\cap G_\Bg) \,
\snbc(T^\og,\ge\bec\xi;G,k)] ;
$$
the proof uses the same methods in \cite{friedman_random_graphs}
as for \eqref{eq_expected_snbc_cert},
except that we need the notion of pairs and their homotopy
types, along the lines of Chapter~9 of \cite{friedman_alon}.

The techniques in Chapter~9 of \cite{friedman_alon} are used to
show that
$$
\II_{{\rm HasTangles}(\ge\nu,<r)}(G)
$$
can be approximated by linear combinations of the functions
$\#[{\tilde S}_\Bg^\og]\cap G_\Bg$ so well---in various expressions
involving $G\in\cC_n(B)$ expected values---that we can use
functions $f_{T^\og,\bec\xi,{\tilde S}_\Bg^\og}(k,n)$ to approximate
$$
% f_{T^\og,\xi,{\tilde S}_\Bg^\og}(k,n)=\EE_{G\in\cC_n(B)}[
f(k,n)=\EE_{G\in\cC_n(B)}[
\II_{{\rm HasTangles}(\ge\nu,<r)}(G) \,
\snbc(T^\og,\ge\bec\xi;G,k)] .
$$

\ignore{\tiny\red
---------------------------------------------------------- 
----------------------------------------------------------

The idea behind the proof of Theorem~\ref{th_main_tech_result}
is to first prove a similar expansion theorem for
\eqref{eq_main_tech_result1} is to prove such expansions for
\begin{equation}\label{eq_first_expected_certified}
\EE_{G\in\cC_n(B)}[ {\rm cert}_{<\nu,<r}(G,k)]
\end{equation} 
by expressing each certified trace as a finite linear combination of
functions in Theorem~\ref{th_expansion_thm_for_types} using 
inclusion-exclusion.
Namely, fix a homotopy type, $T^\og$, of SNBC walks where
$\ord(T)<r$.  If $w$ is an SNBC of homotopy type $T$ that is counted in
${\rm cert}_{<\nu,<r}(G,k)$,
then
\begin{equation}\label{eq_vlg_nu_certificate}
\VLG(T,\mec k) <\nu
\end{equation} 
where $\mec k$ are the edge-lengths of $\mec k$.  Now say that
$\bec\xi\from E_T\to\naturals$ is a {\em $(<\nu)$-certificate for $T$}
if $\bec\xi$ is a minimal element of
$$
U \eqdef \bigl\{ \mec k  \;\bigm|\; \VLG(T,\mec k) <\nu \bigr\}
$$
(under the ordering $\mec k\le\mec k'$ when $k(e)\le k'(e)$ for all
$e\in E_T$).
Then $U$ is an ``upper-set'' of $\integers^{E_T}$, and any upper
set has a finite number of minimal elements (this is immediate
from the
Hilbert basis theorem, but can be proven easily enough from scratch,
by induction on $\# E_T$);
hence $T$ has a finite number of $(<\nu)$-certificates, $\bec\xi$.
So $\mec k$ satisfies \eqref{eq_vlg_nu_certificate} iff $\mec k\in U$
above, and $\mec k\in U$ iff 
$$
\mec k \in U_\xi \eqdef \bigl\{ \mec k  \;\bigm|\; \mec k \ge \bec\xi \bigr\}
$$
for one of finitely many certificates $\xi$.  In this way we use
inclusion/exclusion to conclude an expansion theorem for 
\eqref{eq_first_expected_certified} in view of
Theorem~\ref{th_expansion_thm_for_types}.

The other main step is to prove a similar expansion theorem where the
expression in
\eqref{eq_first_expected_certified} is multiplied by an indicator
function of the event that $G$ contains certain tangles.
This uses the method of ``approximating indicator functions''
as done in \cite{friedman_alon}; let us briefly explain the main ideas.

If $\Psi$ is any finite collection of $B$-graphs, consider
\begin{equation}\label{eq_expected_certified_times_indicator}
\EE_{G\in\cC_n(B)}[ \II_{{\rm Meets}(\Psi)}(G){\rm cert}_{<\nu,<r}(G,k)], 
\end{equation} 
where $\II_{{\rm Meets}(\Psi)}(G)$ is the indicator function (i.e., values
$1$ or $0$) of the set
of graphs of whether or not $G$ has a subgraph isomorphic to some element
of $\Psi$.  
If $\Psi^+$ denotes the set of isomorphism classes of $B$-graphs that
can be written as a union of subgraphs isomorphic to elements of $\Psi$,
it is not hard to see that there is a function 
$\mu\from\Psi^+\to\reals$ such that
\begin{equation}\label{eq_indicator_Mobius}
\II_{{\rm Meets}(\Psi)}(G_\Bg) = 
\sum_{\psi_\Bg\in\Psi^+} N(\psi_\Bg,G_\Bg) \mu(\psi_\Bg)
\end{equation} 
for all $G_\Bg$, with $N$ as in \eqref{eq_define_N}: indeed,
$N(\psi_\Bg,G_\Bg)$ equals the number of inclusions (injective maps)
$\psi_\Bg\to G_\Bg$; it is easy to see that the set of isomorphism classes
of $B$-graphs is a partially ordered set via
$[\psi_\Bg]\le[G_\Bg]$ iff $N(\psi_\Bg,G_\Bg)\ge 1$, i.e., there exists
at least one inclusion $\psi_\Bg\to G_\Bg$; then 
$\mu$ is a generalized ``M\"obius function'' of $\Psi^+$ (since
$N(\psi_\Bg,\psi_\Bg)$ can be greater than $1$, this ``M\"obius function''
has rational rather than integral values).
[The usual M\"obius function construction
proves \eqref{eq_indicator_Mobius} for all $G_\Bg\in\Psi^+$, but
this then implies \eqref{eq_indicator_Mobius}
for any $B$-graph $G_\Bg$, since
$\II_{{\rm Meets}(\Psi)}(G_\Bg)$ and $N(\psi_\Bg,G_\Bg)$ are invariant
if we replace $G_\Bg$ by the largest subgraph of $G_\Bg$ that lies in
a class in $\Psi^+$.]
We then consider the truncation the right-hand-side of
\eqref{eq_indicator_Mobius} by summing over all $[\psi_\Bg]$ whose order
is less than $r'$; assuming that $\Psi$ consists of finitely
many isomorphism classes of connected $B$-graphs
of order at least one, we show that
we show that the order $<r'$ truncation (1) has only finitely many
terms, (2) is exact if the largest subgraph
of $G_\Bg$ in a class of $\Psi^+$ is of order less than $r'$, and (3)
sufficiently well approximates $\II_{{\rm Meets}(\Psi)}(G_\Bg)$ when
we take $G_\Bg\in\cC_n(B)$ expected values in the expression
\eqref{eq_expected_certified_times_indicator}.
[Compare with \cite{friedman_alon}, Proposition~9.5
and Lemma~9.6.]
Since any finite truncation of the right-hand-side of
\eqref{eq_expected_certified_times_indicator} is a finite linear
combination of functions of the form in
Theorem~\ref{th_expansion_thm_for_types_times_inclusions},
this theorem implies that
\eqref{eq_expected_certified_times_indicator} has the desired asymptotic
expansions.

Now we take $\Psi$ to be the set of isomorphism classes of $B$-graphs that are
{\em minimal} $(\ge\nu,<r)$-tangles
we show that $\Psi$ is finite
(see Lemma~9.2 of \cite{friedman_alon} or Article~III),
and clearly
$$
1-\II_{{\rm TangleFree}(\ge\nu,<r)}(G) = 
\II_{{\rm Meets}(\Psi)}(G_\Bg)
$$
for all $G\in\cC_n(B)$, and hence
we get an expansion theorem for
\begin{equation}\label{eq_second_expected_certified}
\EE_{G\in\cC_n(B)}\bigl[ \bigl( 1-\II_{{\rm TangleFree}(\ge\nu,<r)}(G) \bigr)
{\rm cert}_{<\nu',<r}(G,k)\bigr] .
\end{equation}
We then subtract
the expansion for \eqref{eq_second_expected_certified}
from that of \eqref{eq_first_expected_certified} and use
\eqref{eq_certified_versus_snbc} to complete
the proof of Theorem~\ref{th_expansion_thm_for_types_times_inclusions}.

}

\section{Definition Summary and Notes}
\label{se_def_summary}

In this section we list terminology appearing in the definitions in this paper.
We also make some notes, mostly to explain the definitions that are
not standard or to indicate some finer points regarding these definitions.

\medskip

In Sections~\ref{se_term1_basic} and~\ref{se_some_results} we
introduce terminology to state our first main result,
namely a proof of the
relativized Alon conjecture for some basic models of a random
covering map to a regular base graph.

We define a {\em graph} (i.e., {\em undirected graph}) to be a directed
graph with some additional structure, as in
\cite{friedman_geometric_aspects}.
This allows our graphs to have for multiple edges
and two types of self-loops: whole-loops (which contribute $2$ to the
degree of a vertex), and half-loops (which contribute $1$).
Half-loops are not entirely standard, but
are useful to describe random $d$-regular graphs on
$n$ vertices for a fixed odd integer $d$.
Half-loops are also needed to make the models in
\cite{friedman_alon} special cases of our second main result
when $B$ is a Ramanujan regular graph
(this result closes a gap in upper and lower bounds
in some cases in \cite{friedman_alon} for the models there).

Another advantage of viewing a graph as a directed graph with some
additional structure is that there are a number of concepts that
are much easier to define:
for example, the correct definition of a 
{\em covering map} of graphs---in the presence of self-loops and multiple
edges---is a bit tricky to define correctly.  However, the correct
notion of a covering map of directed graphs is easy to define,
and the correct notion on graphs is merely that the morphism
be a covering map of the
underlying directed graphs.
In fact, we define most of our graph theoretic concepts first on
directed graphs, and then easily ``extend'' these concepts to graphs.

We also remark that our trace methods involve the {\em oriented line
graph} of a graph, which itself is a directed graph, not a graph;
for this and related reasons, we need to keep directed graphs in mind
for much of this series of articles.

\medskip

\noindent
Definition~\ref{de_digraph}: {\em directed graph},
$B=(V_B,\Edir_B,t_B,h_B)$, {\em vertex set} $V_B$, 
{\em directed edge set}
$\Edir_B$, and {\em heads} and {\em tails maps}, $t_B,h_B$, from
$\Edir_B\to V_B$, {\em self-loops} ($e\in\Edir_B$ with $t_Be=h_Be$).

\noindent
Definition~\ref{de_graph}: {\em (undirected) graph},
$G=(V_G,\Edir_G,t_G,h_G,\iota_G)$,
its {\em underlying directed graph}, 
$(V_G,\Edir_G,t_G,h_G)$, the {\em edge involution}
$\iota_G$; 
{\em self-loops}, 
{\em whole-loops}, 
{\em half-loops} ($e\in\Edir_G$ with $\iota_G e=e$, i.e.,
directed edges paired with themselves),
{\em edge set} $E_G$ (the $\iota_G$ orbits in $\Edir_G$),
{\em orientation} of an edge or of the graph
(a choice of orbit representative(s)).

\noindent
Definition~\ref{de_coordinatized_digraph}:
{\em coordinatized cover} for digraphs, 
$V_G = V_B \times [n]$,
$\Edir_G = \Edir_B \times [n]$,
$t_G(e,i)=(t_B e,i)$,
$h_G(e,i) = \bigl(h_B e, \sigma(e) i\bigr)$,
where $\sigma$ is the associated permutation map
$\sigma\from\Edir_B\to\cS_n$.

\noindent
Definition~\ref{de_coordinatized_graph}: 
{\em coordinatized covers} for graphs,
i.e., $\iota_G$ must satisfy $\iota_G(e,i)=(\iota_B e, \sigma(e)i)$
(this implies $\sigma$ must satisfy $\sigma(\iota_B e)=\sigma(e)^{-1}$).

\noindent
Definition~\ref{de_model_conventions}: 
{\em models} (of covering maps to a fixed based graph), {\em edge-independence}
for a model, meaning that the $\sigma(e)\in\cS_n$ are independent
in $e$ ranging over any orientation of $B$.

\noindent
Definition~\ref{de_models}: our {\em basic models}: the
{\em permutation model},
the {\em permutation-involution model} (of even or odd degree), 
the {\em full cycle model} (or {\em cyclic model}), and the
{\em full cycle-involution model} 
(or {\em cyclic-involution}) (even or odd degree).

\noindent
Definition~\ref{de_in_out_degree}: 
{\em indegree and outdegree} (of a vertex)
and {\em adjacency 
matrix} for a digraph;
{\em degree} and {\em adjacency matrix}
for graphs;
{\em strongly regular} digraphs and {\em regular} graphs.

\noindent
Definition~\ref{de_bouquet}: {\em bouquets} (graphs on one vertex).

\noindent
Definition~\ref{de_digraph_morphisms}: 
{\em morphisms} of digraphs;
{\em vertex/edge fibres}; 
{\em covering} and {\em \'etale} morphisms; {\em degree} of a covering 
morphism.

\noindent
Definition~\ref{de_graph_morphisms}: 
{\em morphisms} of graphs;
{\em covering} and {\em \'etale} morphisms.

\noindent
Definition~\ref{de_graph_adjacency}:
notation: $\lambda_i(B)$ of a graph, $B$ (the eigenvalues
of $A_B$ in decreasing order).

\noindent
Definition~\ref{de_new_spec},\ \ref{de_new_spec_graphs}:
{\em mew/old functions}, {\em new/old spectrum}, 
{\em new spectral radius},
for digraphs and graphs.

\medskip

\noindent
Definition~\ref{de_nonAlon}: 
notation: ${\rm NonAlon}_B(G;\epsilon)$, the number of
{\em $\epsilon$-non-Alon eigenvalues}.

\noindent
Definition~\ref{de_Ramanujan}:
{\em Ramanujan} regular graphs.

\medskip

The definitions in Section~\ref{se_term2_walks_traces} allow us to
state the two main theorems in this series of articles.
These theorems involve two invariants,
$\tau_{\rm tang}$ and $\tau_{\rm alg}$, of a model of random
covering of a fixed base graph.

\medskip

\noindent
Definition~\ref{de_walks}:
{\em walk} in a digraph or graph, {\em closed} walk, 
{\em length} of a walk.

\noindent
Definition~\ref{de_nonback_oriented}:
{\em non-backtracking} walk, {\em strictly non-backtracking closed (SNBC)}
walk in a graph;
{\em oriented line graph}, {\em Hashimoto}
(or {\em non-backtracking}) matrix of a graph.

\noindent
Definition~\ref{de_SNBC}:
notation $\SNBC(G,k),\snbc(G,k)$ (the set and number of
strictly non-backtracking walks of length $k$ in $G$).

\noindent
Definition~\ref{de_visited_subgraph}:
the 
{\em visited subgraph}, $\ViSu_G(w)$, of a walk, $w$, in a 
digraph or graph, $G$.
Note that to determine $\ViSu_G(w)$, it is not generally enough
to know the sequence $w$, except under some assumption(s)
on $G$;
it is enough if we restrict ourselves to $G$ that 
are coordinatized covers.

\noindent
Definition~\ref{de_order}:
the {\em order} $\ord(G)=\#E_G - \#V_G$ of a graph, $G$; 
$\SNBC_r(G,k)$,
$\SNBC_{<r}(G,k)$,
$\SNBC_{\ge r}(G,k)$, and their cardinalities,
$\snbc_r(G,k)$,
$\snbc_{<r}(G,k)$,
$\snbc_{\ge r}(G,k)$.

\noindent
Definition~\ref{de_polyexponential_growth}:
(univariate) {\em polyexponential} functions, their {\em bases},
functions {\em of growth} $\rho$
for a $\rho>0$.

\noindent
Definition~\ref{de_Ramanujan_function},\ \ref{de_Ramanujan_expansion}:
{\em $(B,\nu)$-bounded} function, 
{\em $(B,\nu)$-Ramanujan} function, 
{\em $(B,\nu)$-bounded asymptotic expansions}, 
{\em $(B,\nu)$-Ramanujan asymptotic expansion}.

\noindent
Definition~\ref{de_tangle}:
{\em $\ge\nu$-tangle}, {\em $(\ge\nu,<r)$-tangle},
${\rm TangleFree}(\ge\nu,<r)$,
${\rm HasTangles}(\ge\nu,<r)$
(here the weak inequality $\ge\nu$ is crucial).

% Certified tangles are only used in Article III, so we take them out
%   of Article I
% \noindent
% Definition~\ref{de_certified_traces}:
% Certified traces:
% ${\rm cert}_{\le\nu,<r}(G,k)$,
% ${\rm cert}_{<\nu,<r}(G,k)$.
% [One can work with either $\le\nu$ or $<\nu$ certified traces;
% we work with ${\rm cert}_{<\nu,<r}(G,k)$ for unimportant
% reasons.]

%% end Section term2_...

\medskip

\noindent
Definition~\ref{de_occurs}:
$S$ {\em occurs} in a model $\{\cC_n(B)\}_{n\in N}$.

\noindent
Definition~\ref{de_tau_tang}:
The {\em tangle power}, $\tau_{\rm tang}$, of a model.

\noindent
Definition~\ref{de_algebraic_power}:
The {\em algebraic power}, $\tau_{\rm alg}$, of a model.

\medskip

The definitions in Section~\ref{se_ordered_B_strong_alg} culminates
in our definition of a {\em strongly algebraic model} (of a family of 
random, degree $n$ covering maps, $\cC_n(B)$, to a fixed graph $B$);
if $B$ has no half-loops, then the {\em permutation model} is an example
of a strongly algebraic model.

\medskip

\noindent
Definition~\ref{de_B_graph},\ \ref{de_B_graph_morphisms}:
{\em $B$-graphs}, denoted $G_\Bg$, and {\em morphisms} of $B$-graphs.

\noindent
Definition~\ref{de_ordered_graph},\ \ref{de_first_encountered}:
{\em ordered} graphs, denoted $G^\og$, the {\em first-encountered ordering},
$\ViSu^\og(w)$ that $w$ endows upon its visited subgraph $\ViSu(w)$.

\noindent
Definition~\ref{de_fibre_counting}:
the {\em fibre counting} functions $\mec a\from E_B\to\integers_{\ge 0}$
(at times $\Edir_B\to\integers_{\ge 0}$),
and $\mec b\from V_B\to\integers_{\ge 0}$,
defined for a walk in, or a subgraph of, a coordinatized graph over $B$.

\noindent
Definition~\ref{de_ordered_B_graph}:
{\em ordered $B$-graph}, denoted $G_\Bg^\og$, and their {\em morphisms}.

\noindent
Definition~\ref{de_subgraph_counting}: the class
$[S_\Bg^\og]$, the set $[S_\Bg^\og]\cap G_\Bg$ for a $B$-graph, $G_\Bg$.
We similarly define
$[S_\Bg]$ and $[S_\Bg]\cap G_\Bg$ after this definition.

\noindent
Definition~\ref{de_pruned}:
a graph that is {\em pruned} (all vertices have degree at least two).

\noindent
Definition~\ref{de_strongly_algebraic}: 
a $B$-graph {\em occurs} in a model,
{\em strongly algebraic} models.

%%%%%%%%%%%%%%%%%%%%%%%%%%%%%%%%%%%%%%%%%%%%%%%%%%%%%%%%%% 
% End of se_ordered_B_strong_alg.tex
%%%%%%%%%%%%%%%%%%%%%%%%%%%%%%%%%%%%%%%%%%%%%%%%%%%%%%%%%% 

%%%% se_wordings_alg.tex

\medskip

% The definitions in this section are only needed:
% (1) to understand the definition of {\em algebraic model}, and
% (2) for Section~\ref{se_art_expansion},
% to state the results in Article~II (this terminology is not
% essential to Articles~IV and~V).

The definitions in Section~\ref{se_new_homot} are needed to state
the results of Article~II (this terminology is not needed after
Article~III).

\medskip

\noindent
Definition~\ref{de_bead}:
{\em beads} in a graph,
(vertices of degree two not incident upon any half-loops),
{\em beaded paths} in a graph.

\noindent
Definition~\ref{de_proper_bead_set}:
a {\em proper bead set} (a set of beads that does not contain
all the vertices of any connected component of a graph, which
would have to be a cycle).

\noindent
Definition~\ref{de_suppression}:
the {\em reverse} walk in a graph, 
{\em bead suppression} and the resulting
{\em lengths} of its edges.

\noindent
Definition~\ref{de_homotopy_walk}:
{\em reduction}, {\em edge-lengths}, and
{\em homotopy type} of an non-backtracking walk, $w$.
The {\em reduction} is the graph obtained from $\ViSu^\og(w)$
by suppressing its beads and its first and last vertices; {\em homotopy type}
is the isomorphism class (as ordered graphs) of the {\em reduction}.
It is crucial that the {\em reduction}
and {\em homotopy type} are ordered graphs.
We need to suppress the first and last vertices in order 
to reconstruct the first encountered ordering
that $w$ induces on $\ViSu^\og(w)$ from the homotopy type of $w$.

\noindent
Definition~\ref{de_homotopy_SNBC_visited}:
{\em reduction}, {\em edge-lengths}, and
{\em homotopy type} of the (ordered) visited subgraph, $S^\og$,
of an SNBC walk.

\noindent
Definition~\ref{de_snbc_walks_of_homotopy_type}:
notation:
$\SNBC(T^\og; G,k)$ for
the set of walks in a graph of a given edge-lengths, and homotopy type;
also with, in addition, specified edge lengths,
$\SNBC(T^\og,\mec k; G,k)$,
or edge length lower bounds,
$\SNBC(T^\og,\ge\mec k; G,k)$; their cardinalities
$\snbc(T^\og; G,k)$,
$\snbc(T^\og,\mec k; G,k)$,
$\snbc(T^\og,\ge\mec k; G,k)$.
The function $\snbc(T^\og,\ge\mec k; G,k)$ is fundamental to our
treatment of certified traces, in Article~III.

\noindent
Definition~\ref{de_VLG}:
{\em variable-length graph} or {\em VLG}.

\medskip

The definitions in Section~\ref{se_new_algebraic} are
needed to define the notion of an {\em algebraic} model.

\medskip

\noindent
Definition~\ref{de_NBWALKS}:
The notation
${\rm NBWALKS}(B)$ to denote the non-backtracking walks of positive
length in a graph, $B$,
viewed as words over the alphabet $\Edir_B$ (i.e., we omit the vertices,
which are redundant in a walk of positive length in a known graph $B$);
also ${\rm NBWALKS}(B,e,e')$.
It is essential that we omit the vertices in order to get the
correct {\em eigenvalues} (defined below)
of the relevant regular languages. 

\noindent
Definition~\ref{de_wording}:
a {\em $B$-wording} of a graph, $T$,
that describes the $B$-graph structure $S_\Bg$ of a graph, $S$
where $T$ is a suppression of $S$;
the {\em edge lengths} of a wording.

\noindent
Definition~\ref{de_induced_wording}:
the {\em $B$-wording induced by} a $B$-graph, $S_\Bg$, on
any suppression of $S$.

\noindent
Definition~\ref{de_realization}:
the {\em realization}, $\VLG_\Bg(T,W)$, of a $B$-wording, $W$,
on a graph, $T$.

\noindent
Definition~\ref{de_B_type}: $B$-type, 
$T^{\rm type}=(T,\cR)$ for a graph $T$ and map $\cR$ from
$\Edir_T$ to regular languages over the alphabet $\Edir_B$;
a wording that {\em belongs} to a $B$-type;

\noindent
Definition~\ref{de_algebraic}: 
{\em algebraic} model.

\noindent
Definition~\ref{de_eigens_regular_language}:
the {\em eigenvalues} of a regular language or a $B$-type;
a {\em set of eigenvalues} of a model.

\medskip

\noindent
Definition~\ref{de_matrix_model}:
a {\em $(\Lambda_0,\Lambda_1)$ matrix model}, needed
in the statement of the Sidestepping Theorem.

\noindent
Section~\ref{se_art_sidestep} has some additional notation, such as
$\EE{\rm in}_\cM[R]$ and 
$\EE{\rm out}_\cM[R]$ in \eqref{eq_Ein_Eout}.

%    Bibliography styles amsplain or harvard are also acceptable.
\providecommand{\bysame}{\leavevmode\hbox to3em{\hrulefill}\thinspace}
\providecommand{\MR}{\relax\ifhmode\unskip\space\fi MR }
% \MRhref is called by the amsart/book/proc definition of \MR.
\providecommand{\MRhref}[2]{%
  \href{http://www.ams.org/mathscinet-getitem?mr=#1}{#2}
}
\providecommand{\href}[2]{#2}

\end{document}